\newtheorem{theorem}{Theorem}
\newtheorem{corollary}{Corollary}
\newtheorem{lemma}{Lemma}
\newtheorem{assumption}{Assumption}
\numberwithin{equation}{section}
\theoremstyle{definition}
\newtheorem{remarkTmp}{Remark}
\newenvironment{remark}
	{\medskip  \begin{remarkTmp} 	}
	{ 

		\qed 
		\end{remarkTmp} 
	}
	\DeclareMathOperator*{\argmin}{arg\,min}
	\DeclareMathOperator*{\supp}{supp}
	\newcommand{\E}{\mathbb{E}}
	\newcommand{\En}{\mathbb{E}_n}
	\newcommand{\V}{\mathbb{V}}
	\renewcommand{\P}{\mathbb{P}}
	\newcommand{\R}{\mathbb{R}}
	\newcommand{\One}{\mathbbm{1}}
	\newcommand{\vertiii}[1]{{\left\vert\kern-0.25ex\left\vert\kern-0.25ex\left\vert #1  \right\vert\kern-0.25ex\right\vert\kern-0.25ex\right\vert}}
    \def\ddefloop#1{\ifx\ddefloop#1\else\ddef{#1}\expandafter\ddefloop\fi}
    \def\ddef#1{\expandafter\def\csname c#1\endcsname{\ensuremath{\mathcal{#1}}}}
    \def\ddef#1{\expandafter\def\csname s#1\endcsname{\ensuremath{\mathsf{#1}}}}
    \def\argmin{\operatornamewithlimits{arg\,min}}
    \newcommand{\deq}{:=}
    \newcommand{\bz}{\bm{z}}
    \newcommand{\bZ}{\bm{Z}}
    \newcommand{\bw}{\bm{w}}
    \newcommand{\bx}{\bm{x}}
    \newcommand{\bX}{\bm{X}}
    \newcommand{\bv}{\bm{v}}
	\DeclareMathOperator*{\supess}{ess\,sup}
\newcommand{\gt}{\tilde{\gamma}}
\begin{document}

\title{
	\vspace{-0.5in}
	Deep Neural Networks for Estimation and Inference\thanks{
		We thank Milica Popovic for outstanding research assistance. Liang gratefully acknowledges support from the George C. Tiao Fellowship. Misra gratefully acknowledges support from the Neubauer Family Foundation. We are thank Guido Imbens, the handling co-editor, and two anonymous reviewers, as well as Alex Belloni, Xiaohong Chen, Denis Chetverikov, Chris Hansen, Whitney Newey, and Andres Santos, for thoughtful comments, suggestions, and discussions that substantially improved the paper.
	}
}
\author{
	Max H. Farrell \and
	Tengyuan Liang \and
	Sanjog Misra \and
	\\
	University of Chicago, Booth School of Business
}
\date{\today}
\maketitle

\begin{abstract} 
We study deep neural networks and their use in semiparametric inference. We establish novel rates of convergence for deep feedforward neural nets. Our new rates are sufficiently fast (in some cases minimax optimal) to allow us to establish valid second-step inference after first-step estimation with deep learning, a result also new to the literature. Our estimation rates and semiparametric inference results handle the current standard architecture: fully connected feedforward neural networks (multi-layer perceptrons), with the now-common rectified linear unit activation function and a depth explicitly diverging with the sample size. We discuss other architectures as well, including fixed-width, very deep networks. We establish nonasymptotic bounds for these deep nets for a general class of nonparametric regression-type loss functions, which includes as special cases least squares, logistic regression, and other generalized linear models. We then apply our theory to develop semiparametric inference, focusing on causal parameters for concreteness, such as treatment effects, expected welfare, and decomposition effects. Inference in many other semiparametric contexts can be readily obtained. We demonstrate the effectiveness of deep learning with a Monte Carlo analysis and an empirical application to direct mail marketing.
\end{abstract}

\bigskip

\textbf{Keywords}: Deep Learning, Neural Networks, Rectified Linear Unit, Nonasymptotic Bounds, Convergence Rates, Semiparametric Inference, Treatment Effects, Program Evaluation, Treatment Targeting.

\setcounter{page}{0}
\thispagestyle{empty}

%\onehalfspacing
\doublespacing

%%%%%%%%%%%%%%%%%%%%%%%%%%%%%%%%%%%%%%%%%%%%%%%%%%%%%%%%%%%%%%%%%%%%%%
%%%%%%%%%%%%%%%%%%%%%%%%%%%%%%%%%%%%%%%%%%%%%%%%%%%%%%%%%%%%%%%%%%%%%%
\section{Introduction}
	\label{sec:intro}

Statistical machine learning methods are being rapidly integrated into the social and medical sciences. Economics is no exception, and there has been a recent surge of research that applies and explores machine learning methods in the context of econometric modeling, particularly in ``big data'' settings. Furthermore, theoretical properties of these methods are the subject of intense recent study. This has netted several breakthroughs both theoretically, such as robust, valid inference following machine learning, and in novel applications and conclusions. Our goal in the present work is to study a particular statistical machine learning technique which is widely popular in industrial applications, but less frequently used in academic work and largely ignored in recent theoretical developments on inference: deep neural networks. To our knowledge we provide the first inference results using deep learning methods.

Neural networks are estimation methods that model the relationship between inputs and outputs using layers of connected computational units (neurons), patterned after the biological neural networks of brains. These computational units sit between the inputs and output and allow data-driven learning of the appropriate model, in addition to learning the parameters of that model. Put into terms more familiar in nonparametric econometrics: neural networks can be thought of as a (complex) type of sieve estimation where the basis functions are flexibly learned from the data. Neural networks are perhaps not as familiar to economists as other methods, and indeed, were out of favor in the machine learning community for several years, returning to prominence only very recently in the form of deep learning. Deep neural nets contain many hidden layers of neurons between the input and output layers, and have been found to exhibit superior performance across a variety of contexts. Our work aims to bring wider attention to these methods and to take the first step toward filling the gaps in the theoretical understanding of inference using deep neural networks. Our results can be used in many economic contexts, including selection models, games, consumer surplus, and dynamic discrete choice.

Before the recent surge in attention, neural networks had taken a back seat to other methods (such as kernel methods or forests) largely because of their modest empirical performance and challenging optimization. However, the availability of scalable computing and stochastic optimization techniques \citep{lecun1998gradient,kingma2014adam} and the change from smooth sigmoid-type activation functions to rectified linear units (ReLU), $x \mapsto \max(x,0)$ \citep{nair2010rectified}, have seemingly overcome optimization hurdles and now this form of deep learning matches or sets the state of the art in many prediction contexts \citep{krizhevsky2012imagenet, he2016identity}. Our theoretical results speak directly to this modern implementation of deep learning: we explicitly model the depth of the network as diverging with the sample size and focus on the ReLU activation function.

Further back in history, before falling out of favor, neural networks were widely studied and applied, particularly in the 1990s. In that time, \emph{shallow} neural networks with smooth activation functions were shown to have many good theoretical properties. Intuitively, neural networks are a form of sieve estimation, wherein basis functions of the original variables are used to approximate unknown nonparametric objects. What sets neural nets apart is that the basis functions are themselves learned from the data by optimizing over many flexible combinations of simple functions. It has been known for some time that such networks yield universal approximations \citep{hornik1989multilayer}. Comprehensive theoretical treatments are given by \citet{white1992artificial} and \citet{Anthony-Bartlett1999_book}. Of particular relevance in this strand of theoretical work is \citet{Chen-White1999_IEEE}, where it was shown that single-layer, sigmoid-based networks could attain sufficiently fast rates for semiparametric inference (see \citet{Chen2007_handbook} for more references).

We explicitly depart from the extant literature by focusing on the modern setting of deep neural networks with the rectified linear (ReLU) activation function. We provide nonasymptotic bounds for nonparametric estimation using deep neural networks, immediately implying convergence rates.  The bounds and convergence rates appear to be new to the literature and are one of the main theoretical contributions of the paper. We provide results for a general class of smooth loss functions for nonparametric regression style problems, covering as special cases generalized linear models and other empirically useful contexts. In our application to causal inference we specialize our results to linear and logistic regression as concrete illustrations. Our proof strategy employs a localization analysis that uses scale-insensitive measures of complexity, allowing us to consider richer classes of neural networks. This is in contrast to analyses which restrict the networks to have bounded parameters for each unit (discussed more below) and to the application of scale sensitive measures such as metric entropy \citep[used by][for example]{Chen-White1999_IEEE}. These approaches would not deliver our sharp bounds and fast rates. Recent developments in approximation theory and complexity for deep ReLU networks are important building blocks for our results.

Our second main result establishes valid inference on finite-dimensional parameters following first-step estimation using deep learning. We focus on causal inference for concreteness and wide applicability, as well as to allow direct comparison to the literature. Program evaluation with observational data is one of the most common and important inference problems, and has often been used as a test case for theoretical study of inference following machine learning \citep[e.g.,][]{Belloni-Chernozhukov-Hansen2014_REStud,Farrell2015_arXiv,Belloni-etal2017_Ecma,Athey-Imbens-Wager2018_JRSSB}. Causal inference as a whole is a vast literature; see \cite{Imbens-Rubin2015_book} for a broad review and \cite{Abadie-Cattaneo2018_ARE} for a recent review of program evaluation methods, and further references in both. Deep neural networks have been argued (experimentally) to outperform the previous state-of-the-art in causal inference \citep{westreich2010propensity,johansson2016learning,shalit2017estimating,hartford2017deep}. To the best of our knowledge, ours are among the first theoretical results that explicitly deliver inference using deep neural networks.

We give specific results for average treatment effects, counterfactual expected utility/profits from treatment targeting strategies, and decomposition effects. Our results allow planners (e.g., firms or medical providers) to compare different strategies, either predetermined or estimated using auxiliary data, and recognizing that targeting can be costly, decide which strategy to implement. An interesting, and potentially useful, point we make in this context is that the selection on observables framework yields identification of counterfactual average outcomes without additional structural assumptions, so that, e.g., expected profit from a counterfactual treatment rule can be evaluated.

The usefulness of our deep learning results is of course not limited to causal inference. In particular, our results yield inference on essentially any estimand that admits a locally robust estimator \citep{Chernozhukov-etal2018_WP} that depends only on target functions within our class of loss function (under appropriate regularity conditions). Our aim is not to innovate at the semiparametric step, for example by seeking weaker conditions on the first stage, but rather, we aim to utilize such results. Prior work has verified the high-level conditions for other first-stage estimators, such as traditional kernels or series/sieves, lasso methods, sigmoid-based shallow neural networks, and others (under suitable assumptions for each method). Our work contributes directly to this area of research by showing that deep nets are a valid and useful first-step estimator, in particular, attaining a rate of $o(n^{-1/4})$ under appropriate smoothness conditions. Finally, we do not rely on sample splitting or cross fitting. In particular, we use localization explicitly to directly verify conditions required for valid inference, which may be a novel application of this proof method that is useful in future semiparametric inference problems.

We numerically illustrate our results, and more generally the utility of deep learning, with a detailed simulation study and an empirical study of a direct mail marketing campaign. Our data come from a large US consumer products retailer and consists around to three hundred thousand consumers with one hundred fifty covariates. \cite{Hitsch-Misra2018_WP} recently used this data to study various estimators, both traditional and modern, of heterogeneous treatment effects. We refer the reader to that paper for a more complete description of the data as well as results using other estimators (see also \cite{Hansen-Kozbur-Misra2017_WP}). We study the effect of catalog mailings on consumer purchases, and moreover, compare different targeting strategies (i.e.\ to which consumers catalogs should be mailed). The cost of sending out a single catalog can be close to one dollar, and with millions being set out, carefully assessing the targeting strategy is crucial. Our results suggest that deep nets are at least as good as (and sometimes better than) the best methods in \cite{Hitsch-Misra2018_WP}.

The remainder of the paper proceeds as follows. Next, we briefly review the related theoretical literature. Section \ref{sec:deep nets} introduces deep ReLU networks and states our main theoretical results: nonasymptotic bounds and convergence rates for general nonparametric regression-type loss functions. The semiparametric inference problem is set up in Section \ref{sec:params} and asymptotic results are presented in Section \ref{sec:inference}. The empirical application is presented in Section \ref{sec:application}. Results of a simulation study are reported in Section \ref{sec:simuls}. Section \ref{sec:conclusion} concludes. All proofs are given in the appendix.

%%%%%%%%%%%%%%%%%%%%%%%%%%%%%%%%%%%%%%%%%%%%%%%%%%%%%%%%%%%%%%%%%%%%%%
\subsection{Related Theoretical Literature}
	\label{sec:literature}

Our paper contributes to several rapidly growing literatures, and we can not hope to do justice to each here. We give only those citations of particular relevance; more references can be found within these works. First, there has been much recent study of the statistical properties of the machine learning tools as an end in itself. Many studies have focused on the lasso and its variants \citep{Bickel-Ritov-Tsybakov2009_AoS,Belloni-Chernozhukov-Wang2011_Bmka,BCCH2012_Ecma,Farrell2015_arXiv} and tree/forest based methods \citep{Wager-Athey2018_JASA}, though earlier work studied shallow (typically with a single hidden layer) neural networks with smooth activation functions \citep{white1989learning, white1992artificial, Chen-White1999_IEEE}. We fill the gap in this literature by studying \emph{deep} neural networks with the non-smooth ReLU activation.

A second, intertwined strand of literature focuses on inference following the use of machine learning methods, often with a focus on average causal effects. Initial theoretical results were concerned with obtaining valid inference on a coefficient in a high-dimensional regression, following model selection or regularization, with particular focus on the lasso \citep{BCCH2012_Ecma,Javanmard-Montanari2014_JMLR,vandeGeer-etal2014_AoS}. Intuitively, this is a semiparametric problem, where the coefficient of interest is estimable at the parametric rate and the remaining coefficients are collectively a nonparametric nuisance parameter estimated using machine learning methods. Building on this intuition, many have studied the semiparametric stage directly, such as obtaining novel, weaker conditions easing the application of machine learning methods \citep[][and references therein]{Belloni-Chernozhukov-Hansen2014_REStud,Farrell2015_arXiv,Chernozhukov-etal2018_WP,Belloni-etal2018_handbook}. Conceptually related to this strand are targeted maximum likelihood \citep{vanderLaan-Rose2011_book} and the higher-order influence functions \citep{Robins-etal2008_IMS,Robins-etal2017_AoS}. Our work builds on this work, employing conditions therein, and in particular, verifying them for deep ReLU nets.

Finally, our convergence rates build on, and contribute to, the recent theoretical machine learning literature on deep neural networks. Because of the renaissance in deep learning, a considerable amount of study has been done in recent years. Of particular relevance to us are \cite{Yarotsky2017_NN, Yarotsky2018_WP} and \cite{Bartlett-etal2017_COLT}; a recent textbook treatment, containing numerous other references, is given by \citet {Goodfellow-Bengio-Courville2016_book}.

%%%%%%%%%%%%%%%%%%%%%%%%%%%%%%%%%%%%%%%%%%%%%%%%%%%%%%%%%%%%%%%%%%%%%%
%%%%%%%%%%%%%%%%%%%%%%%%%%%%%%%%%%%%%%%%%%%%%%%%%%%%%%%%%%%%%%%%%%%%%%
\section{Deep Neural Networks}
	\label{sec:deep nets}

In this section we will give our main theoretical results: nonasymptotic bounds and associated convergence rates for deep neural network estimation. The utility of these results for second-step semiparametric causal inference (the downstream task), for which our rates are sufficiently rapid, is demonstrated in Section \ref{sec:inference}. We view our results as an initial step in establishing both the estimation and inference theory for modern deep learning, i.e.\ neural networks built using the multi-layer perceptron architecture (described below) and the nonsmooth ReLU activation function. This combination is crucial: it has demonstrated state of the art performance empirically and can be feasibly optimized. This is in contrast with sigmoid-based networks, either shallow (for which theory exists, but may not match empirical performance) or deep (which are not feasible to optimize), and with shallow ReLU networks, which are not known to approximate broad classes functions.

As neural networks are perhaps less familiar to economists and other social scientists, we first briefly review the construction of deep ReLU nets. Our main focus will be on the fully connected feedfoward neural network, frequently referred to as a multi-layer perceptron, as this is the most commonly implemented network architecture and we want our results to inform empirical practice. However, our results are more general, accommodating other architectures provided they are able to yield a universal approximation (in the appropriate function class), and so we review neural nets more generally and give concrete examples.

Our goal is to estimate an unknown, assumed-smooth function $f_*(\bx)$, that relates the covariates $\bX \in \R^d$ to an outcome $Y$ as the minimizer of the expectation of the per-observation loss function. Collecting these random variables into the vector $\bZ = (Y, \bX')' \in \R^{d+1}$, with $\bz = (y, \bx')'$ denoting a realization, we write
\[
f_* = \argmin \E \left[ \ell\left(f, \bZ\right) \right]. 
\]
We allow for any loss function that is Lipschitz in $f$ and obeys a curvature condition around $f_*$. Specifically, for constants $c_1$, $c_2$, and $C_\ell$ that are bounded and bounded away from zero, we assume that $\ell(f,\bz)$ obeys
\begin{align}
	\begin{split}
		\label{eqn:loss}
		& |\ell(f, \bz) - \ell(g, \bz)| \leq C_\ell |f(\bx) - g(\bx)| ,  		\\
		& c_1 \E \left[ (f - f_*)^2 \right]\leq \E [\ell(f, \bZ)] - \E [\ell (f_*, \bZ)] \leq c_2 \E \left[(f - f_*)^2\right].
	\end{split}
\end{align}
Our results will be stated for a general loss obeying these two conditions.\footnote{We thank an anonymous referee for suggesting this approach.} We give a unified localization analysis of all such problems. This family of loss function covers many interesting problems. Two leading examples, used in our application to causal inference, are least squares and logistic regression, corresponding to the outcome and propensity score models respectively. For least squares, the target function and loss are 
\begin{equation}
	\label{eqn:ols}
	f_*(\bx) \deq \E[ Y | \bX = \bx]   		\qquad \text{ and } \qquad  		  \ell\left(f, \bz \right) = \frac{1}{2} (y - f(\bx))^2,
\end{equation}
respectively, while for logistic regression these are
\begin{equation}
	\label{eqn:logit}
	f_*(\bx) \deq\log \frac{\E[Y|\bX = \bx]}{1 - \E[Y|\bX = \bx]}   		\qquad \text{ and } \qquad  		  \ell\left(f, \bz \right) = -y f(\bx) + \log\left( 1 + e^{f(\bx)} \right).
\end{equation}
Lemma \ref{lem:ls and logit} verifies, with explicit constants, that \eqref{eqn:loss} holds for these two. Losses obeying \eqref{eqn:loss} extend beyond these cases to other generalized linear models, such as count models, and can even cover multinomial logistic regression (multiclass classification), as shown in Lemma \ref{lem:glms}.

%%%%%%%%%%%%%%%%%%%%%%%%%%%%%%%%%%%%%%%%%%%%%%%%%%%%%%%%%%%%%%%%%%%%%%
\subsection{Neural Network Constructions}
	\label{sec:construction}

For any loss, we estimate the target function using a deep ReLU network. We will give a brief outline of their construction here, paying closer attention to the details germane to our theory; complete introductions, and further references, are given by \citet{Anthony-Bartlett1999_book} and \citet{Goodfellow-Bengio-Courville2016_book}. 

The crucial choice is the specific network architecture, or class. In general we will call this $\cF_{\rm DNN}$. From a theoretical point of view, different classes have different complexity and different approximating power. We give results for several concrete examples below.  We will focus on \emph{feedforward neural networks}. An example of a feedforward network is shown in Figure \ref{fig:DNN}. The network consists of $d$ input units, corresponding to the covariates $\bX \in \R^d$, one output unit for the outcome $Y$. Between these are $U$ hidden units, or computational nodes or neurons. These are connected by a directed acyclic graph specifying the architecture. The key graphical feature of a feedforward network is that hidden units are grouped in a sequence of $L$ layers, the \emph{depth} of the network, where a node is in layer $l = 1, 2, \ldots, L$, if it has a predecessor in layer $l-1$ and no predecessor in any layer $l' \geq l$. The \emph{width} of the network at a given layer, denoted $H_l$, is the number of units in that layer. The network is completed with the choice of an \emph{activation function} $\sigma: \R \mapsto \R$ applied to the output of each node as described below. In this paper, we focus on the popular ReLU activation function $\sigma(x) = \max(x, 0)$, though our results can be extended (at notational cost) to cover piecewise linear activation functions (see also Remark \ref{rem:activation}).

\begin{figure}
%\begin{figure}[p]
	\centering
	\includegraphics[width=0.45\textwidth]{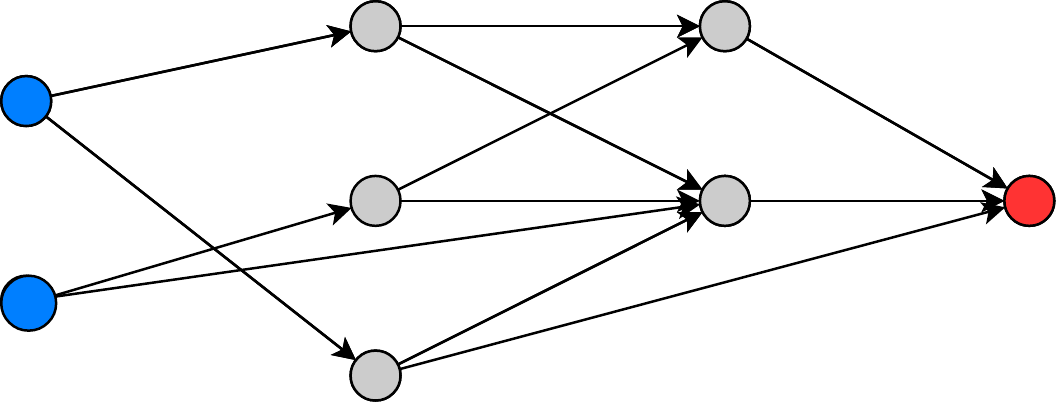}
	\caption{Illustration of a feedforward neural network with $W=18$, $L=2$, $U=5$, and input dimension $d=2$. The input units are shown in blue at left, the output in red at right, and the hidden units in grey between them.}
	\label{fig:DNN}
\end{figure}

An important and widely used subclass is the one that is \textit{fully connected} between consecutive layers but has \emph{no} other connections and each layer has number of hidden units that are of the same order of magnitude. This architecture is often referred to as a \textit{Multi-Layer Perceptron} (MLP) and we denote this class as $\cF_{\rm MLP}$. See Figure \ref{fig:DNN-MLP}, cf.\ Figure \ref{fig:DNN}. We will assume that all the width of all layers share a common asymptotic order $H$, implying that for this class $U \asymp L H$.

%\begin{figure}[!ht] 
\begin{figure}
  \centering
	\includegraphics[width=0.45\textwidth]{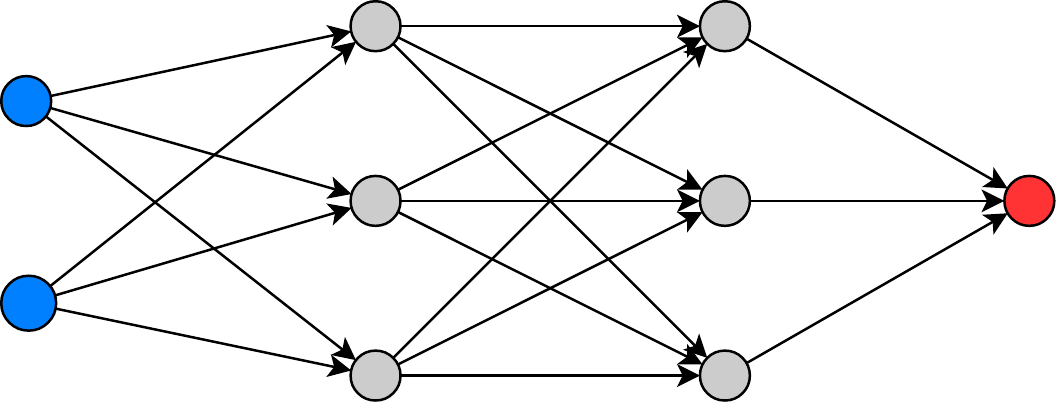}
	\caption{Illustration of multi-layer perceptron $\cF_{\rm MLP}$ with $H=3$, $L=2$ ($U = 6$, $W=25$), and input dimension $d=2$.}
	\label{fig:DNN-MLP}
\end{figure}

We will allow for generic feedforward networks in our results, but we present special results for the MLP case, as it is widely used in empirical practice. As we will see below, the architecture, through its complexity, and more importantly, approximation power, plays a crucial role in the final convergence rate. In particular, we find only a suboptimal rate for the MLP case, but our upper bound is still sufficient for semiparametric inference. As a note on exposition, while our main results are in fact nonasymptotic bounds that hold with high probability, for simplicity we will refer to them as ``rates'' in most discussion.

To build intuition on the computation, and compare to other nonparametric methods, let us focus on least squares for the moment, i.e.\ Equation \eqref{eqn:ols}, with a continuous outcome using a multilayer perceptron with constant width $H$. Each hidden unit $u$ receives an input in the form of a linear combination $\tilde{\bx}'\bw + b$, and then returns $\sigma(\tilde{\bx}'\bw + b)$, where the vector $\tilde{\bx}$ collects the output of all the units with a directed edge into $u$ (i.e., from prior layers), $\bw$ is a vector of weights, and $b$ is a constant term. (The constant term is often referred to as the ``bias'' in the deep learning literature, but given the loaded meaning of this term in inference, we will largely avoid referring to $b$ as a bias.) The final layer's output is simply $\tilde{\bx}'\bw + b$ in the least squares case. The collection, over all nodes, of $\bw$ and $b$, constitutes the parameters $\theta$ which are optimized in the final estimation. We denote $W$ as the total number of parameters of the network. For the MLP, $W = (d+1) H + (L-1) (H^2 + H) + H + 1$. In general, $W$, $U$, $L$, and $H$, may change with $n$, but we suppress this in the notation.

Optimization proceeds layer-by-layer using (variants of) stochastic gradient descent, with gradients of the parameters calculated by back-propagation (implementing the chain rule) induced by the network structure. To see this, let $\tilde{x}_{h,l}$ denote the scalar output of a node $u = (h,l)$, for $h = 1, \ldots H$, $l = 1, \ldots L$, and let $\tilde{\bx}_{l} = (\tilde{x}_{1,l}, \ldots, \tilde{x}_{H,l})'$ for layer $l \leq L$. Each node thus computes $\tilde{x}_{h,l} = \sigma(\tilde{\bx}_{l-1}'\bw_{h,l-1} + b_{h,l-1})$ and the final output is $\hat{y} = \tilde{\bx}_{L}'\bw_{L} + b_{L}$. Once we recall that the network begins with the original observation $\bx$, we can view $\tilde{\bx}_{L} = \tilde{\bx}_{L}(\bx)$, and thus the final output may be seen as a basis function approximation (albeit a complex and random one) written as $\hat{f}_{\rm MLP}(\bx) = \tilde{\bx}_{L}(\bx)'\bw_{L} + b_{L}$, which is reminiscent of a traditional series (linear sieve) estimator. If all layers save the last were fixed, we could simply optimize using least squares directly: $(\bw_{L}, b_{L}) = \argmin_{\bw, b} \| y_i - \tilde{\bx}_{L}'\bw - b\|_n^2$.

The crucial distinction is that the basis functions $\tilde{\bx}_{L}(\cdot)$ are learned from the data. The ``basis'' is $\tilde{\bx}_{L}  = (\tilde{x}_{1,L}, \ldots, \tilde{x}_{H,L})'$, where each $\tilde{x}_{h,L} = \sigma(\tilde{\bx}_{L-1}'\bw_{h,L-1} + b_{h,L-1})$. Therefore, ``before'' we can solve the least squares problem above, we would have to estimate $(\bw_{h,L-1}', b_{h,L-1}), h = 1, \ldots, H$, anticipating the final estimation. These in turn depend on the prior layer, and so forth back to the original inputs $\bX$. Measuring the gradient of the loss with respect to each layer of parameters uses the chain rule recursively, and is implemented by back-propagation. This is simply a sketch of course; for further introduction, see \citet{Hastie-Tibshirani-Friedman2009_book} and \cite{Goodfellow-Bengio-Courville2016_book}.

To further clarify the use of deep nets, it is useful to make explicit analogies to more classical nonparametric techniques, leveraging the form $\hat{f}_{\rm MLP}(\bx) = \tilde{\bx}_{L}(\bx)'\bw_{L} + b_{L}$. For a traditional series estimator, say smoothing splines, the two choices for the practitioner are the spline basis (the shape and the degree) and the number of terms (knots), commonly referred to as the smoothing and tuning parameters, respectively. In kernel regression, these would respectively be the shape of the kernel (and degree of local polynomial) and the bandwidth(s). For neural networks, the same phenomena are present: the architecture as a \emph{whole} (the graph structure and activation function) are the smoothing parameters while the width and depth play the role of tuning parameters for a set architecture.

The architecture plays a crucial role in that it determines the approximation power of the network, and it is worth noting that because of the relative complexity of neural networks, such approximations, and comparisons across architectures, are not simple. It is comparatively obvious that quartic splines are more flexible than cubic splines (for the same number of knots) as is a higher degree local polynomial (for the same bandwidth). At a glance, it may not be clear what function class a given network architecture (width, depth, graph structure, and activation function) can approximate. As we will show below, the MLP architecture is not yet known to yield an optimal approximation (for a given width and depth) and therefore we are only able to prove a bound with slower than optimal rate. As a final note, computational considerations are important for deep nets in a way that is not true conventionally; see Remarks \ref{rem:regularize}, \ref{rem:computation}, and \ref{rem:activation}.

Just as for classical nonparametrics, for a fixed architecture, it is the tuning parameter choices that determine the rate of convergence (for a fixed smoothness of the underlying function). The recent wave of theoretical study of deep learning is still in its infancy. As such, there is no understanding yet of optimal architecture(s) or tuning parameters. Choices of both are quite difficult, and only preliminary research has been done \citep[see e.g.,][and references therein]{daniely2017depth, telgarsky2016benefits, safran2016depth, Mhaskar-Poggio2016_WP,Raghu-etal2017_ICML}. Further exploration of these ideas is beyond the current scope. It is interesting to note that in some cases, a good approximation can be obtained even with a fixed width $H$, provided the network is deep enough, a very particular way of enriching the ``sieve space'' $\cF_{\rm DNN}$; see Corollary \ref{thm:fixed width}.

In sum, for a user-chosen architecture $\cF_{\rm DNN}$, encompassing the choices $\sigma(\cdot)$, $U$, $L$, $W$, and the graph structure, the final estimate is computed using observed samples $\bz_i = (y_i, \bx_i')'$, $i=1,2, \ldots, n$, of $\bZ$, by solving
\begin{align}
	\label{eqn:erm}
	\widehat{f}_{\rm DNN} \deq \argmin_{\substack{f_\theta \in \cF_{\rm DNN} \\ \| f_\theta \|_\infty \leq 2M}}  \sum_{i=1}^n \ell\left(f, \bz_i\right).
\end{align}
Recall that $\theta$ collects, over all nodes, the weights and constants $\bw$ and $b$. When \eqref{eqn:erm} is restricted to the MLP class we denote the resulting estimator $\widehat{f}_{\rm MLP}$. The choice of $M$ may be arbitrarily large, and is part of the definition of the class $\cF_{\rm DNN}$. This is neither a tuning parameter nor regularization in the usual sense: it is not assumed to vary with $n$, and beyond being finite and bounding $\|f_*\|_\infty$ (see Assumption \ref{asmpt:dgp}), no properties of $M$ are required. This is simply a formalization of the requirement that the optimizer is not allowed to diverge on the function level in the $l_\infty$ sense-- the weakest form of constraint. It is important to note that while typically regularization will alter the approximation power of the class, that is not the case with the choice of $M$ as we will assume that the true function $f_*(\bx)$ is bounded, as is standard in nonparametric analysis. With some extra notational burden, one can make the dependence of the bound on $M$ explicit, though we omit this for clarity as it is not related to statistical issues.

%A convenient expression for functions $f_\theta \in \cF_{\rm MLP}(H, L)$ is
%\begin{align}
%	\label{eqn:MLP-expression}
%	& f_{\theta}(\bx) = \sigma( \sigma(\cdots \sigma(\sigma(\bx' \bw_0 + b_0') \bw_1 + b_1') w_2 + b_2' \ldots ) w_{L-1} + b_{L-1}' ) w_{L} + b_{L} \\
%	& \quad \text{where}~~ \theta = \{ w_0 \in \R^{d \times H}, b_{0} \in \R^{H \times 1}, w_\ell \in \R^{H \times H}, b_{\ell} \in \R^{H \times 1}, w_{L} \in \R^{H \times 1}, b_{L} \in \R \mid 1\leq l < L\} \enspace. \nonumber
%\end{align}

\begin{remark}
	\label{rem:regularize}
	In applications it is common to apply some form of regularization to the optimization of \eqref{eqn:erm}. However, in theory, the role of explicit regularization is unclear and may be unnecessary, as stochastic gradient descent presents good, if not better, solutions empirically \citep[see Section \ref{sec:simuls} and][]{zhang2016understanding}. Regularization may improve empirical performance in low signal-to-noise ratio problems. A detailed investigation is beyond the scope of the current work, though we do investigate this numerically in Sections \ref{sec:application} and \ref{sec:simuls}. There are many alternative regularization methods, including $L_1$ and $L_2$ (weight decay) penalties, drop out, and others. % See \cite{Goodfellow-Bengio-Courville2016_book} for more discussion.
\end{remark}

%%%%%%%%%%%%%%%%%%%%%%%%%%%%%%%%%%%%%%%%%%%%%%%%%%%%%%%%%%%%%%%%%%%%%%
\subsection{Bounds and Convergence Rates for Multi-Layer Perceptrons}
	\label{sec:mlp rates}

We can now state our main theoretical results: bounds and convergence rates for deep ReLU networks. All proofs appear in the Appendix. We study neural networks from a nonparametric point of view \citep[e.g.,][in specific scenarios]{white1989learning, white1992artificial,schmidt2017nonparametric,Liang2018_GAN,bauer2017deep}. \cite{Chen-Shen1998_Ecma} and \cite{Chen-White1999_IEEE} share our goal, fast convergence rates for use in semiparametric inference, but focus on shallow, sigmoid-based networks compared to our deep, ReLU-based networks, though they consider dependent data which we do not. Our theoretical approach is quite different. In particular, \cite{Chen-White1999_IEEE} obtain sufficiently fast rates by following the approach of \citet{Barron1993_IEEE} in using Maurey's method \citep{pisier1981remarques} for approximation, but applying the refinement of \cite{makovoz1996random}. Our analysis of deep nets instead employs localization methods \citep{koltchinskii2000rademacher, bartlett2005local, koltchinskii2006local, Koltchinskii2011_book, liang2015learning}, along with the recent approximation work of \citet{Yarotsky2017_NN, Yarotsky2018_WP} and complexity results of \citet{Bartlett-etal2017_COLT}.

The regularity conditions we require are collected in the following. 
\begin{assumption}
	\label{asmpt:dgp}
	Assume that $\bz_i = (y_i, \bx_i')', 1\leq i \leq n$ are i.i.d. copies of $\bZ = (Y, \bX) \in \mathcal{Y} \times [-1,1]^d$, where $X$ is continuously distributed. For an absolute constant $M>0$, assume $\| f_* \|_\infty \leq M$ and $\mathcal{Y} \subset [-M,M]$.
\end{assumption}

This assumption is fairly standard in nonparametrics. The only restriction worth mentioning is that the outcome is bounded. In many cases this holds by default (such as logistic regression, where $\mathcal{Y} = \{0,1\}$) or count models (where $\mathcal{Y} = \{0,1,\ldots,M\}$, with $M$ limited by real-world constraints). For continuous outcomes, such as least squares regression, our restriction is not substantially more limiting than the usual assumption of a model such as $Y = f_*(\bX) + \varepsilon$, where $\bX$ is compact-supported, $f_*$ is bounded, and the stochastic error $\varepsilon$ possesses many moments. Indeed, in many applications such a structure is only coherent with bounded outcomes, such as the common practice of including lagged outcomes as predictors. Next, the assumption of continuously distributed covariates is quite standard. From a theoretical point of view, covariates taking on only a few values can be conditioned on and then averaged over, and these will, as usual, not enter into the dimensionality which curses the rates. Discrete covariates taking on many values may be more realistically thought of as continuous, and it may be more accurate to allow these to slow the convergence rates. Our focus on $L_2(\bX)$ convergence allows for these essentially automatically. Finally, from a practical point of view, deep networks handle discrete covariates seamlessly and have demonstrated excellent empirical performance, which is in contrast to other more classical nonparametric techniques that may require manual adaptation.

Proceeding now to our results, we begin with the most important network architecture, the multi-layer perceptron. This is the most widely used network architecture in practice and an important contribution of our work is to cover this directly, along with ReLU activation. MLPs are now known to approximate smooth functions well, leading to our next assumption: that the target function $f_*$ lies in a Sobolev ball with certain smoothness. Discussion of Sobolev spaces, and comparisons to H\"older and Besov spaces, can be found in \cite{Gine-Nickl2016_book}.  

\begin{assumption}
	\label{asmpt:sobolev}
	Assume $f_*$ lies in the Sobolev ball $\mathcal{W}^{\beta, \infty}([-1, 1]^d)$, with smoothness $\beta \in \mathbb{N}_+$, 
	\begin{equation*}
		f_*(x) \in \mathcal{W}^{\beta, \infty}([-1, 1]^d) := \left\{ f:  \max_{{\bf\alpha}, |{\bf\alpha}| \leq \beta} \supess_{x \in [-1, 1]^d}  |D^{\alpha} f(x)| \leq 1  \right\},
	\end{equation*}
	where ${\bf\alpha} = (\alpha_1, \ldots, \alpha_d)$, $|{\bf\alpha}| = \alpha_1 + \ldots + \alpha_d$ and $D^{{\bf\alpha}} f$ is the weak derivative.
\end{assumption}

Under Assumptions \ref{asmpt:dgp} and \ref{asmpt:sobolev} we obtain the following result, which, to the best of our knowledge, is new to the literature. In some sense, this is our main result for deep learning, as it deals with the most common architecture. We apply this in Sections \ref{sec:inference} and \ref{sec:application} for semiparametric inference. 
\begin{theorem}[Multi-Layer Perceptron]
	\label{thm:mlp rate}
	Suppose Assumptions \ref{asmpt:dgp} and \ref{asmpt:sobolev} hold. Let $\widehat{f}_{\rm MLP}$ be the deep MLP-ReLU network estimator defined by \eqref{eqn:erm}, restricted to $\cF_{\rm MLP}$, for a loss function obeying \eqref{eqn:loss}, with width $H \asymp n^{\frac{d}{2(\beta + d)}} \log^2 n$ and depth $L \asymp \log n$. Then with probability at least $1 - \exp(-n^{\frac{d}{\beta+d}} \log^8 n)$, for $n$ large enough,
	\begin{enumerate}
		\item $\displaystyle \| \widehat{f}_{\rm MLP} - f_* \|_{L_2(\bx)}^2 \leq C \cdot \left\{ n^{-\frac{\beta}{\beta+d}} \log^8 n + \frac{\log \log n}{n} \right\}$ \ and 
		\item $\displaystyle \E_n \left[ (\widehat{f}_{\rm MLP} - f_* )^2 \right] \leq  C \cdot \left\{ n^{-\frac{\beta}{\beta+d}} \log^8 n + \frac{\log \log n}{n} \right\} $,
	\end{enumerate}
	for a constant $C>0$ independent of $n$, which may depend on $d$, $M$, and other fixed constants.
\end{theorem}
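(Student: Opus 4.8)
The plan is to split the error $\|\widehat{f}_{\rm MLP} - f_*\|_{L_2(\bx)}^2$ into an approximation (bias) part and a stochastic (variance) part, controlling the latter by a localization argument driven by a scale-insensitive complexity bound for the network class. Write $\mathcal{E}(f) \deq \E[\ell(f,\bZ)] - \E[\ell(f_*,\bZ)]$ for the excess risk. The two-sided curvature condition in \eqref{eqn:loss} gives $c_1 \|f - f_*\|_{L_2(\bx)}^2 \le \mathcal{E}(f) \le c_2 \|f - f_*\|_{L_2(\bx)}^2$ for every $f$, so it is enough to bound $\mathcal{E}(\widehat{f}_{\rm MLP})$ from above and then divide by $c_1$ to obtain part (a). Since $\widehat{f}_{\rm MLP}$ minimizes the empirical loss over the feasible set, for any feasible $\tilde f$ we have $\E_n[\ell(\widehat{f}_{\rm MLP},\bz_i)] \le \E_n[\ell(\tilde f,\bz_i)]$, which after centering yields the basic inequality
\[
\mathcal{E}(\widehat{f}_{\rm MLP}) \le (\E - \E_n)\big[\ell(\widehat{f}_{\rm MLP}) - \ell(f_*)\big] + \E_n\big[\ell(\tilde f) - \ell(f_*)\big].
\]

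For the approximation term, I would take $\tilde f \in \cF_{\rm MLP}$ to be the best network approximant to $f_*$. Applying the ReLU approximation theory of \citet{Yarotsky2017_NN} to the fully connected, balanced-width architecture, the choices $H \asymp n^{d/(2(\beta+d))}\log^2 n$ and $L \asymp \log n$ yield $\|\tilde f - f_*\|_\infty^2 \lesssim n^{-\beta/(\beta+d)}\log^8 n$; since $\|f_*\|_\infty \le M$ this also makes $\tilde f$ feasible ($\|\tilde f\|_\infty \le 2M$) for $n$ large. Here the width plays the role of the number of sieve terms, so the approximation error behaves like $H^{-\beta/d}$ rather than the sparser Yarotsky rate, and this is essentially where the suboptimality of the MLP relative to $n^{-2\beta/(2\beta+d)}$ originates. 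The second term in the basic inequality is then controlled directly: by the Lipschitz bound in \eqref{eqn:loss}, $\E_n[\ell(\tilde f) - \ell(f_*)] \le \mathcal{E}(\tilde f) + $ a centered average of variables bounded by $C_\ell \|\tilde f - f_*\|_\infty$ with variance $O(\|\tilde f - f_*\|_\infty^2)$, so a Bernstein bound contributes the approximation rate plus the lower-order $\tfrac{\log\log n}{n}$ term.

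The empirical-process term $(\E - \E_n)[\ell(\widehat{f}_{\rm MLP}) - \ell(f_*)]$ is the heart of the argument. I would bound, as a function of the radius $r$, the local supremum over $\{f \in \cF_{\rm MLP} : \|f - f_*\|_{L_2(\bx)} \le r\}$. By Ledoux--Talagrand contraction and the Lipschitz condition, this reduces to the local Rademacher complexity of $\{f - f_*: f \in \cF_{\rm MLP}\}$, which is governed by the pseudo-dimension of the class. The complexity bound of \citet{Bartlett-etal2017_COLT} gives $\mathrm{Pdim}(\cF_{\rm MLP}) \lesssim W L \log W$ with $W \asymp L H^2 \asymp n^{d/(\beta+d)}\log^5 n$, so $\mathrm{Pdim} \asymp n^{d/(\beta+d)}\log^7 n$. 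A Dudley/chaining estimate then gives local complexity of order $r\sqrt{\mathrm{Pdim}\,\log n / n}$ up to lower-order terms, and equating this to the quadratic lower bound $c_1 r^2$ supplied by the curvature condition produces the critical radius $r_*^2 \asymp \mathrm{Pdim}\,\log n / n \asymp n^{-\beta/(\beta+d)}\log^8 n$, using $1 - d/(\beta+d) = \beta/(\beta+d)$. Talagrand's concentration inequality, applied at this scale, upgrades the bound to hold with probability at least $1 - \exp(-c\,n r_*^2) = 1 - \exp(-n^{d/(\beta+d)}\log^8 n)$.

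Combining the pieces bounds $\mathcal{E}(\widehat{f}_{\rm MLP})$, and hence $\|\widehat{f}_{\rm MLP} - f_*\|_{L_2(\bx)}^2$, by the stated rate, giving part (a); part (b) follows by transferring the population $L_2(\bx)$ bound to the empirical norm $\E_n[(\widehat{f}_{\rm MLP} - f_*)^2]$ through a ratio-type (multiplicative) deviation inequality for the squared-difference class, again controlled by the same pseudo-dimension on the localized ball. The main obstacle I anticipate is carrying out the localization in the \emph{misspecified} regime: because $f_* \notin \cF_{\rm MLP}$, the localized family $\{f - f_*\}$ does not contain the origin, so the critical-radius fixed point must be set up relative to the approximant $\tilde f$, and the two-sided curvature condition must be used carefully to keep the excess-risk lower bound aligned with the $L_2(\bx)$ radius that drives the complexity. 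The secondary delicate point is bookkeeping: tracking the logarithmic factors through the approximation bound, the pseudo-dimension, and the chaining so that they collapse to the single $\log^8 n$ factor in the statement.
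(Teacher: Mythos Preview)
Your proposal is correct and follows essentially the same route as the paper: the same basic inequality, Bernstein on the bias term, localization via contraction plus Dudley chaining with the pseudo-dimension bound of \citet{Bartlett-etal2017_COLT}, and the same transfer from population to empirical $L_2$ via a quadratic-process argument. Two small clarifications: the paper centers the localization and the critical radius at $f_*$ itself (not at the approximant $\tilde f$), which works because covering numbers of $\{f-f_*:f\in\cF_{\rm MLP}\}$ coincide with those of $\cF_{\rm MLP}$ by shift invariance, so your anticipated misspecification obstacle does not actually arise; and the application of \citet{Yarotsky2017_NN} to the fully connected class goes through an explicit embedding lemma (pad Yarotsky's sparse network into an MLP of width at most $WL+U$), rather than a direct $H^{-\beta/d}$ approximation bound for balanced-width MLPs.
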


Several aspects of this result warrant discussion. We build on the recent results of \citet{Bartlett-etal2017_COLT}, who find nearly-tight bounds on the Vapnik-Chervonenkis (VC) and Pseudo-dimension of deep nets. One contribution of our proof is to use a \emph{scale sensitive} localization theory with \emph{scale insensitive} measures, such as VC- or Pseudo-dimension, for deep neural networks for general smooth loss functions. For the special case of least squares regression, \cite{Koltchinskii2011_book} uses a similar approach, and a similar result to our Theorem \ref{thm:mlp rate}(a) can be derived for this case using his Theorem 5.2 and Example 3 (p.\ 85f).

This approach has two tangible benefits. First, we do not restrict the class of network architectures to have bounded weights for each unit (scale insensitive), in accordance to standard practice \citep{zhang2016understanding} and in contrast to the classic sieve analysis with scale sensitive measure such as metric entropy. Moreover, this allows for a richer set of approximating possibilities, in particular allowing more flexibility in seeking architectures with specific properties, as we explore in the next subsection.  Second, from a technical point of view, we are able to attain a faster rate on the second term of the bound, order $n^{-1}$ in the sample size, instead of the $n^{-1/2}$ that would result from a direct application of uniform deviation bounds. This upper bound informs the trade offs between width and depth, and the approximation power, and may point toward optimal architectures for statistical inference.

This result gives a nonasymptotic bound that holds with high probability. As mentioned above, we will generally refer to our results simply as ``rates'' when this causes no confusion. This result relies on choosing $H$ appropriately given the smoothness $\beta$ of Assumption \ref{asmpt:sobolev}. Of course, the true smoothness is unknown and thus in practice the ``$\beta$'' appearing in $H$, and consequently in the convergence rates, need not match that of Assumption \ref{asmpt:sobolev}. In general, the rate will depend on the smaller of the two. Most commonly it is assumed that the user-chosen $\beta$ is fixed and that the truth is smoother; witness the ubiquity of cubic splines and local linear regression. Rather than spell out these consequences directly, we will tacitly assume the true smoothness is not less than the $\beta$ appearing in $H$ (here and below). Adaptive approaches, as in classical nonparametrics, may also be possible with deep nets, but are beyond the scope of this study.

Even with these choices of $H$ and $L$, the bound of Theorem \ref{thm:mlp rate} is not optimal (for fixed $\beta$, in the sense of \cite{stone1982optimal}). We rely on the explicit approximating constructions of \citet{Yarotsky2017_NN}, and it is possible that in the future improved approximation properties of MLPs will be found, allowing for a sharpening of the results of Theorem \ref{thm:mlp rate} immediately, i.e.\ without change to our theoretical argument. At present, it is not clear if this rate can be improved, but it is sufficiently fast for valid inference.

%%%%%%%%%%%%%%%%%%%%%%%%%%%%%%%%%%%%%%%%%%%%%%%%%%%%%%%%%%%%%%%%%%%%%%
\subsection{Other Network Architectures}
	\label{sec:other rates}

Theorem \ref{thm:mlp rate} covers only one specific architecture, albeit the most important one at present. However, given that this field is rapidly evolving, it is important to consider other possible architectures which may be beneficial in some cases. To this end, we will state a more generic result and then two specific examples: one to obtain a faster rate of convergence and one for fixed-width networks. All of these results are, at present, more of theoretical interest than practical value, as they are either agnostic about the network (thus infeasible) or rely on more limiting assumptions.

In order to be agnostic about the specific architecture of the network we need to be flexible in the approximation power of the class. To this end, we will replace Assumption \ref{asmpt:sobolev} with the following generic assumption, rather more of a definition, regarding the approximation power of the network. 
\begin{assumption}
	\label{asmpt:bias}
	Let $f_*$ lie in a class $\cF$. For the feedforward network class $\cF_{\rm DNN}$, used in \eqref{eqn:erm}, let the approximation error $\epsilon_{\rm DNN}$ be
	\begin{equation*}
		\epsilon_{\rm DNN} := \sup_{f_* \in \cF} \inf_{\substack{f \in \cF_{\rm DNN} \\ \| f\|_\infty \leq 2M}} \| f -  f_* \|_\infty \enspace.
	\end{equation*}
\end{assumption}
It may be possible to require only an approximation in the $L_2(\bX)$ norm, but this assumption matches the current approximation theory literature and is more comparable with other work in nonparametrics, and thus we maintain the uniform definition. 

Under this condition we obtain the following generic result.

\begin{theorem}[General Feedforward Architecture]
	\label{thm:generic rates}
	Suppose Assumptions \ref{asmpt:dgp} and \ref{asmpt:bias} hold. Let $\widehat{f}_{\rm DNN}$ be the deep ReLU network estimator defined by \eqref{eqn:erm}, for a loss function obeying \eqref{eqn:loss}. Then with probability at least $1 - e^{-\gamma}$, for $n$ large enough,
	\begin{enumerate}
		\item $\displaystyle \| \widehat{f}_{\rm DNN} - f_* \|_{L_2(\bx)}^2  \leq  C \left(  \frac{W L \log W}{n}\log n + \frac{\log \log n + \gamma}{n} + \epsilon_{\rm DNN}^2  \right)$ \ and 
		\item $\displaystyle\E_n \left[ ( \widehat{f}_{\rm DNN} - f_* )^2 \right] \leq C \left(  \frac{W L \log W}{n}\log n + \frac{\log \log n + \gamma}{n} + \epsilon_{\rm DNN}^2  \right) $,
	\end{enumerate}
	for a constant $C>0$ independent of $n$, which may depend on $d$, $M$, and other fixed constants.
\end{theorem}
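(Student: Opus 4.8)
The plan is to run a localization argument on the excess-loss class and then convert the resulting excess-risk bound into the two stated $L_2$ bounds using the curvature condition in \eqref{eqn:loss}. Throughout write $g_f(\bz) \deq \ell(f,\bz) - \ell(f_*,\bz)$ and let $\cG \deq \{ g_f : f \in \cF_{\rm DNN},\ \|f\|_\infty \le 2M \}$ be the excess-loss class; note that $\|f\|_\infty \le 2M$ together with Assumption \ref{asmpt:dgp} keeps everything bounded. The two halves of \eqref{eqn:loss} are exactly what make localization pay off: the curvature lower bound gives $\E[g_f] \ge c_1 \E[(f-f_*)^2]$, so a bound on the population excess risk $\E[g_{\widehat f}]$ immediately yields part (a) via $\| \widehat f - f_* \|_{L_2(\bx)}^2 \le c_1^{-1}\E[g_{\widehat f}]$; while the Lipschitz bound gives $\E[g_f^2] \le C_\ell^2 \E[(f-f_*)^2] \le (C_\ell^2/c_1)\,\E[g_f]$, i.e.\ a Bernstein-type variance-to-mean inequality on $\cG$. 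This self-bounding property is the engine that upgrades the naive $n^{-1/2}$ uniform-deviation term to the $n^{-1}$ term in the bound. I would then fix an approximator $\bar f \in \cF_{\rm DNN}$ realizing Assumption \ref{asmpt:bias}, so that $\|\bar f - f_*\|_\infty \le \epsilon_{\rm DNN}$ and hence $\E[g_{\bar f}] \le c_2\,\epsilon_{\rm DNN}^2$; since $\widehat f_{\rm DNN}$ minimizes empirical risk over a feasible class containing $\bar f$, we have $\Pn g_{\widehat f} \le \Pn g_{\bar f}$.

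Next I would bound the \emph{local} complexity of $\cG$ using scale-insensitive measures rather than norm-based metric entropy, which is the crucial point given that the weights are unbounded. By the Ledoux--Talagrand contraction inequality, the Rademacher complexity of $\cG$ localized to $\{f : \E[(f-f_*)^2] \le r\}$ is at most $C_\ell$ times that of the shifted network class $\{f - f_* : f \in \cF_{\rm DNN}\}$ at the same radius. The Pseudo-dimension of a ReLU network with $W$ parameters and $L$ layers is $O(WL\log W)$ by \citet{Bartlett-etal2017_COLT}; feeding the induced uniform $L_2$ covering bound $\log N(\e) \lesssim WL\log W \cdot \log(1/\e)$ into Dudley's entropy integral yields a localized Rademacher complexity of order $\sqrt{(WL\log W)\,r\,\log n/n}$. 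This is a sub-root function of $r$, and its fixed point $r^*$, solving $r^* \asymp \sqrt{(WL\log W)\,r^*\log n/n}$, is of order $(WL\log W)\log n/n$ — precisely the leading term of the theorem.

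I would then invoke Talagrand's concentration inequality together with a peeling argument over geometric shells in $r$; this is where the $\log\log n$ factor on the $n^{-1}$ term is generated (the union bound runs over $O(\log\log n)$ scales) and where the confidence parameter $\gamma$ enters as $\gamma/n$. The output is a one-sided inequality holding with probability at least $1 - e^{-\gamma}$, uniformly over $\cG$, of the form $\E[g_f] \le 2\,\Pn g_f + C\big( r^* + (\gamma + \log\log n)/n \big)$. Evaluating at $f = \widehat f$, using $\Pn g_{\widehat f} \le \Pn g_{\bar f}$ and the concentration of the single fixed quantity $\Pn g_{\bar f}$ around $\E[g_{\bar f}] \lesssim \epsilon_{\rm DNN}^2$, gives $\E[g_{\widehat f}] \le C\big( (WL\log W)\log n/n + (\gamma+\log\log n)/n + \epsilon_{\rm DNN}^2\big)$, and the curvature lower bound converts this into part (a).

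Finally, part (b) requires transferring the bound from $\|\cdot\|_{L_2(\bx)}$ to the empirical norm $\En[(\cdot)^2]$: I would establish a ratio-type (multiplicative) empirical-process inequality showing that on the localized class the empirical and population $L_2$ norms are equivalent up to an additive term of the same order $r^*$, which combined with part (a) yields (b). The main obstacle is the local-complexity computation together with the fixed-point and peeling bookkeeping of the middle two steps: one must push the scale-insensitive Pseudo-dimension bound of \citet{Bartlett-etal2017_COLT} through Dudley's integral and Talagrand's inequality while keeping the variance proxy pinned to $\E[g_f]$ via the Bernstein condition, since it is precisely this careful localization — not a crude uniform bound — that delivers the fast $n^{-1}$ rather than $n^{-1/2}$ second term.
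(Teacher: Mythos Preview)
Your proposal is correct and follows essentially the same approach as the paper: the paper also decomposes via the ERM optimality and curvature condition, controls the bias term at the approximator $f_n$ (your $\bar f$) by Bernstein, and then runs a localization argument using Talagrand's inequality, Ledoux--Talagrand contraction, Dudley chaining, and the \citet{Bartlett-etal2017_COLT} pseudo-dimension bound, with a peeling over geometric shells that produces the $\log\log n$ term; part (b) likewise follows from an empirical-versus-population $L_2$ comparison (the paper's ``quadratic process'' step). The only cosmetic difference is that the paper spells the localization out as an explicit one-step-improvement iteration over shells rather than invoking the sub-root fixed-point formulation directly, but the ingredients and the resulting bound are identical.
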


This is a more general than Theorem \ref{thm:mlp rate}, covering the general deep ReLU network problem defined in \eqref{eqn:erm}, general feedforward architectures, and the general class of losses defined by \eqref{eqn:loss}. The same comments as were made following Theorem \ref{thm:mlp rate} apply here as well: the same localization argument is used with the same benefits. We explicitly use this in the next two corollaries, where we exploit the allowed flexibility in controlling $\epsilon_{\rm DNN}$ by stating results for particular architectures. The bound here is not directly applicable without specifying the network structure, which will determine both the variance portion (through $W$, $L$, and $U$) and the approximation error. With these set, the bound becomes operational upon choosing $\gamma$, which can be optimized as desired, and this will immediately then yield a convergence rate. 

Turning to special cases, we first show that the optimal rate of \cite{stone1982optimal} can be attained, up to log factors. However, this relies on a rather artificial network structure, designated to approximate functions in a Sobolev space well, but without concern for practical implementation. Thus, while the following rate improves upon Theorem \ref{thm:mlp rate}, we view this result as mainly of theoretical interest: establishing that (certain) deep ReLU networks are able to attain the optimal rate. 

\begin{corollary}[Optimal Rate]
	\label{thm:optimal rate}
	Suppose Assumptions \ref{asmpt:dgp} and \ref{asmpt:sobolev} hold. Let $\widehat{f}_{\rm OPT}$ solve \eqref{eqn:erm} using the (deep and wide) network of \citet[][Theorem 1]{Yarotsky2017_NN}, with $W \asymp U \asymp n^{\frac{d}{2\beta + d}} \log n$ and depth $L \asymp \log n$, the following hold with probability at least $1 - e^{-\gamma}$, for $n$ large enough,
	\begin{enumerate}
		\item $\| \displaystyle \widehat{f}_{\rm OPT} - f_* \|_{L_2(\bx)}^2  \leq  C \cdot \left\{ n^{-\frac{2\beta}{2\beta+d}} \log^4 n + \frac{\log \log n + \gamma}{n} \right\}$ \ and 
		\item $\displaystyle \E_n  \left[ ( \widehat{f}_{\rm OPT} - f_* )^2 \right]  \leq C \cdot \left\{ n^{-\frac{2\beta}{2\beta+d}} \log^4 n + \frac{\log \log n + \gamma}{n} \right\} $,
	\end{enumerate}
	for a constant $C>0$ independent of $n$, which may depend on $d$, $M$, and other fixed constants.
\end{corollary}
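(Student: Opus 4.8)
The plan is to obtain this as an immediate corollary of Theorem \ref{thm:generic rates}: the localization machinery and the scale-insensitive complexity bounds have already done all of the heavy lifting there, so the only work left is to instantiate the generic bound with the specific architecture of \citet[][Theorem 1]{Yarotsky2017_NN} and to control the approximation error $\epsilon_{\rm DNN}$ of Assumption \ref{asmpt:bias}. Since Assumption \ref{asmpt:sobolev} places $f_*$ in a Sobolev ball, I would take the generic class $\cF$ of Assumption \ref{asmpt:bias} to be that Sobolev ball and read off a quantitative sup-norm approximation rate from Yarotsky's construction.

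Concretely, I would first fix a target approximation level $\epsilon_n$ and invoke \citet[][Theorem 1]{Yarotsky2017_NN}, which produces a feedforward ReLU network approximating any $f_*$ in the unit ball $\mathcal{W}^{\beta,\infty}([-1,1]^d)$ to sup-norm error at most $\epsilon_n$, using $W \asymp U \asymp \epsilon_n^{-d/\beta}\log(1/\epsilon_n)$ weights and units and depth $L \asymp \log(1/\epsilon_n)$. Choosing $\epsilon_n = n^{-\beta/(2\beta+d)}$ then reproduces exactly the stated orders $W \asymp U \asymp n^{d/(2\beta+d)}\log n$ and $L \asymp \log n$. Before plugging in, I would dispatch the one detail needed to fit Assumption \ref{asmpt:bias}: since $\|f_*\|_\infty \leq M$ and the approximant is within $\epsilon_n$ of $f_*$ in sup norm, for $n$ large enough it lies in the constraint set $\{\|f\|_\infty \leq 2M\}$ of \eqref{eqn:erm}, so the sup-norm rate transfers and $\epsilon_{\rm DNN} \leq \epsilon_n = n^{-\beta/(2\beta+d)}$.

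It then remains to substitute these quantities into the bound of Theorem \ref{thm:generic rates} and simplify. For the complexity term, using $\log W \asymp \log n$, I would compute
\begin{equation*}
	\frac{W L \log W}{n}\log n \asymp \frac{n^{\frac{d}{2\beta+d}} \log n \cdot \log n \cdot \log n}{n}\log n = n^{\frac{d}{2\beta+d}-1}\log^4 n = n^{-\frac{2\beta}{2\beta+d}}\log^4 n,
\end{equation*}
where the last equality uses $\frac{d}{2\beta+d}-1 = -\frac{2\beta}{2\beta+d}$. For the approximation term, $\epsilon_{\rm DNN}^2 \leq n^{-\frac{2\beta}{2\beta+d}}$, which is dominated by the complexity term. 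Combining these with the unchanged $(\log\log n + \gamma)/n$ term of Theorem \ref{thm:generic rates}, and leaving the high-probability statement $1 - e^{-\gamma}$ intact, yields both (a) and (b) with a new constant $C$, i.e.\ the optimal rate of \cite{stone1982optimal} up to logarithmic factors.

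The argument is short precisely because Theorem \ref{thm:generic rates} carries the statistical content, so there is no genuinely hard mathematical step. The only points requiring care are bookkeeping ones: translating Yarotsky's $\epsilon$-parametrization of network size into the $(W,L)$ that enter the complexity term of Theorem \ref{thm:generic rates}, verifying the truncation/boundedness detail so that the sup-norm approximation guarantee legitimately bounds $\epsilon_{\rm DNN}$, and tracking the exponents on the logarithmic factors so that the complexity term collapses to exactly $\log^4 n$.
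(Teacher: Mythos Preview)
Your proposal is correct and follows essentially the same route as the paper: apply Theorem \ref{thm:generic rates}, invoke \citet[][Theorem 1]{Yarotsky2017_NN} to bound $\epsilon_{\rm DNN}$ for the Sobolev class, set $\epsilon_n = n^{-\beta/(2\beta+d)}$, and balance the complexity and approximation terms. If anything you are slightly more careful than the paper's own proof sketch, in that you explicitly verify the $\|f\|_\infty \leq 2M$ constraint needed for Assumption \ref{asmpt:bias} and track the four logarithmic factors in the complexity term.
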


Next, we turn to \emph{very} deep networks that are very narrow, which have attracted substantial recent interest. Theorem \ref{thm:mlp rate} and Corollary \ref{thm:optimal rate} dealt with networks where the depth and the width grow with sample size. This matches the most common empirical practice, and is what we use in Sections \ref{sec:application} and \ref{sec:simuls}. However, it is possible to allow for networks of \emph{fixed} width, provided the depth is sufficiently large. The next result is perhaps the largest departure from the classical study of neural networks: earlier work considered networks with diverging width but fixed depth (often a single layer), while the reverse is true here. The activation function is of course qualitatively different as well, being piecewise linear instead of smooth. Using recent results \citep{mhaskar2016deep,Hanin2017_WP,Yarotsky2018_WP} we can establish the following rate for very deep, fixed-width MLPs.
		
\begin{corollary}[Fixed Width Networks]	
	\label{thm:fixed width}
	Let the conditions of Theorem \ref{thm:mlp rate} hold, with $\beta \geq 1$ in Assumption \ref{asmpt:sobolev}. Let $\widehat{f}_{\rm FW}$ solve \eqref{eqn:erm} for an MLP with fixed width $H = 2d+10$ and depth $L \asymp n^{\frac{d}{2(2+d)}}$. Then with probability at least $1 - e^{-\gamma}$, for $n$ large enough,
	\begin{enumerate}
		\item $\displaystyle  \| \widehat{f}_{\rm FW} - f_* \|_{L_2(\bx)}^2  \leq  C \cdot \left\{ n^{-\frac{2}{2+d}} \log^2 n + \frac{\log \log n + \gamma}{n} \right\}$ \ and 
		\item $\displaystyle \E_n \left[ ( \widehat{f}_{\rm FW} - f_* )^2 \right] \leq C \cdot \left\{ n^{-\frac{2}{2+d}} \log^2 n + \frac{\log \log n + \gamma}{n} \right\} $,
	\end{enumerate}
	for a constant $C>0$ independent of $n$, which may depend on $d$, $M$, and other fixed constants.
\end{corollary}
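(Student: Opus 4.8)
The plan is to derive Corollary \ref{thm:fixed width} as a direct specialization of Theorem \ref{thm:generic rates}. A fixed-width MLP is a feedforward architecture, so the estimator $\widehat{f}_{\rm FW}$ defined by \eqref{eqn:erm} falls under the scope of that theorem, and its conclusions (a) and (b) hold verbatim once we supply the three architecture-specific quantities $W$, $L$, and $\epsilon_{\rm DNN}$. The whole task therefore reduces to (i) counting parameters, (ii) importing an approximation rate for constant-width networks, and (iii) checking that with the prescribed depth the complexity and approximation terms both collapse to $n^{-2/(2+d)}$ up to logs.

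First I would count the parameters. With fixed width $H = 2d+10$ and depth $L$, the MLP parameter count $W = (d+1)H + (L-1)(H^2+H) + H + 1$ is dominated by its middle term, so $W \asymp L$ with an implied constant depending only on $d$. Since $L$ is polynomial in $n$ we also have $\log W \asymp \log L \asymp \log n$, and hence the complexity term of Theorem \ref{thm:generic rates} satisfies
\[
\frac{W L \log W}{n}\log n \;\asymp\; \frac{L^2 \log^2 n}{n}.
\]

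Second, and this is the crux, I would bound $\epsilon_{\rm DNN}$ using the fixed-width approximation theory of \citet{Yarotsky2018_WP} (see also \citet{Hanin2017_WP,mhaskar2016deep}), which shows that a ReLU MLP of width $2d+10$ and depth $L$ approximates functions on $[-1,1]^d$ at a uniform rate controlled by $L^{-2/d}$. Applied to $f_* \in \mathcal{W}^{\beta,\infty}$ with $\beta \geq 1$, and using only the Lipschitz ($\beta=1$) structure implied by the Sobolev embedding, this yields $\epsilon_{\rm DNN} \lesssim L^{-2/d}$ up to logarithmic factors; note this is precisely why higher smoothness does not sharpen the exponent here. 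Because $\|f_*\|_\infty \leq M$ and the approximant is within $\epsilon_{\rm DNN}$ of $f_*$, the approximant satisfies the constraint $\|\cdot\|_\infty \leq 2M$ for $n$ large, so Assumption \ref{asmpt:bias} holds with this $\epsilon_{\rm DNN}$. Substituting $L \asymp n^{\frac{d}{2(2+d)}}$ then gives
\[
\frac{L^2 \log^2 n}{n} \asymp n^{-\frac{2}{2+d}}\log^2 n
\qquad\text{and}\qquad
\epsilon_{\rm DNN}^2 \lesssim L^{-4/d} \asymp n^{-\frac{2}{2+d}},
\]
so both terms balance at $n^{-2/(2+d)}$ (up to logs), while the residual $\frac{\log\log n + \gamma}{n}$ passes through unchanged; since Theorem \ref{thm:generic rates} delivers the population and empirical $L_2$ bounds jointly, parts (a) and (b) follow simultaneously.

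The main obstacle is the second step: the statistical side is an immediate corollary of Theorem \ref{thm:generic rates}, so the genuine content lies in correctly invoking the constant-width approximation results — matching the exact width $2d+10$, extracting the $L^{-2/d}$ depth dependence, and confirming that only first-order smoothness is exploited. Verifying that the cited constructions achieve this rate at fixed width (and that the resulting approximant respects the $\|\cdot\|_\infty \leq 2M$ constraint) is where the real care is required; everything downstream is routine arithmetic in the exponents.
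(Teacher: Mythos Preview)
Your proposal is correct and follows essentially the same route as the paper: specialize Theorem \ref{thm:generic rates} to the fixed-width architecture, use \citet[Theorem 1]{Yarotsky2018_WP} to obtain $\epsilon_{\rm DNN}\lesssim L^{-2/d}$ at width $2d+10$, note $W\asymp L$ so the complexity term is $L^2\log^2 n/n$, and balance both at $L\asymp n^{d/(2(2+d))}$. The paper's proof is terser but identical in substance; your additional remarks on why only $\beta=1$ smoothness is exploited and on the $\|\cdot\|_\infty\le 2M$ constraint are accurate elaborations.
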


This result is again mainly of theoretical interest. The class is only able to approximate well functions with $\beta = 1$ (cf. the choice of $L$) which limits the potential applications of the result because, in practice, $d$ will be large enough to render this rate, unlike those above, too slow for use in later inference procedures. In particular, if $d \geq 3$, the sufficient conditions of Theorem \ref{thm:ate} fail.

Finally, as mentioned following Theorem \ref{thm:mlp rate}, our theory here will immediately yield a faster rate upon discovery of improved approximation power of this class of networks. In other words, for example, if a proof became available that fixed-width, very deep networks can approximate $\beta$-smooth functions (as in Assumption \ref{asmpt:sobolev}), then Corollary \ref{thm:fixed width} will trivially be improvable to match the rate of Theorem \ref{thm:mlp rate}. Similarly, if the MLP architecture can be shown to share the approximation power with that of Corollary \ref{thm:optimal rate}, then Theorem \ref{thm:mlp rate} will itself deliver the optimal rate. Our proofs will not require adjustment.

\begin{remark}
	\label{rem:computation}
	Although there has been a great deal of work in easing implementation (optimization and tuning) of deep nets, it still may be a challenge in some settings, particularly when using non-standard architectures. See also Remark \ref{rem:regularize}. Given the renewed interest in deep networks, this is an area of study already \citep{hartford2017deep,polson2018posterior} and we expect this to continue and that implementations will rapidly evolve. This is perhaps another reason that Theorem \ref{thm:mlp rate} is, at the present time, the most practically useful, but that (as just discussed) Theorem \ref{thm:generic rates} will be increasingly useful in the future.
\end{remark}

\begin{remark}
	\label{rem:activation}
	
	Our results can be extended easily to include piecewise linear activation functions beyond ReLU. Intuitively, being itself piecewise linear, appropriate combinations of a fixed number of ReLU functions can equal a piecewise linear function (with a fixed number of knots) and therefore the complexity and approximation power can be easily adjusted to this case. See \cite{Bartlett-etal2017_COLT}. 
	
	In principle, similar rates of convergence could be attained for other activation functions, given results on their approximation error. However, it is not clear what practical value would be offered due to computational issues (in which the activation choice plays a crucial role). Indeed, the recent switch to ReLU stems not from their greater approximation power, but from the fact that optimizing a deep net with sigmoid-type activation is unstable or impossible in practice. Thus, while it is certainly possible that we could complement the single-layer results with rates for sigmoid-based deep networks, these results would have no consequences for real-world practice.
	
	From a purely practical point of view, several variations of the ReLU activation function have been proposed recently (including the so-called Leaky ReLU, Randomized ReLU, (Scaled) Exponential Linear Units, and so forth) and have been found in some experiments to improve optimization properties. It is not clear what theoretical properties these activation functions have or if the computational benefits persist more generically, though this area is rapidly evolving. We conjecture that our results could be extended to include these activation functions.
\end{remark}

%%%%%%%%%%%%%%%%%%%%%%%%%%%%%%%%%%%%%%%%%%%%%%%%%%%%%%%%%%%%%%%%%%%%%%
%%%%%%%%%%%%%%%%%%%%%%%%%%%%%%%%%%%%%%%%%%%%%%%%%%%%%%%%%%%%%%%%%%%%%%
\section{Parameters of Interest}
	\label{sec:params}

We will use the results above, in particular Theorem \ref{thm:mlp rate}, coupled with results in the semiparametric literature, to deliver valid asymptotic inference for causal effects. The novelty of our results is not in this semiparametric stage per se, but rather in delivering valid inference after relying on deep learning for the first step estimation. In this section we define the parameters of interest, while asymptotic inference is discussed next. 

We will focus, for concreteness, on causal parameters that are of interest across different disciplines: average treatment effects, expected utility (or profits) under different targeting policies, average effects on (non-)treated subpopulations, and decomposition effects. Our focus on causal inference with observational data is due to the popularity of these estimands both in applications and in theoretical work, thus allowing our results to be put to immediate use and easily compared to prior literature. The average treatment effect in particular is often used as a benchmark parameter for studying inference following machine learning (see references in the Introduction). However, armed with our results for deep neural networks we can cover a great deal more (some discussion is in Section \ref{sec:other params}).

The estimation of average causal effects is a well-studied problem, and we will give only a brief overview here. Recent reviews and further references are given by \citet{Belloni-etal2017_Ecma,Athey-Imbens-Pham-Wager2017_AERPP,Abadie-Cattaneo2018_ARE}. We consider the standard setup for program evaluation with observational data: we observe a sample of $n$ units, each exposed to a binary treatment, and for each unit we observe a vector of pre-treatment covariates, $\bX \in \R^d$, treatment status $T \in \{0,1\}$, and a scalar post-treatment outcome $Y$. The observed outcome obeys $Y = T Y(1) + (1-T)Y(0)$, where $Y(t)$ is the (potential) outcome under treatment status $t \in \{0,1\}$. The ``fundamental problem'' is that only $Y(0)$ or $Y(1)$ is observed for each unit, never both.

The crucial identification assumptions, which pertain to all the parameters we consider, are selection on observables, also known as ignorability, unconfoundedness, missingness at random, or conditional independence, and overlap, or common support. Let $p(\bx) = \P[T = 1 \vert \bX = \bx]$ denote the propensity score and $\mu_t(\bx) = E[Y(t) \vert \bX=\bx], \ t \in \{0,1\}$ denote the two outcome regression functions. We then assume the following throughout, beyond which, we will mostly need only regularity conditions for inference.
\begin{assumption}
	\label{asmpt:ignorability} \
	For $t \in \{0,1\}$ and almost surely $\bX$, $\E[Y(t) \vert T, \bX=\bx] = \E[Y(t) \vert \bX=\bx]$ and $\bar{p} \leq p(\bx) \leq 1 - \bar{p}$ for some $\bar{p} > 0$.
\end{assumption}

It will be useful to divide our discussion between parameters that are fully marginal averages, such as the average treatment effect, and those which are for specific subpopulations. Here, ``subpopulations'' refer to the treated or nontreated groups, with corresponding parameters such as the treatment effect for the treated. Any parameter, in either case, can be studied for a suitable subpopulation defined by the covariates $\bX$, such as a specific demographic group. Though causal effects as a whole share some structure, there are slight conceptual and notational differences. In particular, the form of the efficient influence function and doubly robust estimator is different for the two sets, but common within.

%%%%%%%%%%%%%%%%%%%%%%%%%%%%%%%%%%%%%%%%%%%%%%%%%%%%%%%%%%%%%%%%%%%%%%
\subsection{Full-Population Average Effect Parameters}
	\label{sec:ate}

Here we are interested in averages over the entire population. The prototypical parameter of interest is the average treatment effect:
\begin{equation}
	\label{eqn:ate}
	\tau = \E[Y(1) - Y(0)].
\end{equation}
In the context of our empirical example, the treatment is being mailed a catalog and the outcome is dollars spent (results for the binary purchase decision are available on request).  The average treatment effect, also referred to as ``lift'' in digital contexts, corresponds to the expected gain in revenue from an average individual receiving the catalog compared to the same person not receiving the catalog.

A closely related parameter of interest is the average realized outcome, which in general may be interpreted as the expected utility or welfare from a \emph{counterfactual} treatment policy. In the context of our empirical application this is expected profits; in a medical context it would be the total health outcome. The question of interest here is whether a change in the treatment policy would be beneficial in terms of increasing outcomes, and this is judged using observational data. Intuitively, the average treatment effect is the expected gain from treating the ``next'' person, relative to if they had not been exposed. That is, it is the expected change in the outcome. Expected utility/profit, on the other hand, is concerned with the total outcome, not the difference in outcomes. In the context of our empirical application, we are interested in total sales rather than the change in sales. Our discussion is grounded in this language for easy comparison.
%Catalogs can be costly to produce and mail to consumers, and therefore if profit levels can be maintained while sending out fewer catalogs. Societally, the environmental cost is also reduced.

The parameter depends on a counterfactual/hypothetical treatment targeting strategy, which is often itself the object of evaluation. This is simply a rule that assigns a given set of characteristics (e.g.\ a consumer profile), determined by the covariates $X$, to treatment status: that is, a known function (which may include randomization but is not estimated from the sample) $s(\bx): \supp\{\bX\} \mapsto \{0,1\}$. Note well that this is \emph{not} necessarily the observed treatment: $s(\bx_i) \neq t_i$. The policy maker may wish to evaluate the gain from targeting only a certain subset of customers, a price discrimination strategy, or comparisons of different such policies. Our assumptions, while standard, deliver identification of such counterfactuals at no cost.

The parameter of interest is expected utility, or profit, from a fixed policy, given by
\begin{equation}
	\label{eqn:profit}
	\pi(s) = \E \big[s(\bX)Y(1) + \left(1 - s(\bX)\right) Y(0) \big],
\end{equation}
where we make explicit the dependence on the policy $s(\cdot)$. Compare to Equation \eqref{eqn:ate} and recall that the \emph{observed} outcome obeys $Y = T Y(1) + (1-T)Y(0)$. Whereas $\tau$ is the gain in assigning the next person to treatment and is given by the difference in potential outcomes, $\pi(s)$ is the expected outcome that would be observed for the next person if the treatment rule were $s(\bx)$.

A natural question is whether a candidate targeting strategy, say $s'(\bx)$, is superior to baseline or status quo policy, $s_0(\bx)$. This amounts to testing the hypothesis $H_0: \pi(s') \geq \pi(s_0)$. To evaluate this, we can study the difference in expected profits, which amounts to
\begin{equation}
	\label{eqn:profit diff}
	\pi(s', s_0) = \pi(s') - \pi(s_0) = \E \big[(s'(\bX) - s_0(\bX)) Y(1) + \left(s_0(\bX) - s'(\bX)\right) Y(0) \big] .
\end{equation}
Assumption \ref{asmpt:ignorability} provides identification for $\pi(s)$ and $\pi(s', s_0)$, arguing analogously as for $\tau$. Moreover, notice that $\pi(s', s_0)  = \E [(s'(\bX) - s_0(\bX)) (Y(1) -  Y(0) )] = \E [(s'(\bX) - s_0(\bX)) \tau(\bX)]$, where $\tau(\bx) = \E[Y(1) -  Y(0) \mid \bX = \bx]$ is the conditional average treatment effect. The latter form makes clear that only those differently treated, of course, impact the evaluation of $s'$ compared to $s_0$. The strategy $s'$ will be superior if, on average, it targets those with a higher individual treatment effect. Estimating the optimal treatment policy from the data is discussed briefly in Section \ref{sec:policy}.

The common structure of these parameters is that they all involve full-population averages of the potential outcomes, possibly scaled by a known function. For these parameters, the influence function is known from \cite{Hahn1998_Ecma}, and estimators based on the influence function are doubly robust, as they remain consistent if either the regression functions or the propensity score are correctly specified \citep{Robins-Rotnitzky-Zhao1994_JASA,Robins-Rotnitzky-Zhao1995_JASA}. With a slight abuse of terminology (since we are omitting the centering), the influence function for a single average potential outcome, $t \in \{0,1\}$, is given by, for $\bz = (y, t, \bx')'$,
\begin{equation}
	\label{eqn:ate influence}
	\psi_t(\bz) = \frac{  \One\{T = t\} ( y - \mu_t(\bx)) }{\P[T = t \mid \bX = \bx]}  +  \mu_t(\bx).
\end{equation}  
Our estimation of $\tau$, $\pi(s)$, and $\pi(s', s_0)$ will utilize sample averages of this function, with unknown objects replaced by estimators. Our use of influence functions here follows the recent literature in econometrics showing that the double robustness implies valid inference under weaker conditions on the first step nonparametric estimates \citep{Farrell2015_arXiv,Chernozhukov-etal2018_EJ}.

%%%%%%%%%%%%%%%%%%%%%%%%%%%%%%%%%%%%%%%%%%%%%%%%%%%%%%%%%%%%%%%%%%%%%%
\subsection{Subpopulation Effect Parameters}
	\label{sec:tot}

The second type of causal effects of interest are based on potential outcomes averaged over only a specific treatment group. A single such average, for $t, t' \in \{0,1\}$, is denoted by
\begin{equation}
	\label{eqn:rho}
	\rho_{t,t'} = \E[Y(t) \mid T = t'].
\end{equation}
Many interesting parameters are linear combinations of these for different $t$ and $t'$. We focus on two for concreteness. (We could also consider averages restricted by targeting-type functions, as in expected utility/profit, but for brevity we omit this.) The most well-studied of these parameters is the treatment effect on the treated, given by
\begin{equation}
	\label{eqn:tot}
	\tau_{1,0} = \E[Y(1) - Y(0) \mid T = 1] = \rho_{1,1} - \rho_{0,1}.
\end{equation}

To appreciate the breadth of this framework, and the applicability of our causal inference results, we also consider a decomposition parameter, a semiparametric analogue of Oaxaca-Blinder \citep{Kitagawa1955_JASA,Oaxaca1973_IER,Blinder1973_JHR}. In this context, the ``treatment'' variable $T$ is typically not a treatment assignment per se, but rather an exogenous covariate such as a demographic indicator, perhaps most commonly a male/female indicator.  See \cite{Fortin-Lemieux-Firpo2011_handbook} for a complete discussion and further references. The parameter of interest in this case is the decomposition of $\Delta = \E[Y(1) \mid T = 1]  - \E[Y(0) \mid T = 0]$, into the difference in the covariate distributions and the difference in expected outcomes. These can be written as functions of different $\rho_{t,t'}$. For example, $\Delta_X = \E[Y(1) \vert T \!=\! 1]  - \E[Y(1) \vert T \!=\! 0]  = \E[\mu_1(\bX) \vert T \!=\! 1]  - \E[\mu_1(\bX) \vert T \!=\! 0]  = \rho_{1,1} - \rho_{1,0}$. We are in general interested in
\begin{equation}
	\label{eqn:oaxaca}
	\Delta = \Delta_X + \Delta_\mu,   		\qquad \quad  		   \Delta_X = \rho_{1,1} - \rho_{1,0},   		\qquad \text{and} \qquad  		  \Delta_\mu = \rho_{1,0} - \rho_{0,0}.
\end{equation}

Just as in the case of full-population averages, the influence function is known and leads to a doubly robust estimator. For a single $\rho_{t,t'}$, the (uncentered) influence function is (cf.\ \eqref{eqn:ate influence}):
\begin{equation}
	\label{eqn:tot influence}
	\psi_{t,t'}(\bz) =   \frac{\P[T = t' \mid \bX = \bx]}{\P[T = t']} \frac{  \One\{T = t\} ( y - \mu_t(\bx)) }{\P[T = t \mid \bX = \bx]}  +  \frac{\One\{T = t'\} \mu_t(\bx)}{\P[T = t']}.
\end{equation}
Estimation and inference requires, as above, estimation of the propensity scores and regression functions, depending on the exact choices of $t$ and $t'$, and here we also require the marginal probability of treatments.

%%%%%%%%%%%%%%%%%%%%%%%%%%%%%%%%%%%%%%%%%%%%%%%%%%%%%%%%%%%%%%%%%%%%%%
\subsection{Optimal Policies}
	\label{sec:policy}

Moving beyond a fixed parameter, our results on deep neural networks can be used to address optimal targeting. In the notation of Section \ref{sec:ate}, this amounts to finding a policy, say $s_\star(\bx)$, that maximizes a given measure of utility stemming from treatment, generally the expected gain relative to a baseline policy. In Section \ref{sec:ate} we considered the utility (or profit) difference between two given strategies, a candidate $s'(\bx)$ and a baseline $s_0(\bx)$. Instead of inference on $\pi(s', s_0)$, we can use the data to find the $s_\star(\bx)$ which maximizes the gain relative to the baseline. This problem has been widely studied in econometrics and statistics; for detailed discussion and numerous references see \cite{Manski2004_Ecma}, \cite{Hirano-Porter2009_Ecma}, \cite{Kitagawa-Tetenov2018_Ecma}, and \cite{Athey-Wager2018_WP}. In particular, the latter noticed using the locally robust framework allows policy optimization under nearly the same conditions as inference and proved fast convergence rates of the estimated policy in terms of regret.

More formally, we want to find the optimal choice $s_\star(\bx)$ in some policy/action space $\mathcal{S}$. The policy space, and thus its complexity, is user determined. Simple examples include simple decision trees or univariate-based strategies; more can be found in the references above. Recall that $\pi(s', s_0) = \E[Y(s')] - \E[Y(s_0)] = \E [(s'(\bX) - s_0(\bX)) \tau(\bX)]$, where $\tau(\bx) = \E[Y(1) -  Y(0) \mid \bX = \bx]$ is the conditional average treatment effect. Given a space $\mathcal{S}$, we wish to find the policy $s_\star(\bx) \in \mathcal{S}$ which solves $\max_{s' \in \mathcal{S}} \pi(s', s_0)$. The main result of \cite{Athey-Wager2018_WP} is that replacing $\pi$ with the doubly-robust $\hat{\pi}$ of Equation \eqref{eqn:dr}, and minimizing the empirical analogue of regret, one obtains an estimator $\hat{s}(\bx)$ of the optimal policy that obeys the regret bound $\pi(s_\star, s_0) - \pi(\hat{s}, s_0) = O_P(\sqrt{{\rm VC}(\mathcal{S})/n})$ (a formal statement would be notationally burdensome). The complexity of the user-chosen policy space enters the bound through its VC dimension. Simple, interpretable policy classes often have bounded or slowly-growing dimension, implying rapid convergence.

%%%%%%%%%%%%%%%%%%%%%%%%%%%%%%%%%%%%%%%%%%%%%%%%%%%%%%%%%%%%%%%%%%%%%%
\subsection{Other Estimands}
	\label{sec:other params}

There are of course many other contexts where first-step deep learning is useful. Only trivial extensions to the above would be required for other causal effects, such as multi-valued treatments \citep[reviewed by][]{Cattaneo2010_JoE} and others with doubly-robust estimators \citep{Sloczynski-Wooldridge2018_ET}. Further, under selection on observables, treatment effects, missing data, measurement error, and data combination are equivalent, and thus all our results apply immediately to those contexts. For reviews of these and Assumption \ref{asmpt:ignorability} more broadly, see \cite{Chen-Hong-Tarozzi2004_WP,Tsiatis2006_book,Heckman-Vytlacil2007a_Handbook,Imbens-Wooldridge2009_JEL}.

Moving beyond causal effects, any estimand with a locally/doubly robust estimator depending only on target functions falling into our class of losses can be covered using the results of Section \ref{sec:deep nets}. For example, estimands requiring distribution estimation require further study; see \cite{Liang2018_GAN} for recent results via Generative Adversarial Networks (GANs). More precisely, along with regularity conditions, our theory can be used to verify the conditions of \cite{Chernozhukov-etal2018_WP}, who treat more general semiparametric estimands using local robustness, sometimes relying on sample splitting or cross fitting. Further in this vein, our results on deep neural networks can be used to address optimal targeting, i.e., finding the policy, say $s_\star(\bx)$, that maximizes a given measure of utility, by applying the results of \cite{Athey-Wager2018_WP}, who noticed that using the locally robust framework allows policy optimization.

More broadly, the learning of features using deep neural networks is becoming increasingly popular and our results speak to this context directly. To illustrate, consider the simple example of a linear model where some predictors are features learned from independent data. Here, the object of interest is the fixed-dimension coefficient vector $\bm{\lambda}$, which we assume can be partitioned as $\bm{\lambda} = (\bm{\lambda}_1', \bm{\lambda}_2')'$ according to the model $Y = \bm{f}(\bX)'\bm{\lambda}_1 + W'\bm{\lambda}_2 + \varepsilon$. The features $\bm{f}(\bX)$, often a ``score'' of some type, are generally learned from auxiliary (and independent) data. For a recent example, see \cite{liu2017large}. In such cases, inference on $\bm{\lambda}$ can proceed directly, as long as care is taken to interpret the results. See Section \ref{sec:splitting}.

%%%%%%%%%%%%%%%%%%%%%%%%%%%%%%%%%%%%%%%%%%%%%%%%%%%%%%%%%%%%%%%%%%%%%%
%%%%%%%%%%%%%%%%%%%%%%%%%%%%%%%%%%%%%%%%%%%%%%%%%%%%%%%%%%%%%%%%%%%%%%
\section{Asymptotic Inference}
	\label{sec:inference}

We now turn to asymptotic inference for the causal parameters discussed above. We first define the estimators, which are based on sample averages of the (uncentered) influence functions \eqref{eqn:ate influence} and \eqref{eqn:tot influence}. We then give a generic result for single averages which can then be combined for inference on a given parameter of interest. Below we discuss inference under randomized treatment and using sample splitting. 

Throughout, we assume we have a sample $\{\bz_i = (y_i, t_i, \bx_i')'\}_{i=1}^n$ from $\bZ = (Y, T, \bX')'$. We then form
\begin{equation}
	\label{eqn:influence hat}
	\hat{\psi}_t(\bz_i) = \frac{ \One\{t_i = t\} ( y_i - \hat{\mu}_t(\bx_i)) }{\hat{\P}[T = t \mid \bX = \bx_i]}  +  \hat{\mu}_t(\bx_i),
\end{equation}
where $\hat{\P}[T = t \mid \bX = \bx_i] = \hat{p}(\bx_i)$ for $t=1$ and $1-\hat{p}(\bx_i)$ for $t=0$, and similarly
\begin{equation}
	\label{eqn:influence hat tot}
	\hat{\psi}_{t,t'}(\bz_i) =   \frac{\hat{\P}[T = t' \mid \bX = \bx_i]}{\hat{\P}[T = t']} \frac{  \One\{t_i = t\} ( y_i - \hat{\mu}_t(\bx_i)) }{\hat{\P}[T = t \mid \bX = \bx_i]}  +  \frac{\One\{t_i = t'\} \hat{\mu}_t(\bx_i)}{\hat{\P}[T = t']},
\end{equation}
where $\hat{\P}[T = t']$ is simply the sample frequency $\En[\One\{t_i = t'\}]$.

For the first stage estimates appearing in \eqref{eqn:influence hat} and \eqref{eqn:influence hat tot} we use our results on deep nets, and Theorem \ref{thm:mlp rate} in particular. Specifically, the estimated propensity score, $\hat{p}(\bx)$, is the estimate that results from solving \eqref{eqn:erm}, with the MLP architecture, for the logistic loss \eqref{eqn:logit} with $T$ as the outcome. Similarly, for each status $t \in \{0,1\}$, we can let $\hat{\mu}_t(\bx)$ be the deep-MLP estimate of $f_*(\bx) = \E[ Y | T = t, \bX = \bx]$, solving \eqref{eqn:erm} for least squares loss, \eqref{eqn:ols}, with outcome $Y$, using only observations with $t_i = t$. However, it is worth noting that the theoretically-equivalent joint estimation of Equation \eqref{eqn:joint} performs much better, as the two groups may share features. To state the results, let $\beta_p$ and $\beta_\mu$ be the smoothness parameters of Assumption \ref{asmpt:sobolev} for the propensity score and outcome models, respectively.

We then obtain inference using the following results, essentially taken from \cite{Farrell2015_arXiv}. Similar results are given by \cite{Belloni-etal2017_Ecma} and \cite{Chernozhukov-etal2018_EJ}. All of these provide high-level conditions for valid inference, and none verify these for deep nets as we do here.
\begin{theorem}
	\label{thm:ate}
	Suppose that $\{\bz_i = (y_i, t_i, \bx_i')'\}_{i=1}^n$ are i.i.d.\ obeying Assumption \ref{asmpt:ignorability} and the conditions Theorem \ref{thm:mlp rate} hold with $\beta_p \wedge \beta_\mu > d$. Further assume that, for $t \in \{0,1\}$, $\E[(s(\bX)\psi_t(\bZ))^2 \vert \bX]$ is bounded away from zero and, for some $\delta > 0$, $\E[(s(\bX)\psi_t(\bZ))^{4+\delta} \vert \bX]$ is bounded. Then the deep MLP-ReLU network estimators defined above obey the following, for $t \in \{0,1\}$,
		\begin{enumerate}
	
			\item $ \E_n[(\hat{p}(\bx_i) - p(\bx_i))^2]  = o_P(1)$ and $\E_n \left[ (\hat{\mu}_t(\bx_i) - \mu_t(\bx_i))^2\right]  = o_P(1)$,
	
			\item $ \E_n[ (\hat{\mu}_t(\bx_i) - \mu_t(\bx_i))^2]^{1/2}  \E_n[ (\hat{p}(\bx_i) - p(\bx_i))^2]^{1/2}    = o_P(n^{-1/2})$, and 
			
			\item $ \E_n[ (\hat{\mu}_t(\bx_i) - \mu_t(\bx_i)) (1 - \One\{t_i=t\}/ \P[T=t\vert \bX = \bx_i])]    = o_P(n^{-1/2})$,
	
		\end{enumerate}
	and therefore, if $\hat{p}(\bx_i)$ is bounded inside $(0,1)$, for a given $s(\bx)$ and $t \in \{0,1\}$, we have
	\[
		\sqrt{n} \E_n \left[ s(\bx_i)\hat{\psi}_t(\bz_i) - s(\bx_i)\psi_t(\bz_i) \right] = o_P(1)   		  \quad  		  \text{ and }  		  \quad  		\frac{\E_n[ (s(\bx_i)\hat{\psi}_t(\bz_i))^2]}{\E_n[ (s(\bx_i)\psi_t(\bz_i))^2]} = o_P(1),  
	\]
	as well as,
 	\[
 		\sqrt{n} \E_n \left[ \hat{\psi}_{t,t'}(\bz_i) - \psi_{t,t'}(\bz_i) \right] = o_P(1)   		  \quad  		  \text{ and }  		  \quad  		\frac{\E_n[ \hat{\psi}_{t,t'}(\bz_i)^2]}{\E_n[ \psi_{t,t'}(\bz_i)^2]} = o_P(1).  
 	\]	
\end{theorem}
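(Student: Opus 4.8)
The plan is to split the claim into two stages: first establish the three first-stage properties (a)--(c), which are statements about the deep MLP-ReLU estimators $\hat p$ and $\hat\mu_t$ alone; then assemble the two influence-function conclusions from (a)--(c) through the algebraic structure of the doubly robust moment. Throughout write $e_t(\bx) := \P[T=t\mid\bX=\bx]$ (so $e_1=p$, $e_0=1-p$) and $\hat e_t$ for its plug-in, both bounded away from zero by overlap (Assumption \ref{asmpt:ignorability}) together with the assumed containment of $\hat p$ in $(0,1)$.

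Conditions (a) and (b) follow directly from Theorem \ref{thm:mlp rate}. The propensity score solves \eqref{eqn:erm} under logistic loss \eqref{eqn:logit} and each $\hat\mu_t$ under least-squares loss \eqref{eqn:ols}; both losses obey \eqref{eqn:loss} by Lemma \ref{lem:ls and logit}, so Theorem \ref{thm:mlp rate}(b) gives, on the stated high-probability event, $\E_n[(\hat\mu_t-\mu_t)^2]\lesssim n^{-\beta_\mu/(\beta_\mu+d)}\log^8 n$ and $\E_n[(\hat p-p)^2]\lesssim n^{-\beta_p/(\beta_p+d)}\log^8 n$. Part (a) is then immediate. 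For (b), the product of the square roots is of order $n^{-\frac12[\beta_\mu/(\beta_\mu+d)+\beta_p/(\beta_p+d)]}\log^8 n$; since $x\mapsto x/(x+d)$ is increasing and $\beta_p\wedge\beta_\mu>d$, each exponent exceeds $1/2$, so their half-sum exceeds $1/2$ strictly and the polynomial gap absorbs the $\log^8 n$, yielding $o_P(n^{-1/2})$. This is exactly where the smoothness requirement $\beta_p\wedge\beta_\mu>d$ is used, pushing each first-stage rate past $n^{-1/4}$.

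Condition (c) is the crux and the step I expect to be hardest, because it must be shown \emph{without} sample splitting. The multiplier $1-\One\{T=t\}/e_t(\bX)$ has conditional mean zero given $\bX$, so $\E[(\hat\mu_t-\mu_t)(1-\One\{T=t\}/e_t)\mid\bX]=0$ and the population version of (c) vanishes; the difficulty is that $\hat\mu_t$ is fit on the same sample against which we average, so the naive conditioning argument is unavailable. I would instead bound the empirical process directly by localization: on the high-probability event of Theorem \ref{thm:mlp rate}, $\hat\mu_t$ lies in the neighborhood $\{f\in\cF_{\rm MLP}: \|f-\mu_t\|_n\le r_n\}$ with $r_n^2\asymp n^{-\beta_\mu/(\beta_\mu+d)}\log^8 n$, and it suffices to control $\sup\{|\E_n[(f-\mu_t)\,g]| : f\in\cF_{\rm MLP},\ \|f-\mu_t\|_n\le r_n\}$ for the bounded, conditionally centered multiplier $g=1-\One\{T=t\}/e_t$. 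Symmetrization together with the VC/pseudo-dimension bounds of \citet{Bartlett-etal2017_COLT} (the same complexity control used to prove Theorem \ref{thm:mlp rate}) bounds this modulus of continuity by a term of order $r_n\sqrt{WL\log W\,\log n/n}$ plus a lower-order piece; with the Theorem \ref{thm:mlp rate} choices of $W$ and $L$ this is of the same order as $r_n^2$, which is $o(n^{-1/2})$ precisely because $\beta_\mu>d$. The identical localization, applied to $\hat p$ with the conditionally centered residual $\One\{T=t\}(Y-\mu_t)$ as multiplier, controls the analogous propensity-score term appearing below.

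Finally, assemble the influence-function conclusions. Algebra on \eqref{eqn:influence hat} gives, suppressing $\bx$,
\[
\hat\psi_t-\psi_t = \underbrace{(\hat\mu_t-\mu_t)\Big(1-\tfrac{\One\{T=t\}}{e_t}\Big)}_{(A)} + \underbrace{(\hat\mu_t-\mu_t)\,\One\{T=t\}\tfrac{\hat e_t-e_t}{e_t\hat e_t}}_{(B)} + \underbrace{\One\{T=t\}(y-\mu_t)\tfrac{e_t-\hat e_t}{e_t\hat e_t}}_{(C)} .
\]
Then $\sqrt n\,\E_n(A)=o_P(1)$ by (c); $\sqrt n\,\E_n(B)=o_P(1)$ by Cauchy--Schwarz, the boundedness of $1/(e_t\hat e_t)$, and (b); and for $(C)$ one linearizes $\tfrac{e_t-\hat e_t}{e_t\hat e_t}$, handling the leading term linear in $\hat e_t-e_t$ by the propensity-score localization noted above and the quadratic remainder by (b). This yields the first displayed conclusion $\sqrt n\,\E_n[s\hat\psi_t-s\psi_t]=o_P(1)$, uniformly in the fixed weight since $s\in\{0,1\}$. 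The subpopulation statement for $\psi_{t,t'}$ follows from the same decomposition, the extra marginal factor $\hat\P[T=t']=\En[\One\{t_i=t'\}]$ being $\sqrt n$-consistent and bounded away from zero. For the variance ratio, the moment bound $\E[(s\psi_t)^{4+\delta}\mid\bX]<\infty$ delivers a law of large numbers so that $\E_n[(s\psi_t)^2]\pto\E[(s\psi_t)^2]$, which is bounded away from zero by the other moment hypothesis, while $\E_n[(s\hat\psi_t)^2]-\E_n[(s\psi_t)^2]=o_P(1)$ follows from (a) and boundedness of the weights; together these give consistency of the plug-in variance, the ratio in the second displayed conclusion converging to one.
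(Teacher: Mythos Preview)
Your proposal is correct and matches the paper's approach. Conditions (a) and (b) are read off Theorem \ref{thm:mlp rate} exactly as you do, and your localization argument for (c) is precisely the content of Lemma \ref{lem:logit}: apply the same symmetrization/contraction/VC machinery from the proof of Theorem \ref{thm:mlp rate} to the class $\{(f-\mu_t)(1-\One\{T=t\}/e_t):f\in\cF_{\rm MLP},\|f-\mu_t\|_{L_2}\le\bar r\}$, using that the multiplier is bounded and conditionally centered so the population mean vanishes and the variance scales with $\bar r^2$.

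For the influence-function conclusions the paper simply invokes \cite{Farrell2015_arXiv} once (a)--(c) hold, whereas you write out the three-term decomposition explicitly. Your decomposition is the standard one underlying that citation, and your handling of each piece is right. Your observation that term $(C)$ requires a propensity-score analogue of (c)---i.e., a second application of the Lemma \ref{lem:logit} localization with $\hat p$ in place of $\hat\mu_t$ and multiplier $\One\{T=t\}(Y-\mu_t)/e_t^2$---is accurate and is implicit in the paper's appeal to prior work; it goes through by the identical argument since $\hat p$ is also a deep MLP-ReLU estimator and the new multiplier is again bounded and conditionally mean zero given $\bX$. The paper states only the $\hat\mu_t$ version of (c), but your more careful accounting makes the additional requirement visible.
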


This result, our main inference contribution, shows exactly how deep learning delivers valid asymptotic inference for our parameters of interest. Theorem \ref{thm:mlp rate} (a generic result using Theorem \ref{thm:generic rates} could be stated) proves that the nonparametric estimates converge sufficiently fast, as formalized by conditions (a), (b), and (c), enabling feasible efficient semiparametric inference. In general, these are implied by, but may be weaker than, the requirement of that the first step estimates converge faster than $n^{-1/4}$, which our results yield for deep ReLU nets. The first is a mild consistency requirement. The second requires a rate, but on the product of the two estimates, which can be satisfied under weaker conditions. Finally, the third condition is the strongest. Intuitively, this condition arises from a ``leave-in'' type remainder, and as such, it can be weakened using sample splitting \cite{Chernozhukov-etal2018_EJ,Newey-Robins2018_WP}. We opt to maintain (c) exactly because deep nets are not amenable to either simple leave-one-out forms (as are, e.g., classical kernel regression) or to sample splitting, being a data hungry method the gain in theoretically weaker rate requirements may not be worth the price paid in constants in finite samples. Instead, we employ our localization analysis, as was used to obtain the results of Section \ref{sec:deep nets}, to verify (c) directly (see Lemma \ref{lem:logit}); this appears to be a novel application of localization, and this approach may be useful in future applications of second-step inference using machine learning methods.

From this result we immediately obtain inference for all the causal parameters discussed above. For the full-population averages, for example, we would form
\begin{align}
	\begin{split}
		\label{eqn:dr}
		\hat{\tau} & = \E_n \left[ \hat{\psi}_1(\bz_i) - \hat{\psi}_0(\bz_i) \right],   			\\
		\hat{\pi}(s) & = \E_n \left[ s(\bx_i)\hat{\psi}_1(\bz_i)  + (1-s(\bx_i))\hat{\psi}_0(\bz_i) \right],   			\\
		\hat{\pi}(s', s_0) & = \E_n \left[ [s'(\bx_i) - s_0(\bx_i)] \hat{\psi}_1(\bz_i) - [s'(\bx_i) - s_0(\bx_i)] \hat{\psi}_0(\bz_i) \right].
	\end{split}
\end{align}
The estimator $\hat{\tau}$ is exactly the doubly/locally robust estimator of the average treatment effect that is standard in the literature. The estimators for profits can be thought of as the doubly robust version of the constructs described in \cite{Hitsch-Misra2018_WP}. Furthermore, to add a per-unit cost of treatment/targeting $c$ and a margin $m$, simply replace $\psi_1$ with $m \psi_1 - c$ and $\psi_0$ with $m \psi_0$. Similarly, $\hat{\tau}_{1,0}$, $\hat{\Delta}_X$, and $\hat{\Delta}_\mu$ would be linear combinations of different $\hat{\rho}_{t,t'} = \E_n [ \hat{\psi}_{t,t'}(\bz_i) ]$.

It is immediate from Theorem \ref{thm:ate} that all such estimators are asymptotically Normal. The asymptotic variance can be estimated by simply replacing the sample first moments of \eqref{eqn:dr} with second moments. That is, looking at $\hat{\pi}(s)$ to fix ideas,
\[ \sqrt{n} \hat{\Sigma}^{-1/2}\left(\hat{\pi}(s) - \pi(s) \right)  \stackrel{d}{\to} \mathcal{N}(0,1)  ,   \quad \text{with}\quad   \hat{\Sigma} =   \E_n \left[ \left(  s(\bx_i)\hat{\psi}_1(\bz_i)  + (1-s(\bx_i))\hat{\psi}_0(\bz_i) \right)^2 \right] -  \hat{\pi}(s)^2.  \]
The others are similar. Further, Theorem \ref{thm:ate} can be generalized straightforwardly to yield uniformly valid inference, following the approach of \cite{Romano2004_SJS}, exactly as in \cite{Belloni-Chernozhukov-Hansen2014_REStud} or \cite{Farrell2015_arXiv}.

Finally, we note that our focus with Theorem \ref{thm:ate} is showcasing the practical utility of deep learning. Our use of local/double robustness here is toward the aim of attaining feasible inference without requiring more detailed assumptions on the machine learning step. This comes at the expense of, for example, stronger-than-minimal smoothness assumptions. That is, the requirement that $\beta_p \wedge \beta_\mu > d$ is not minimal, and moreover, neither is the weaker condition $\beta_p \wedge \beta_\mu > d/2$ that would be required after applying Corollary \ref{thm:optimal rate} instead of Theorem \ref{thm:mlp rate}. Obtaining a Gaussian limit, and possibly semiparametric efficiency, under minimal conditions has been studied by many, dating at least to \cite{Bickel-Ritov1988_Sankhya}; see \cite{Robins-etal2009_EJS} for recent results and references on optimal estimation and minimal conditions. For causal inference, \cite{Chen-Hong-Tarozzi2008_AoS} and \cite{Athey-Imbens-Wager2018_JRSSB} obtain semiparametric efficiency under strictly weaker conditions than ours on $p(\bx)$ (the former under minimal smoothness on $\mu_t(\bx)$ and the latter under a sparsity in a high-dimensional linear model). Further, as above, cross-fitting \citep{Newey-Robins2018_WP} paired with local robustness may yield weaker smoothness conditions by providing underfitting'' robustness (i.e.\ weakening bias-related tuning parameter assumptions). On the other hand, weaker variance-related assumptions, or ``overfitting'' robust inference procedures, \citep{Cattaneo-Jansson2018_Ecma,Cattaneo-Jansson-Ma2018_RESTUD}, may also be possible following deep learning, but are less automatic at present. Finally, other methods designed for causal inference under relaxed assumptions may be useful here, such as the recently developed extensions to doubly robust estimation \citep{tan2018model} and inverse weighting \citep{Ma-Wang2018_IPW}: pursuing these in the context of deep learning is left to future work.

%%%%%%%%%%%%%%%%%%%%%%%%%%%%%%%%%%%%%%%%%%%%%%%%%%%%%%%%%%%%%%%%%%%%%%
\subsection{Inference Under Randomization}
	\label{sec:rct}

Our analysis thus far has focused on observational data, but it is worth spelling out results for randomized experiments. This is particularly important in the Internet age, where experimentation is common, vast amounts of data are available, and effects are often small in magnitude \citep{Taddy-etal2015_WP}. Indeed, our empirical illustration, detailed in the next section, stems from an experiment with 300,000 units and hundreds of covariates.  When treatment is randomized, inference can be done directly using the mean outcomes in the treatment and control groups, such as the difference for the average treatment effect or the corresponding weighted sum for profit. However, pre-treatment covariates can be used to increase efficiency \citep{Hahn2004_REStat}.

We will focus on the simple situation of a purely randomized binary treatment, but our results can be extended naturally to other randomization schemes. We formalize this with the following.

\begin{assumption}[Randomized Treatment]
	\label{asmpt:rct} \
	$T$ is independent of $Y(0)$, $Y(1)$, and $X$, and is distributed Bernoulli with parameter $p^*$, such that $\bar{p} \leq p^* \leq 1 - \bar{p}$ for some $\bar{p} > 0$.
\end{assumption}

Under this assumption, the obvious simplification is that the propensity score need not be estimated using the covariates, but can be replaced with the (still nonparametric) sample frequency: $\hat{p}(\bx_i) \equiv \hat{p} = \En [t_i]$. This is plugged into Equation \eqref{eqn:dr} and estimation and inference proceeds as above. Only rate conditions on the regression functions $\hat{\mu}_t(x)$ are needed. Further, conditions (a) and (b) of Theorem \ref{thm:ate} collapse, as $\hat{p}$ is root-$n$ consistent, leaving only condition (c) to be verified. Again, cross-fitting can be used in theory to remove this condition and thus weaken the requirement that $\beta_\mu > d$, but we maintain this for simplicity. We collect this into the following result, which is a trivial corollary of Theorem \ref{thm:ate}. 
\begin{corollary}
	\label{cor:rct}
	Let the conditions of Theorem \ref{thm:ate} hold with Assumption \ref{asmpt:rct} in place of Assumption \ref{asmpt:ignorability} and only $\beta_\mu > d$. Then deep MLP-ReLU network estimators obey 
		\begin{enumerate}
	
			\item[{\bf (a$^\prime$)}] $\E_n \left[ (\hat{\mu}_t(\bx_i) - \mu_t(\bx_i))^2\right]  = o_P(1)$ and
			
			\item[{\bf (c$^\prime$)}] $ \E_n[ (\hat{\mu}_t(\bx_i) - \mu_t(\bx_i)) (1 - \One\{t_i=t\}/ p^*)]    = o_P(n^{-1/2})$
	
		\end{enumerate}
	and the conclusions of Theorem \ref{thm:ate} hold.
\end{corollary}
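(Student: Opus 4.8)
The plan is to verify that under Assumption~\ref{asmpt:rct} every hypothesis of Theorem~\ref{thm:ate} is met, with conditions (a)--(c) of that theorem collapsing to the two stated here, so that its conclusions transfer verbatim. The starting observation is that randomization is a special case of ignorability: since $T$ is independent of $(Y(0),Y(1),\bX)$, we have $\E[Y(t)\mid T,\bX=\bx]=\E[Y(t)\mid\bX=\bx]$ and the propensity score is the constant $p(\bx)\equiv p^*\in[\bar p,1-\bar p]$, so Assumption~\ref{asmpt:ignorability} holds. The only structural change is that the nonparametric propensity estimate is replaced by the sample frequency $\hat p=\En[t_i]$.

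The first step is to dispose of the propensity score. Because $\hat p$ is an average of i.i.d.\ bounded Bernoulli variables, the central limit theorem (or a Bernstein bound for the high-probability statement) gives $\hat p-p^*=O_P(n^{-1/2})$, hence $\En[(\hat p-p^*)^2]=(\hat p-p^*)^2=O_P(n^{-1})=o_P(1)$. This immediately supplies the propensity half of condition (a) of Theorem~\ref{thm:ate}, and it is what makes condition (b) automatic: the product $\En[(\hat\mu_t-\mu_t)^2]^{1/2}\,\En[(\hat p-p^*)^2]^{1/2}=o_P(1)\cdot O_P(n^{-1/2})=o_P(n^{-1/2})$ requires nothing beyond consistency of $\hat\mu_t$. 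This is the precise sense in which (a) and (b) ``collapse'' once treatment is randomized, and it is also why the smoothness requirement $\beta_p>d$ is no longer needed.

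It remains to supply (a$'$) and (c$'$). Condition (a$'$) is the outcome half of condition (a) and follows directly from Theorem~\ref{thm:mlp rate}: with $\beta_\mu>d$ the bound reads $\En[(\hat\mu_t-\mu_t)^2]\le C\{n^{-\beta_\mu/(\beta_\mu+d)}\log^8 n+\log\log n/n\}=o_P(1)$, and in fact $o_P(n^{-1/2})$. For (c$'$), note that under randomization $\P[T=t\mid\bX=\bx]=\P[T=t]$ is constant, so condition (c) of Theorem~\ref{thm:ate} is \emph{exactly} the displayed condition (c$'$); there is nothing to modify in its verification. The argument is the localization analysis already used for (c) (cf.\ Lemma~\ref{lem:logit}): writing $\delta=\hat\mu_t-\mu_t$ and letting $a_i$ denote the weight appearing in (c$'$), independence of $T$ from $\bX$ forces $\E[a_i\mid\bX_i]=0$, whence $\E[\delta(\bX)a]=\E[\delta(\bX)\,\E[a\mid\bX]]=0$ for any fixed $\delta$, so the population cross term vanishes and only the empirical fluctuation $\En[\delta(\bx_i)a_i]-\E[\delta a]$ must be bounded.

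The main obstacle, and the only nontrivial point, is this last bound, because $\delta=\hat\mu_t-\mu_t$ is data-dependent (the ``leave-in'' problem): $\hat\mu_t$ is fit on the same sample that carries the weights $a_i$, so one cannot treat $\delta$ as fixed and invoke a mean-zero average. The localization resolves this by controlling $\En[\delta(\bx_i)a_i]$ uniformly over the shrinking neighborhood of the ReLU-network class that contains $\hat\mu_t-\mu_t$ with high probability. Since that localized radius is governed by the rate $r_n$ of Theorem~\ref{thm:mlp rate}, the fluctuation is $o_P(n^{-1/2})$ precisely when $r_n=o(n^{-1/4})$, i.e.\ when $\beta_\mu>d$; this is where the stated smoothness requirement enters. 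With (a$'$), (c$'$), and the collapse of (a)--(b) in hand, Theorem~\ref{thm:ate} applies and its conclusions hold, completing the corollary.
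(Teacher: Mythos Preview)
Your proposal is correct and follows essentially the same route as the paper, which treats the result as a ``trivial corollary'' of Theorem~\ref{thm:ate}: the paper's one-paragraph justification notes that $\hat p$ is root-$n$ consistent so conditions (a) and (b) collapse, leaving only (c), which under randomization is exactly (c$'$) and is handled by the same localization argument (Lemma~\ref{lem:logit}). Your write-up simply fleshes out each of these steps in more detail; in particular, your identification of the ``leave-in'' issue as the only nontrivial point and its resolution via the localized radius $\bar r$ mirrors the paper's invocation of Lemma~\ref{lem:logit}.
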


%%%%%%%%%%%%%%%%%%%%%%%%%%%%%%%%%%%%%%%%%%%%%%%%%%%%%%%%%%%%%%%%%%%%%%
\subsection{Sample Splitting}
	\label{sec:splitting}

Sample splitting may be used to obtain valid inference in cases, unlike those above, where the parameter of interest itself is learned from the data. For the causal estimands above, the regression functions and propensity score must be estimated, but these are nuisance functions. This is not true in the inference after policy or feature learning (Sections \ref{sec:policy} and \ref{sec:other params}). For policy learning, our results can be used to verify the high-level conditions of \cite{Athey-Wager2018_WP}, though they require the additional condition of uniform consistency of the first stage estimators, and for machine learning estimators this is not clearly innocuous. However, this gives only point estimation. 

Sample splitting is used in the obvious way: the first subsample, or more generally, independent auxiliary data, is used to learn the features or optimal policy, and then Theorem \ref{thm:ate} is applied in the second subsample, conditional on the results of the first. For policy learning this delivers valid inference on $\pi(\hat{s})$ or $\pi(\hat{s}, s_0)$, while for the simple example of feature learning in a linear model we obtain inference on the parameters defined by the ``model'' $Y = \bm{f}(\bX)'\bm{\lambda}_1 + W'\bm{\lambda}_2 + \varepsilon$, where $\bm{f}(\bX)$ is estimated from auxiliary data. Care must be taken in interpreting the results. The results of the first-subsample estimation are effectively conditioned upon in the inference stage, redefining the target parameter to be in terms of the learned object. In many contexts this may be sufficient \citep{Chernozhukov-etal2018_generic}, but further assumptions will generally be needed to assume that the first subsample has recovered the true population object. To fix ideas, consider policy learning: inference on $\pi(\hat{s}, s_0)$, conditional on the map $\hat{s}(\bx)$ learned in the first subsample, is immediate and requires no additional assumptions, but inference on $\pi(s_\star, s_0)$ is not obvious without further conditions.

%%%%%%%%%%%%%%%%%%%%%%%%%%%%%%%%%%%%%%%%%%%%%%%%%%%%%%%%%%%%%%%%%%%%%%
%%%%%%%%%%%%%%%%%%%%%%%%%%%%%%%%%%%%%%%%%%%%%%%%%%%%%%%%%%%%%%%%%%%%%%
\section{Empirical Application}
	\label{sec:application}

To illustrate our results, Theorems \ref{thm:mlp rate} and \ref{thm:ate} in particular, we study, from a marketing point of view, a randomized experiment from a large US retailer of consumer products. The outcome of interest is consumer spending and the treatment is a catalog mailing. The firm sells directly to the customer (as opposed to via retailers) using a variety of channels such as the web and mail. The data consists of nearly three hundred thousand (292,657) consumers chosen at random from the retailer's database. Of these, 2/3 were randomly chosen to receive a catalog, and in addition to treatment status, we observe roughly one hundred fifty covariates, including demographics, past purchase behaviors, interactions with the firm, and other relevant information. For more on the data and a complete discussion of the decision making issues, we refer the reader to \cite{Hitsch-Misra2018_WP} (we use the 2015 sample). That paper studied various estimators, both traditional and modern, of average and heterogeneous causal effects. Importantly, they did not consider neural networks. Our results show that deep nets are at least as good as (and sometimes better than) the best methods in \cite{Hitsch-Misra2018_WP}.

In terms of motivation, a key element of a firm's toolkit is the design and implementation
of targeted marketing instruments. These instruments, aiming to induce demand, often contain advertising and informational content about the firms offerings. 
The targeting aspect thus boils down to the selection of which particular customers should be sent the material. This is a particularly important decision
since the costs of creation and dissemination of the material can accumulate rapidly, particularly over a large customer base. For a typical retailer engaging in direct marketing the costs of sending out a catalog 
%(a print booklet with the firms products for sale)
can be close to a dollar per targeted customer.
With millions of catalogs being sent out, the cost of a typical campaign is quite high.

Given these expenses, an important problem for firms is ascertaining
the causal effects of such targeted mailing, and then using
these effects to evaluate potential targeting strategies. At a high level, this approach is very similar to modern personalized medicine where
treatments have to be targeted. In these contexts, both the treatment and the targeting can be costly, and thus careful assessment of $\pi(s)$ (interpreted as welfare) is crucial for decision making.

The outcome of interest for the firm is customer spending. This is the
total amount of money that a given customer spends on purchases of
the firm's products, within a specified time window. For the experiment
in question the firm used a window of three months, and aggregated sales from all available purchase channels including phone, mail, and the web. In our data 6.2\% of customers made a purchase. Overall mean spending is \$7.31; average spending conditional on buying is \$117.7, with a standard deviation of \$132.44. The idea then is to examine the incremental effect that the catalog
had on this spending metric. Table \ref{tab:Summary-Statistics} presents summary statistics for the outcome and treatment. Figure \ref{fig:Spend-conditional-on} displays the complete density of spending conditional on a purchase, which is quite skewed. 

\begin{figure}[ht!]
	\centering
	\includegraphics[scale=0.7]{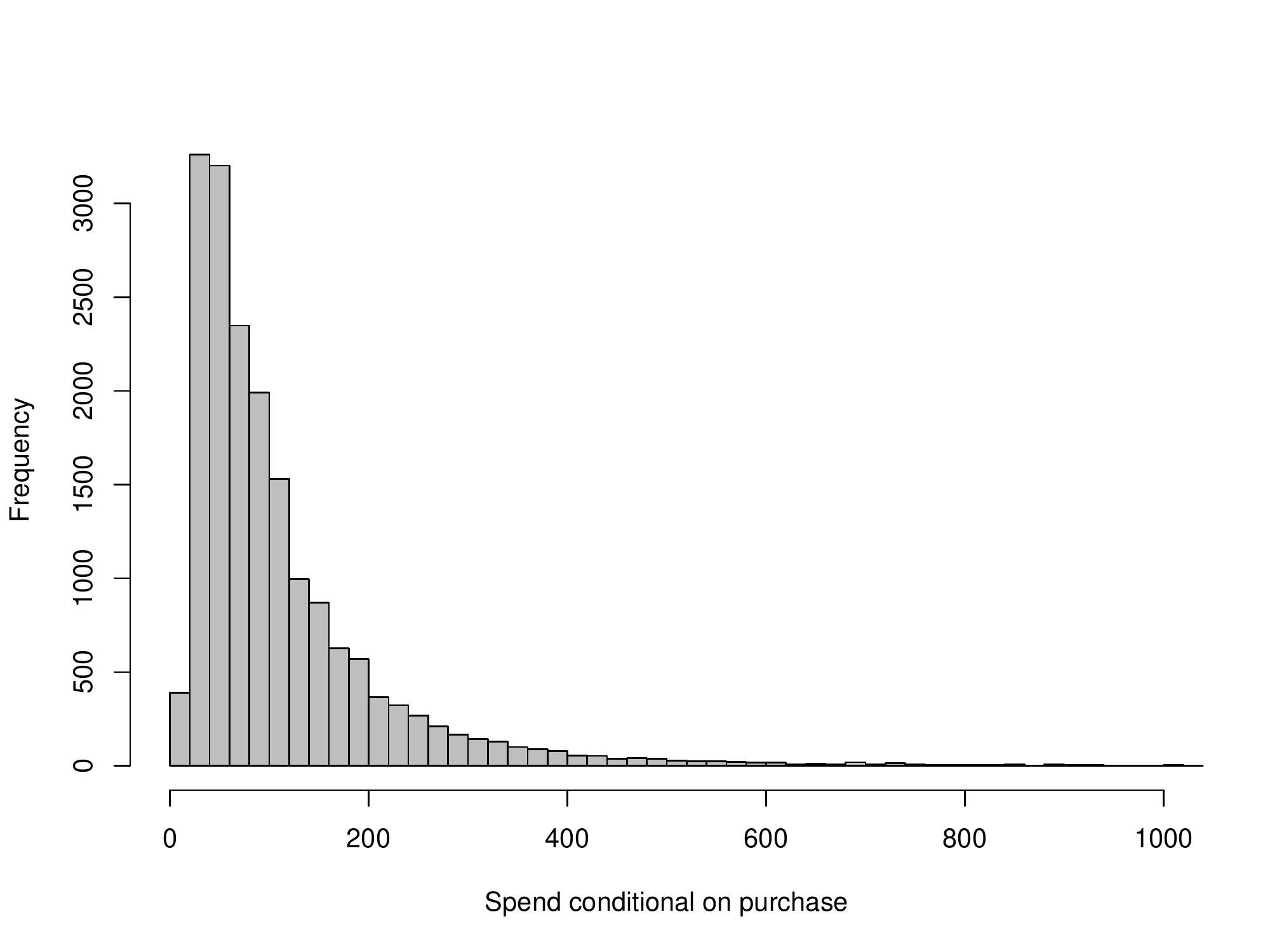}
	\caption{Spend Conditional on Purchase}
	\label{fig:Spend-conditional-on}
\end{figure}

\begin{table}
	\caption{Summary Statistics}
	\label{tab:Summary-Statistics}
	\renewcommand{\arraystretch}{1.2}
	\centering
	\begin{tabular}{rrrr} 
		\hline
		\hline 
		& Mean & SD & N \\
		\hline Purchase & 0.062 & 0.24 & 292657 \\
		Spend & 7.311 & 43.55 & 292657 \\
		Spend Conditional on Purchase & 117.730 & 132.44 & 18174 \\
		Treatment & 0.669 & 0.47 & 292657 \\ 
		Purchase $\mid$ Treatment=1 &  0.069 & 0.25 & 195821 \\
		Purchase $\mid$ Treatment=0 &  0.047 & 0.21 & 96836 \\
		Spend $\mid$ Treatment=1 &  8.158 & 44.71 & 195821 \\
		Spend $\mid$ Treatment=0 &  5.597 & 41.04 & 96836 \\
		\hline
	\end{tabular}
	\renewcommand{\arraystretch}{1}
\end{table}

%%%%%%%%%%%%%%%%%%%%%%%%%%%%%%%%%%%%%%%%%%%%%%%%%%%%%%%%%%%%%%%%%%%%%%
\subsection{Implementation Details}

We estimated deep neural nets under a variety of architecture choices.
%For our particular empirical application simpler architectures performed better than more complex ones. 
In what follows we present eight examples
and focus on one particular architecture to compute various statistics
and tests to illustrate the use of the theory developed above. All computation was done using TensorFlow\textsuperscript{\tiny TM}.

For treatment effect and profit estimation we follow Equations \eqref{eqn:influence hat} and \eqref{eqn:dr}. Because treatment is randomized, we apply Corollary \ref{cor:rct}, and thus, only require estimates of the regression functions $\mu_t(\bx) = E[Y(t) \vert \bX=\bx], \ t \in \{0,1\}$. An important implementation detail, from a computation point of view (recall Remark \ref{rem:computation}) is that we will estimate $\mu_0(\bx)$ and $\tau(\bx)$ (and thereby $\mu_1(\bx)$) \emph{jointly} (results from separate estimation are available). To be precise, recalling Equations \eqref{eqn:ols} and \eqref{eqn:erm}, we solve
\begin{align}
	\label{eqn:joint}
	\begin{pmatrix}
		\hat{\mu}_0(\bx) \\ 
		\hat{\tau}(\bx) = \hat{\mu}_1(\bx) - \hat{\mu}_0(\bx) 
	\end{pmatrix} 
	\deq \argmin_{\tilde{\mu}_0, \tilde{\tau}}   \sum_{i=1}^n \frac{1}{2} \Big( y_i - \tilde{\mu}_0(\bx_i) - \tilde{\tau}(\bx_i) t_i \Big)^2
\end{align}
where the minimization is over the relevant network architecture. Recall that, in the context of our empirical example $y_i$ is the customer's spending, $\bx_i$ are her characteristics, and $t_i$ indicates receipt of a catalog. In this format, $\mu_0(\bx_i)$ reflects base spending and $\tau(\bx) = \mu_1(\bx_i) - \mu_0(\bx_i)$ is the conditional average treatment effect of the catalog mailing. In our application, this joint estimation outperforms separately estimating each $\mu_t(\bx)$ on the respective samples (though these two approaches are equivalent theoretically).

The details of the eight deep net architectures are presented in Table \ref{tab:Estimated-Deep-Networks}. See Section \ref{sec:construction} for an introduction to the terminology and network construction. Most yielded similar results, both in terms of fit and final estimates. A key measure of fit reported in the final column of the table is the portion of $\hat{\tau}(\bx_i)$ that were negative. As argued by \citet{Hitsch-Misra2018_WP}, it is implausible under standard marketing or economic theory that receipt of a catalog causes lower purchasing. On this metric of fit, deep nets perform as well as, and sometimes better than, the best methods found by \citet{Hitsch-Misra2018_WP}: Causal KNN with Treatment Effect Projections (detailed therein) or Causal Forests \citep{Wager-Athey2018_JASA}. Figure \ref{fig:Conditional-Average-Treatment-1} shows the distribution of $\hat{\tau}(\bx_i)$ across customers for each of the eight architectures. While there are differences in the shapes of the densities, the mean and variance estimates are nonetheless quite similar.

\begin{table}[ht!]
	\renewcommand{\arraystretch}{1.2}
	\caption{\label{tab:Estimated-Deep-Networks} Deep Network Architectures}
	\begin{center}
		\resizebox{\columnwidth}{!}{
%			\begin{tabular}{rrrrrrrr}   
			\begin{tabular}{ccccccccc}   
				\hline   
				\hline
				& Learning & Widths &  Dropout &  Total & Validation & Training  & \\  
				Architecture & Rate & [$H_1$, $H_2$, ...] & [$H_1$, $H_2$, ...] & Parameters & Loss & Loss  & $\P_n[\hat{\tau}(\bx_i)<0]$ \\   
				\hline    
				1 & 0.0003 & [60] & [0.5] & 8702 & 1405.62 & 1748.91 & 0.0014 \\
				2 & 0.0003 & [100] & [0.5] & 14502 & 1406.48 & 1751.87 & 0.0251 \\
				3 & 0.0001 & [30, 20] & [0.5, 0] & 4952 & 1408.22 & 1751.20 & 0.0072 \\
				4 & 0.0009 & [30, 10] & [0.3, 0.1] & 4622 & 1408.56 & 1751.62 & 0.0138 \\
				5 & 0.0003 & [30, 30] & [0, 0]  & 5282 & 1403.57 & 1738.59 & 0.0226 \\
				6 & 0.0003 & [30, 30] & [0.5, 0]  & 5282 & 1408.57 & 1755.28 & 0.0066 \\
				7 & 0.0003 & [100, 30, 20] & [0.5, 0.5, 0]  & 17992 & 1408.62 & 1751.52 & 0.0103 \\
				8 & 0.00005 & [80, 30, 20] & [0.5, 0.5, 0]  & 14532 & 1413.70 & 1756.93 & 0.0002 \\
				\hline
			\end{tabular}
		} 
	\end{center}
	\footnotesize\textbf{Notes}: All networks use the ReLU activation function. The width of each layer is shown, e.g.\ Architecture 3 consists of two layers, with 30 and 20 hidden units respectively. The final column shows the portion of estimated individual treatment effects below zero.
\end{table}

\begin{figure}[ht!]
	\includegraphics[scale=0.7]{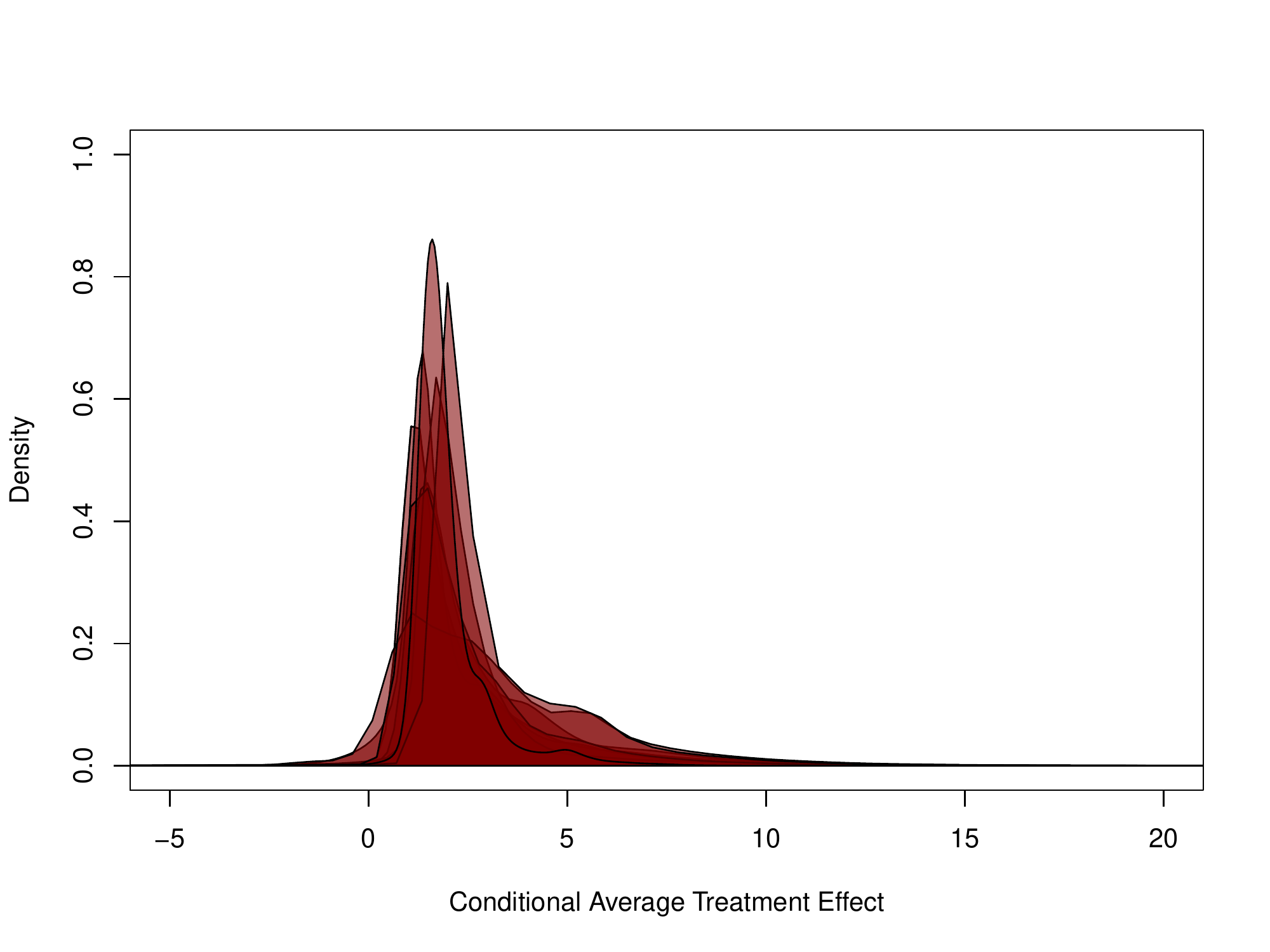}
	\centering\caption{Conditional Average Treatment Effects Across Architectures}
	\label{fig:Conditional-Average-Treatment-1}
\end{figure}

%%%%%%%%%%%%%%%%%%%%%%%%%%%%%%%%%%%%%%%%%%%%%%%%%%%%%%%%%%%%%%%%%%%%%%
\subsection{Results}

We present now results for treatment effects, utility/profits, and targeting policy evaluations. Table \ref{tab:ATE-Estimates-and} shows the estimates of the average treatment effect from the eight network architectures along with their respective $95\%$ confidence intervals. These results are constructed following Section \ref{sec:inference}, using Equations \eqref{eqn:influence hat} and \eqref{eqn:dr} in particular, and valid by Corollary \ref{cor:rct}. Because this is an experiment, we can compare to the standard unadjusted difference in means, which yields an average treatment effect of $2.561$.% and associated 95\% interval $[2.303,2.960]$.%$[2.3033,2.9599]$. 

\begin{table}[ht!]
	\caption{Average Treatment Effect Estimates and 95\% Confidence Intervals}
	\label{tab:ATE-Estimates-and}
	\centering 
	\begin{tabular}{cccc}
		\hline
		\hline
		& Average Treatment & & 95\% Confidence   \\ 
		Architecture & Effect ($\hat{\tau}$) & & Interval   \\ 	  
		\hline
		1 & 2.606 & & [2.273 , 2.932]   \\
		2 & 2.577 & & [2.252 , 2.901]   \\    
		3 & 2.547 & & [2.223 , 2.872]   \\    
		4 & 2.488 & & [2.160 , 2.817]   \\    
		5 & 2.459 & & [2.127 , 2.791]   \\    
		6 & 2.430 & & [2.093 , 2.767]   \\    
		7 & 2.400 & & [2.057 , 2.744]   \\    
		8 & 2.371 & & [2.021 , 2.721]   \\     
		\hline 
	\end{tabular}
%	\centering \begin{tabular}{rrrr}   \hline Model & ATE & 2.5\% & 97.5\% \\    \hline 1 & 2.6059 & 2.2798 & 2.9320 \\    2 & 2.5765 & 2.2518 & 2.9013 \\    3 & 2.5472 & 2.2225 & 2.8718 \\    4 & 2.4884 & 2.1600 & 2.8168 \\    5 & 2.4590 & 2.1269 & 2.7912 \\    6 & 2.4297 & 2.0925 & 2.7668 \\    7 & 2.4003 & 2.0571 & 2.7435 \\    8 & 2.3709 & 2.0205 & 2.7213 \\     \hline \end{tabular}
\end{table}

Turning to expected profits, we estimate $\pi(s) = \E \big[s(\bX)(mY(1) - c) + \left(1 - s(\bX)\right) m Y(0) \big]$, adding a profit margin $m$ and a mailing cost $c$ to \eqref{eqn:profit} (our NDA with the firm forbids revealing $m$ and $c$). We consider three different counterfactual policies $s(\bx)$: (i) \emph{never} treat, $s(\bx) \equiv 0$; (ii) a \emph{blanket} treatment, $s(\bx) \equiv 1$; (iii) a \emph{loyalty} policy, $s(\bx_i) = 1$ only for those who had purchased in the prior calendar year. 
Results are shown in Table \ref{tab:Counterfactual-profits}. It is clear that profits from the three policies are ordered as $\pi(\text{never}) < \pi(\text{blanket}) < \pi(\text{loyalty})$. 

\begin{table}%[ht!]
	\caption{Counterfactual Profits from Three Targeting Strategies}
	\label{tab:Counterfactual-profits}
	\centering
	\begin{tabular}{cccccccccc}
		\hline
		\hline
		& & \multicolumn{2}{c}{\bfseries Never Treat} & & \multicolumn{2}{c}{\bfseries Blanket Treatment} & &  \multicolumn{2}{c}{\bfseries Loyalty Policy} \\
		\cline{3-4} \cline{6-7} \cline{9-10}
		Architecture & & $\hat{\pi}(s)$ & 95\% CI & & $\hat{\pi}(s)$ & 95\% CI & & $\hat{\pi}(s)$ & 95\% CI \\
		\hline 
		1 & & 2.016 & [1.923 , 2.110] & & 2.234 & [2.162 , 2.306] & & 2.367 & [2.292 , 2.443]  \\
		2 & & 2.022 & [1.929 , 2.114] & & 2.229 & [2.157 , 2.301] & & 2.363 & [2.288 , 2.438]  \\
		3 & & 2.027 & [1.934 , 2.120] & & 2.224 & [2.152 , 2.296] & & 2.358 & [2.283 , 2.434]  \\
		4 & & 2.037 & [1.944 , 2.130] & & 2.213 & [2.140 , 2.286] & & 2.350 & [2.274 , 2.425]  \\
		5 & & 2.043 & [1.950 , 2.136] & & 2.208 & [2.135 , 2.281] & & 2.345 & [2.269 , 2.422]  \\
		6 & & 2.048 & [1.954 , 2.142] & & 2.202 & [2.128 , 2.277] & & 2.341 & [2.263 , 2.418]  \\
		7 & & 2.053 & [1.959 , 2.148] & & 2.197 & [2.122 , 2.272] & & 2.336 & [2.258 , 2.414]  \\
		8 & & 2.059 & [1.963 , 2.154] & & 2.192 & [2.116 , 2.268] & & 2.332 & [2.253 , 2.411]  \\     
		\hline
	\end{tabular}
%\centering \begin{tabular}{rrrrrrrrrr}   \hline  Architecture & Never & 2.5\% & 97.5\% & Blanket & 2.5\% & 97.5\% & Loyalty & 2.5\% & 97.5\% \\    \hline 1 & 2.0162 & 1.9231 & 2.1092 & 2.2343 & 2.1622 & 2.3064 & 2.3671 & 2.2917 & 2.4425 \\    2 & 2.0215 & 1.9286 & 2.1144 & 2.2290 & 2.1571 & 2.3010 & 2.3627 & 2.2875 & 2.4380 \\    3 & 2.0268 & 1.9339 & 2.1196 & 2.2237 & 2.1517 & 2.2957 & 2.3583 & 2.2830 & 2.4336 \\    4 & 2.0374 & 1.9442 & 2.1306 & 2.2131 & 2.1404 & 2.2857 & 2.3495 & 2.2735 & 2.4254 \\    5 & 2.0427 & 1.9491 & 2.1363 & 2.2078 & 2.1345 & 2.2810 & 2.3451 & 2.2686 & 2.4216 \\    6 & 2.0480 & 1.9539 & 2.1422 & 2.2024 & 2.1284 & 2.2765 & 2.3406 & 2.2634 & 2.4179 \\    7 & 2.0534 & 1.9585 & 2.1482 & 2.1971 & 2.1221 & 2.2721 & 2.3362 & 2.2581 & 2.4144 \\    8 & 2.0587 & 1.9630 & 2.1543 & 2.1918 & 2.1157 & 2.2679 & 2.3318 & 2.2526 & 2.4111 \\     \hline \end{tabular}
\end{table}

For both the average effects of Table \ref{tab:ATE-Estimates-and} and the counterfactuals of Table \ref{tab:Counterfactual-profits} there is broad agreement among the eight architectures both numerically and substantially. This may be due to the fact that the data is experimental, so that the propensity score is constant. In true observational data this may not be the case. We explore this issue in our Monte Carlo analysis below.

%%%%%%%%%%%%%%%%%%%%%%%%%%%%%%%%%%%%%%%%%%%%%%%%%%%%%%%%%%%%%%%%%%%%%%
\subsubsection{Placebo Experiment}

We conducted a set of placebo tests to examine whether the deep neural networks we use can truly recover causal effects.
In particular, we 
take only the untreated customers in the data and randomly assign half to treated status.\footnote{We thank Guido Imbens suggesting this analysis.} We then ran the eight architectures of Table \ref{tab:Estimated-Deep-Networks}, as in the true data. The conditional average treatment effects across the architectures
are plotted in Figure \ref{fig:placebo}. We see that the ``true'' zero average effect is recovered precisely and with the expected distribution. The average treatment effect across all models is estimated
to be around -0.024, compared to 2.56 in the original data. Exercises with
different proportions of (placebo) treated customers revealed similar
results.

\begin{figure}
	\includegraphics[scale=0.5]{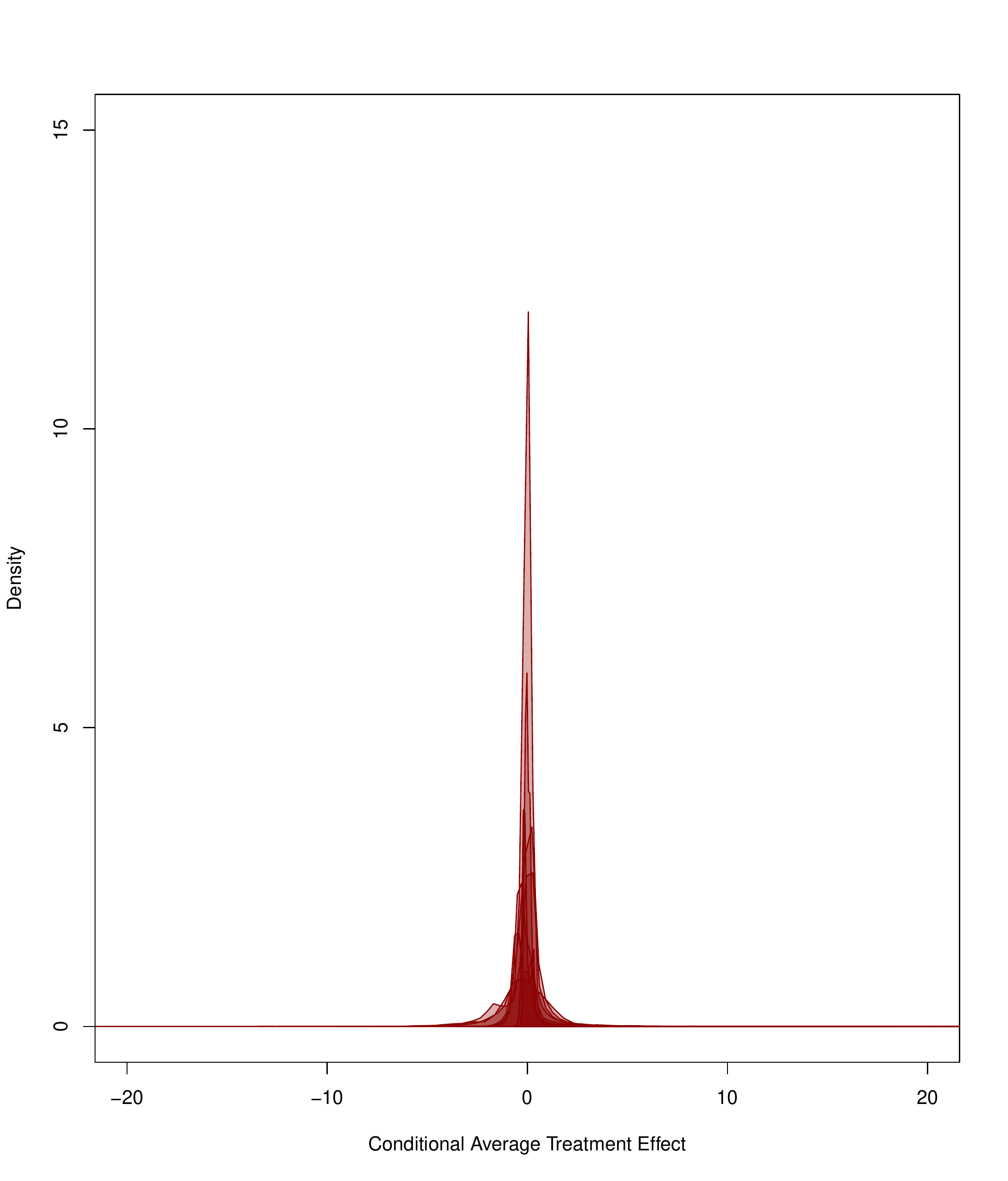}
	\centering\caption{\label{fig:placebo}Placebo Test}
\end{figure}

%%%%%%%%%%%%%%%%%%%%%%%%%%%%%%%%%%%%%%%%%%%%%%%%%%%%%%%%%%%%%%%%%%%%%%
\subsubsection{Optimal Targeting}

To explore further, we focus on architecture \#3 and study subpopulation treatment targeting strategies following the ideas of Section \ref{sec:policy}. (The other architectures yield similar results, so we omit them.) Architecture \#3 has depth $L=2$ with widths $H_1 = 30$ and $H_2 = 20$. The learning rate was set at 0.0001 and the specification had a total of 4,952 parameters. For this architecture, recalling Remark \ref{rem:regularize}, we added dropout for the second layer with a fixed probability of $1/2$. Using this architecture, we compare the blanket strategy (so $s_0(\bx)=1$) to targeting customers with spend of at least $\bar{y}$ dollars in the prior calendar year (prior spending is one of the covariates), in \$50 increments to \$1200. The policy class is therefore $\mathcal{S} = \{ s(\bx) = \One({\tt prior \ spend} > \bar{y}), \bar{y} = 0, 50, 100, \ldots, 1150, 1200\}$. Figure \ref{fig:Treatment-effects-by} presents the results. The black dots show the difference $\big\{\hat{\pi}(\text{spend}>\bar{y}) - \hat{\pi}(\text{blanket})\big\}$ and the shaded region gives a \emph{pointwise} 95\% confidence band (to ease presentation, sample splitting is not used). We see that there is a significant difference between various choices of $\bar{y}$. Initially, targeting customers with higher spend yields higher profits, as would be expected, but this effect diminishes beyond a certain $\bar{y}$, roughly \$500, as fewer and fewer are targeted. The optimal policy estimate is $\hat{s}(\bx) = \One({\tt prior \ spend} > 400)$. In general, simpler policy classes may yield better decisions, but it is certainly possible to expand our search to different $\mathcal{S}$ by considering further covariates and/or transformations.

\begin{figure}[t]
	\includegraphics[scale=0.7]{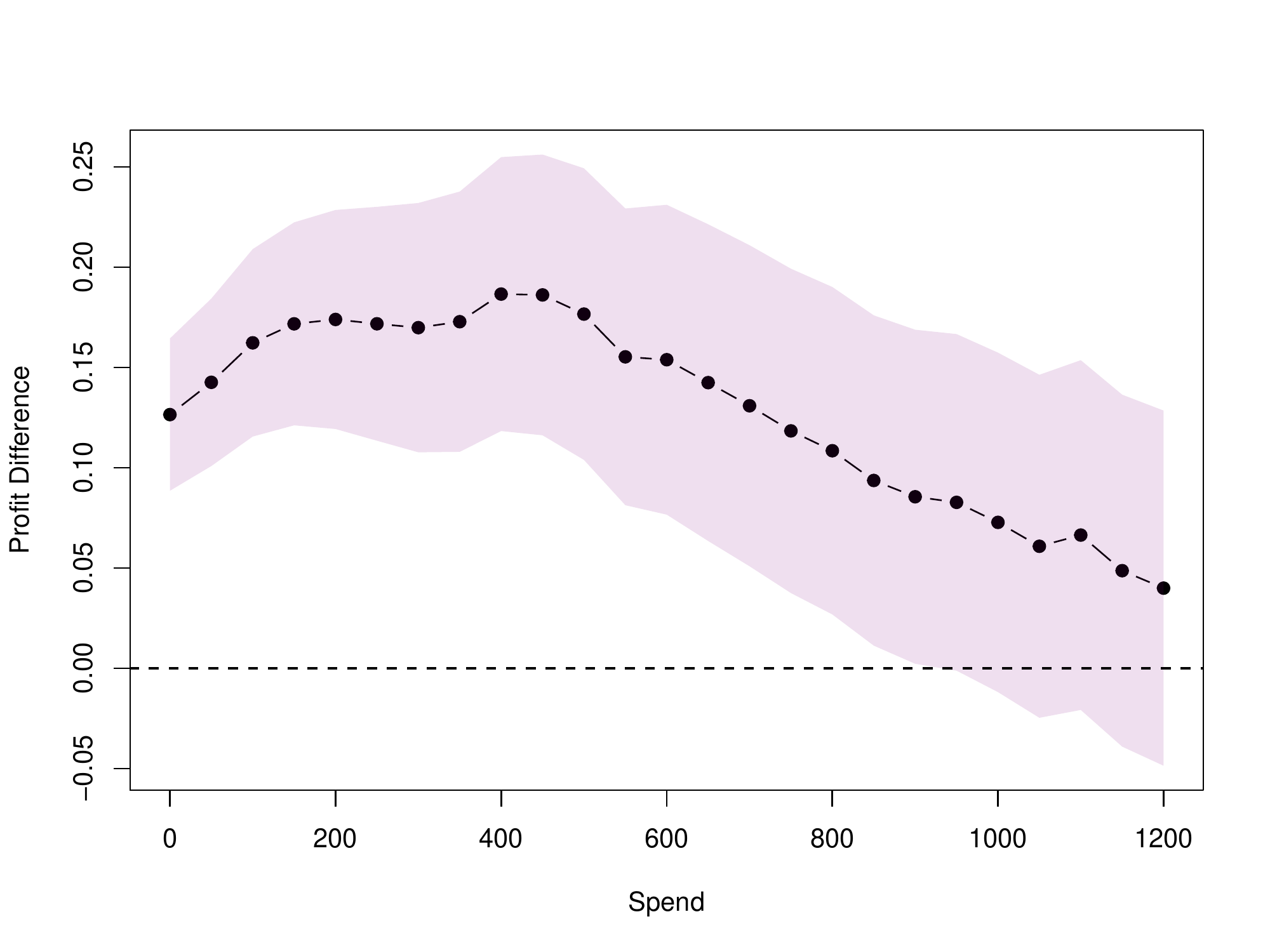}
	\centering\caption{\label{fig:Treatment-effects-by}Expected Profits from Threshold Targeting Based on Prior Year Spend}
\end{figure}

%%%%%%%%%%%%%%%%%%%%%%%%%%%%%%%%%%%%%%%%%%%%%%%%%%%%%%%%%%%%%%%%%%%%%%
%%%%%%%%%%%%%%%%%%%%%%%%%%%%%%%%%%%%%%%%%%%%%%%%%%%%%%%%%%%%%%%%%%%%%%
\section{Monte Carlo Analysis}
	\label{sec:simuls}

We conducted a set of Monte Carlo experiments to evaluate our theoretical results. We study inference on the average treatment effect, $\tau$ of \eqref{eqn:ate}, under different data generating processes (DGPs). In each DGP we take $n=10,000$ i.i.d.\ samples and use 1,000 replications. For either $d=20$ or 100, $\bX$ includes a constant term and $d$ independent uniform random variables, $\mathcal{U}(0,1)$. Treatment assignment is Bernoulli with probability $p(\bx)$, where $p(\bx)$ is the propensity score. We consider both (i) randomized treatments with $p(\bx) = 0.5$ and (ii) observational data with $p(\bx)= ( 1+\exp(-\bm{\alpha}_p'\bx) )^{-1}$, where $\alpha_{p,1} = 0.09$ and the remainder are drawn once as $\mathcal{U}(-0.55,0.55)$, and then fixed for the replications. For $d=100$, we maintain $\|\bm{\alpha}_p\|_0 = 20$ for the simplicity. These generate propensities with an approximate range of approximately $(0.30,0.75)$ and mean roughly $0.5$.

Given covariates and treatment assignment, the outcomes are generated according to
\[
	y_i = \mu_0(\bx_i) + \tau(\bx_i) t_i + \varepsilon_i  ,    \qquad         \mu_0(\bx) = \bm{\alpha}_\mu'\bx + \bm{\beta}_{\mu}'\varphi(\bx)  ,    \qquad         \tau\left(x_{i}\right)  = \bm{\alpha}_\tau'\bx + \bm{\beta}_{\tau}'\varphi(\bx),
\]
where $\varepsilon_i \sim \mathcal{N}(0,1)$ and $\varphi(\bx)$ are second-degree polynomials including pairwise interactions. For $\mu_0(\bx)$ and $\tau(\bx)$ we consider two cases, linear and nonlinear models. In both cases the intercepts are $\alpha_{\mu,1} = 0.09$ and $\alpha_{\tau,1} = -0.05$ and slopes are drawn (once) as $\alpha_{\mu,k} \sim \mathcal{N}(0.3,0.7)$ and $\alpha_{\tau,k} \sim \mathcal{U}(0.1,0.22)$, $k = 2, \ldots, d+1$. The linear models set $\bm{\beta}_{\mu} = \bm{\beta}_{\tau} = \bm{0}$ while the nonlinear models take $\beta_{\mu,k} \sim \mathcal{N}(0.01,0.3)$ and $\beta_{\tau,k} \sim \mathcal{U}(-0.05,0.06)$. Altogether, this yields eight designs: $d=20$ or 100, $p(\bx)$ constant or not, and outcome models linear or nonlinear.

For each design, we consider a variety of network architectures, all ReLU-based MLPs. These architectures are variants of the ones used in the empirical application (which were customized for the application). All networks vary in their depth and width, as spelled out in Table \ref{tab:MonteCarlo-Networks}.

\begin{table}
	\caption{\label{tab:MonteCarlo-Networks}Monte Carlo Architectures Explored}
	
	\centering{}%
	\begin{tabular}{cc}
		\hline 
		Architecture & Structure\tabularnewline
		\hline 
		%1 & \{20\}\tabularnewline
		%2 & \{60\}\tabularnewline
		%3 & \{80\}\tabularnewline
		1 & \{20, 15, 5\}\tabularnewline
		2 & \{60, 30, 20\}\tabularnewline
		3 & \{80, 80, 80\}\tabularnewline
		4 & \{20, 15, 10, 5\}\tabularnewline
		5 & \{60, 30, 20, 10\}\tabularnewline
		6 & \{80, 80, 80, 80\}\tabularnewline
		7 & \{20, 15, 15, 10, 10, 5\}\tabularnewline
		8 & \{60, 30, 20, 20, 10, 5\}\tabularnewline
		9 & \{80, 80, 80, 80, 80, 80\}\tabularnewline
		\hline 
	\end{tabular}
\end{table}

Tables \ref{tab:simuls1} and \ref{tab:simuls2} show the results for all eight DGPs. Table \ref{tab:simuls1} shows randomized treatment while Table \ref{tab:simuls2} shows results mimicking observational data. Overall, the results reported show excellent performance of deep learning based semiparametric inference. The bias is minimal and the coverage is quite accurate, while the interval length is under control. Notice that the most architectures yield similar results with no architecture dominating the others. Further, the coverage and interval length are fairly similar with the more complex architecture not exhibiting any systematic patterns of length inflation. 

\begin{table}[h]
	\begin{centering}
		\caption{Simulations Results - Constant Propensity Score\label{tab:simuls1}}
		~
		\par
	\end{centering}
	\begin{onehalfspace}
		\centering{}%
		\begin{tabular}{ccccccccc}
			\multicolumn{9}{c}{}\tabularnewline
			\hline 
			\hline 
			\multirow{2}{*}{Model} & \multirow{2}{*}{Architecture} & \multicolumn{3}{c}{20 Covariates} &  & \multicolumn{3}{c}{100 Covariates}\tabularnewline
			\cline{3-5} \cline{7-9} 
			 &  & Bias & IL & Coverage &  & Bias & IL & Coverage\tabularnewline
			\hline 
			%\multirow{12}{*}{\textit{Linear}} & 1 & 0.00011 & 0.079 & 0.948 &  & -0.00073 & 0.080 & 0.950\tabularnewline
			% & 2 & 0.00059 & 0.079 & 0.951 &  & -0.00046 & 0.080 & 0.944\tabularnewline
			% & 3 & -0.00001 & 0.079 & 0.943 &  & -0.00099 & 0.080 & 0.943\tabularnewline
			\multirow{9}{*}{\textit{Linear}}  & 1 & 0.00027 & 0.079 & 0.947 &  & 0.00067 & 0.080 & 0.946\tabularnewline
			 & 2 & -0.00032 & 0.079 & 0.951 &  & 0.00012 & 0.080 & 0.958\tabularnewline
			 & 3 & -0.00025 & 0.079 & 0.955 &  & -0.00167 & 0.080 & 0.939\tabularnewline
			 & 4 & -0.00068 & 0.079 & 0.949 &  & 0.00038 & 0.080 & 0.949\tabularnewline
			 & 5 & 0.00008 & 0.079 & 0.945 &  & -0.00219 & 0.080 & 0.929\tabularnewline
			 & 6 & 0.00007 & 0.079 & 0.955 &  & -0.00010 & 0.080 & 0.946\tabularnewline
			 & 7 & 0.00128 & 0.079 & 0.952 &  & -0.00041 & 0.080 & 0.944\tabularnewline
			 & 8 & 0.00108 & 0.079 & 0.949 &  & -0.00088 & 0.080 & 0.941\tabularnewline
			 & 9 & 0.00021 & 0.078 & 0.948 &  & -0.00080 & 0.081 & 0.953\tabularnewline
			\hline 
			%\multirow{12}{*}{\textit{Nonlinear}} & 1 & -0.00115 & 0.081 & 0.941 &  & 0.00217 & 0.165 & 0.933\tabularnewline
			% & 2 & 0.00027 & 0.080 & 0.944 &  & -0.00256 & 0.148 & 0.928\tabularnewline
			% & 3 & -0.00006 & 0.080 & 0.948 &  & 0.00106 & 0.148 & 0.918\tabularnewline
			\multirow{9}{*}{\textit{Nonlinear}}  & 1 & 0.00087 & 0.081 & 0.946 &  & -0.00067 & 0.163 & 0.940\tabularnewline
			 & 2 & 0.00015 & 0.079 & 0.954 &  & 0.00093 & 0.153 & 0.927\tabularnewline
			 & 3 & -0.00072 & 0.079 & 0.940 &  & 0.00245 & 0.148 & 0.926\tabularnewline
			 & 4 & 0.00101 & 0.080 & 0.945 &  & -0.00087 & 0.165 & 0.956\tabularnewline
			 & 5 & 0.00027 & 0.079 & 0.935 &  & -0.00190 & 0.154 & 0.923\tabularnewline
			 & 6 & -0.00025 & 0.079 & 0.929 &  & -0.00117 & 0.146 & 0.902\tabularnewline
			 & 7 & -0.00052 & 0.080 & 0.947 &  & 0.00091 & 0.165 & 0.941\tabularnewline
			 & 8 & 0.00077 & 0.079 & 0.938 &  & 0.00201 & 0.153 & 0.927\tabularnewline
			 & 9 & -0.00013 & 0.079 & 0.940 &  & 0.00049 & 0.154 & 0.936\tabularnewline
			\hline 
			\hline 
			\multicolumn{9}{c}{}\tabularnewline
		\end{tabular}
	\end{onehalfspace}
\end{table}

\begin{table}[h]
	\begin{centering}
	\caption{Simulations Results - Non-constant Propensity Score \label{tab:simuls2}}
	~
	\par\end{centering}
	\begin{onehalfspace}
		\centering{}%
		\begin{tabular}{ccccccccc}
			\multicolumn{9}{c}{}\tabularnewline
			\hline 
			\hline 
			\multirow{2}{*}{Model} & \multirow{2}{*}{Architecture} & \multicolumn{3}{c}{20 Covariates} &  & \multicolumn{3}{c}{100 Covariates}\tabularnewline
			\cline{3-5} \cline{7-9} 
			 &  & Bias & IL & Coverage &  & Bias & IL & Coverage\tabularnewline
			\hline 
%			\multirow{12}{*}{\textit{Linear}} & 1 & -0.00088 & 0.080 & 0.947 &  & -0.0007 & 0.081 & 0.959\tabularnewline
%			 & 2 & -0.00118 & 0.079 & 0.956 &  & -0.0012 & 0.081 & 0.952\tabularnewline
%			 & 3 & -0.00033 & 0.079 & 0.952 &  & 0.0010 & 0.081 & 0.942\tabularnewline
			\multirow{9}{*}{\textit{Linear}}  & 1 & -0.00202 & 0.080 & 0.948 &  & 0.0009 & 0.081 & 0.955\tabularnewline
			 & 2 & 0.00011 & 0.079 & 0.946 &  & 0.0007 & 0.081 & 0.945\tabularnewline
			 & 3 & -0.00130 & 0.079 & 0.964 &  & -0.0001 & 0.081 & 0.937\tabularnewline
			 & 4 & -0.00106 & 0.079 & 0.945 &  & 0.0002 & 0.081 & 0.933\tabularnewline
			 & 5 & -0.00083 & 0.079 & 0.951 &  & -0.0004 & 0.081 & 0.944\tabularnewline
			 & 6 & -0.00068 & 0.079 & 0.955 &  & 0.0001 & 0.081 & 0.924\tabularnewline
			 & 7 & -0.00119 & 0.079 & 0.953 &  & -0.0001 & 0.081 & 0.942\tabularnewline
			 & 8 & -0.00056 & 0.079 & 0.952 &  & -0.0008 & 0.081 & 0.939\tabularnewline
			 & 9 & -0.00096 & 0.079 & 0.948 &  & -0.0007 & 0.081 & 0.952\tabularnewline
			\hline 
%			\multirow{12}{*}{\textit{Nonlinear}} & 1 & -0.00307 & 0.082 & 0.938 &  & -0.00075 & 0.166 & 0.932\tabularnewline
%			 & 2 & -0.00149 & 0.080 & 0.937 &  & -0.00189 & 0.149 & 0.919\tabularnewline
%			 & 3 & -0.00187 & 0.081 & 0.941 &  & 0.00099 & 0.149 & 0.907\tabularnewline
			\multirow{9}{*}{\textit{Linear}}  & 1 & -0.00076 & 0.081 & 0.946 &  & -0.00279 & 0.164 & 0.937\tabularnewline
			 & 2 & -0.00122 & 0.080 & 0.939 &  & 0.00020 & 0.155 & 0.941\tabularnewline
			 & 3 & -0.00074 & 0.080 & 0.926 &  & -0.00080 & 0.148 & 0.914\tabularnewline
			 & 4 & -0.00171 & 0.081 & 0.940 &  & -0.00184 & 0.166 & 0.938\tabularnewline
			 & 5 & -0.00135 & 0.080 & 0.952 &  & -0.00103 & 0.154 & 0.912\tabularnewline
			 & 6 & -0.00075 & 0.080 & 0.950 &  & -0.00174 & 0.147 & 0.905\tabularnewline
			 & 7 & -0.00153 & 0.081 & 0.928 &  & -0.00377 & 0.165 & 0.929\tabularnewline
			 & 8 & 0.00082 & 0.080 & 0.953 &  & 0.00031 & 0.154 & 0.919\tabularnewline
			 & 9 & -0.00127 & 0.080 & 0.931 &  & -0.00094 & 0.156 & 0.917\tabularnewline
			\hline 
			\hline 
			\multicolumn{9}{c}{}\tabularnewline
		\end{tabular}
	\end{onehalfspace}
\end{table}

None of the architectures we presented earlier used regularization. In typical empirical applications, including our own, researchers adopt architectures that employ dropout, a common method of regularization; see Remark \ref{rem:regularize}. Our own preliminary exploration of dropout and other forms of regularization found expected departures form nonregularized models. In most, but not all, cases the coverage remained accurate, but with increased bias and interval length compared to Table \ref{tab:simuls1} and \ref{tab:simuls2}. The results preach caution when applying regularization in applications.

%%%%%%%%%%%%%%%%%%%%%%%%%%%%%%%%%%%%%%%%%%%%%%%%%%%%%%%%%%%%%%%%%%%%%%
%%%%%%%%%%%%%%%%%%%%%%%%%%%%%%%%%%%%%%%%%%%%%%%%%%%%%%%%%%%%%%%%%%%%%%
\section{Conclusion}
	\label{sec:conclusion}

The utility of deep learning in social science applications is still a subject of interest and debate. While there is an acknowledgment of its predictive power, there has been limited adoption of deep learning in social sciences such as economics. Some part of the reluctance to adopting these methods stems from the lack of theory facilitating use and interpretation. We have shown, both theoretically as well as empirically, that these methods can offer excellent performance.

In this paper, we have given a formal proof that inference can be valid after using deep learning methods for first-step estimation. To the best of our knowledge, ours is the first inference result using deep nets. Our results thus contribute directly to the recent explosion in both theoretical and applied research using machine learning methods in economics, and to the recent adoption of deep learning in empirical settings. We obtained novel bounds for deep neural networks, speaking directly to the modern (and empirically successful) practice of using fully-connected feedfoward networks. Our results allow for different network architectures, including fixed width, very deep networks. Our results cover general nonparametric regression-type loss functions, covering most nonparametric practice. We used our bounds to deliver fast convergence rates allowing for second-stage inference on a finite-dimensional parameter of interest.

There are practical implications of the theory presented in this paper. We focused on semiparametric causal effects as a concrete illustration, but deep learning is a potentially valuable tool in many diverse economic settings. Our results allow researchers to embed deep learning into standard econometric models such as linear regressions, generalized linear models, and other forms of limited dependent variables models (e.g. censored regression). Our theory can also be used as a starting point for constructing deep learning implementations of two-step estimators in the context of selection models, dynamic discrete choice, and the estimation of games.

To be clear, we see our paper as a first step in the exploration of deep learning as a tool for economic applications. There are a number of opportunities, questions, and challenges that remain. For example, factor models in finance might benefit from the use of auto-encoders and recurrent neural nets may have applications in time series. For some estimands, it may be crucial to estimate the density as well, and this problem can be challenging in high dimensions. Deep nets, in the form of GANs are a promising tool for distribution estimation. There are also interesting questions remaining as to an optimal network architecture, and if this can be itself learned from the data, as well as computational and optimization guidance. Research into these further applications and structures is underway.

%%%%%%%%%%%%%%%%%%%%%%%%%%%%%%%%%%%%%%%%%%%%%%%%%%%%%%%%%%%%%%%%%%%%%%
%%%%%%%%%%%%%%%%%%%%%%%%%%%%%%%%%%%%%%%%%%%%%%%%%%%%%%%%%%%%%%%%%%%%%%
%%%%%%%%%%%%%%%%%%%%%%%%%%%%%%%%%%%%%%%%%%%%%%%%%%%%%%%%%%%%%%%%%%%%%%
%\newpage
\section{References}
\singlespacing
\begingroup
\renewcommand{\section}[2]{}	%this removes the standard 'References' header that BibTeX puts in.
\bibliography{Farrell-Liang-Misra2019_deepNetsATE--Bibliography}{}
\bibliographystyle{ecta}
\endgroup

%%%%%%%%%%%%%%%%%%%%%%%%%%%%%%%%%%%%%%%%%%%%%%%%%%%%%%%%%%%%%%%%%%%%%%
%%%%%%%%%%%%%%%%%%%%%%%%%%%%%%%%%%%%%%%%%%%%%%%%%%%%%%%%%%%%%%%%%%%%%%
%%%%%%%%%%%%%%%%%%%%%%%%%%%%%%%%%%%%%%%%%%%%%%%%%%%%%%%%%%%%%%%%%%%%%%
%\newpage
\onehalfspacing
\begin{appendices}

%%%%%%%%%%%%%%%%%%%%%%%%%%%%%%%%%%%%%%%%%%%%%%%%%%%%%%%%%%%%%%%%%%%%%%
%%%%%%%%%%%%%%%%%%%%%%%%%%%%%%%%%%%%%%%%%%%%%%%%%%%%%%%%%%%%%%%%%%%%%%
\section{Proofs}
	\label{appx:proof}

In this section we provide a proof of Theorems \ref{thm:mlp rate} and \ref{thm:generic rates}, our main theoretical results for deep ReLU networks, and their corollaries. The proof proceeds in several steps. We first give the main breakdown and bound the bias (approximation error) term. We then turn our attention to the empirical process term, to which we apply our localization. Much of the proof uses a generic architecture, and thus pertains to both results. We will specialize the architecture to the multi-layer perceptron only when needed later on. Other special cases and related results are covered in Section \ref{appx:coro}. Supporting Lemmas are stated in Section \ref{appx:lemmas}. 

The statements of Theorems \ref{thm:mlp rate} and \ref{thm:generic rates} assume that $n$ is large enough. Precisely, we require $n > (2eM)^2  \vee {\rm Pdim}(\cF_{\rm DNN})$. For notational simplicity we will denote $\widehat{f}_{\rm DNN} \deq \hat{f}$, see \eqref{eqn:erm}, and $\epsilon_{\rm DNN} \deq \epsilon_n$, see Assumption \ref{asmpt:bias}. As we are simultaneously consider Theorems \ref{thm:mlp rate} and \ref{thm:generic rates}, the generic notation ${\rm DNN}$ will be used throughout.

%%%%%%%%%%%%%%%%%%%%%%%%%%%%%%%%%%%%%%%%%%%%%%%%%%%%%%%%%%%%%%%%%%%%%%
\subsection{Main Decomposition and Bias Term}
	\label{appx:main-steps}
	
	Referring to Assumption \ref{asmpt:bias}, define the best approximation realized by the deep ReLU network class $\cF_{\rm DNN}$ as 
	\[
		f_n \deq \argmin_{\substack{f \in \cF_{\rm DNN} \\ \| f \|_\infty \leq 2M}} \| f - f_* \|_{\infty}.
	\]
	By definition, $\epsilon_n \deq \epsilon_{\rm DNN} \deq \| f_n - f_* \|_\infty$.

	Recalling the optimality of the estimator in \eqref{eqn:erm}, we know, as both $f_n$ and $\hat{f}$ are in $\cF_{\rm DNN}$, that
	\[
		 - \E_n [\ell(\hat{f}, \bz)] + \E_n[ \ell(f_n, \bz)]  \geq 0.
	\]
	This result does not hold for $f_*$ in place of $f_n$, because $f_* \not\in \cF_{\rm DNN}$. Using the above display and the curvature of Equation \eqref{eqn:loss} (which does not hold with $f_n$ in place of $f_*$ therein), we obtain
	\begin{align}
		c_1 \| \hat{f} - f_* \|_{L_2(\bx)}^2 & \leq \E [\ell(\hat{f}, \bz)] - \E [\ell(f_*, \bz)] \nonumber \\
		& \leq \E [\ell(\hat{f}, \bz)] - \E [\ell(f_*, \bz)] - \E_n [\ell(\hat{f}, \bz)] + \E_n [\ell(f_n, \bz)] \nonumber \\
		& = \E \left[ \ell(\hat{f}, \bz) - \ell(f_*, \bz) \right] - \E_n \left[\ell(\hat{f}, \bz) -  \ell(f_*, \bz) \right] + \E_n \left[ \ell(f_n, \bz) - \ell(f_*, \bz) \right] \nonumber \\
		& = \left( \E - \E_n \right)\left[ \ell(\hat{f}, \bz) - \ell(f_*, \bz) \right] + \E_n \left[ \ell(f_n, \bz) - \ell(f_*, \bz) \right]. \label{eqn:decomposition}
	\end{align}

	Equation \eqref{eqn:decomposition} is the main decomposition that begins the proof. The decomposition must be done this way because of the above notes regarding $f_*$ and $f_n$.  The first term is the empirical process term that will be treated in the subsequent subsection. For the second term in \eqref{eqn:decomposition}, the bias term or approximation error, we apply Bernstein's inequality to find that, with probability at least $1 - e^{-\gt}$,
	\begin{align}
		\E_n \left[ \ell(f_n, \bz) - \ell(f_*, \bz) \right] &\leq \E \left[ \ell(f_n, \bz) - \ell(f_*, \bz) \right] + \sqrt{\frac{2C_\ell^2 \| f_n - f_*\|_\infty^2 \gt }{n}} + \frac{21C_\ell M \gt}{3n}   		\nonumber \\
		& \leq c_2 \E \left[ \| f_n - f_* \|^2 \right] + \sqrt{\frac{2C_\ell^2 \| f_n - f_*\|_\infty^2 \gt }{n}} + \frac{7 C_\ell M \gt}{n}   		\nonumber \\
		& \leq c_2 \epsilon_n^2 + \epsilon_n \sqrt{\frac{2C_\ell^2 \gt }{n}} + \frac{7 C_\ell M \gt}{n},   		\label{eqn:bias}
	\end{align}
	using the Lipschitz and curvature of the loss function defined in Equation \eqref{eqn:loss} and $\E \left[\| f_n - f_* \|^2 \right]\leq \| f_n - f_*\|_\infty^2$, along with the definition of $\epsilon_n^2$.

	Once the empirical process term is controlled (in Section \ref{appx:localization}), the two bounds will be brought back together to compute the final result, see Section \ref{appx:mlp-last-steps}.

%%%%%%%%%%%%%%%%%%%%%%%%%%%%%%%%%%%%%%%%%%%%%%%%%%%%%%%%%%%%%%%%%%%%%%
\subsection{Localization Analysis}
	\label{appx:localization}

We now turn to bounding the first term in \eqref{eqn:decomposition} (the empirical processes term) using a localized analysis that derives bounds based on scale insensitive complexity measure. The ideas of our localization are rooted in \cite{koltchinskii2000rademacher} and \cite{bartlett2005local}, and related to \cite{Koltchinskii2011_book}. Localization analysis extending to the unbounded $f$ case has been developed in \cite{mendelson2014learning, liang2015learning}. This proof section proceeds in several steps. 
	
	% The key idea is to use a localized analysis, with the following intuition: if one can show the empirical processes term with a prior knowledge that $\|\hat{f} - f_*\|_{L_2(\bx)} \leq r_0$, then with a scale-sensitive bound on the empirical processes term, this translates to a new upper bound on $\| \hat{f} - f_* \|_{L_2(\bx)}$ through \eqref{eqn:decomposition}, say $r_1$ that is better than $r_0$. One can recursively improve $r$ to a fixed point/radius $r_*$, which describes the fundamental difficulty of the problem. This idea is rooted from a fruitful line of work by \cite{koltchinskii2000rademacher, bartlett2005local}.
	
A key quantity is the Rademacher complexity of the function class at hand. Given i.i.d.\ Rademacher draws, $\eta_i = \pm 1$ with equal probability independent of the data, the random variable $R_n \cF$, for a function class $\cF$, is defined as
\[
	R_n \cF : = \sup_{f \in \cF} \frac{1}{n}\sum_{i=1}^n \eta_i f(\bx_i).
\]
Intuitively, $R_n \cF$ measures how flexible the function class is for predicting random signs. Taking the expectation of $R_n \cF$ conditioned on the data we obtain the \emph{empirical Rademacher complexity}, denoted $\E_\eta [R_n \cF]$. When the expectation is taken over both the data and the draws $\eta_i$, $\E R_n \cF$, we get the \emph{Rademacher complexity}.

\subsubsection{Step I: Quadratic Process} 
	\label{appx:quadratic}

The first step is to show that, with high probability, the empirical $L_2$ norm of the error $(f - f_*)$ is at most twice the population $L_2$ norm bound for the same error, for certain functions $f$ outside a certain critical radius. This will be an ingredient to be used later on. Denote $\| f \|_n := \left( \frac{1}{n} \sum_{i=1}^n f(\bx_i)^2 \right)^{1/2}$ to be the empirical $L_2$ norm. To do so, we study the quadratic process
\begin{align*}
	\| f - f_* \|_n^2 - \| f - f_* \|_{L_2(\bx)}^2 = \E_n (f - f_* )^2 - \E (f - f_* )^2.
\end{align*}

We will apply the symmetrization of Lemma \ref{lem:symmetrization} to $g = ( f - f_* )^2$ restricted to a radius $\|f - f_* \|_{L_2(\bx)} \leq r$. This function $g$ has variance bounded as 
\[
	\V[g]  \leq \E[g^2] \leq \E (( f - f_* )^4) \leq 9M^2 r^2.
\]
Writing $g = ( f + f_* ) ( f - f_* )$, we see that by Assumption \ref{asmpt:dgp}, $|g| \leq  3M | f - f_*| \leq 9M^2$, where the first inequality verifies that $g$ has a Lipschitz constant of $3M$ (when viewed as a function of its argument $f$), and second that $g$ itself is bounded. We therefore apply Lemma \ref{lem:symmetrization}, to obtain, with probability at least $1 - \exp(-\gt)$, that for any $f \in \cF$ with $\|f - f_* \|_{L_2(\bx)} \leq r$,
\begin{align}
	& \E_n (f - f_* )^2  - \E (f - f_* )^2    		\nonumber \\
	& \qquad  \leq 3 \E R_n \{ g = ( f - f_* )^2: f \in \cF, \|f - f_* \|_{L_2(\bx)} \leq r \} + 3 M r \sqrt{\frac{2 \gt}{n}} + \frac{36 M^2}{3} \frac{\gt}{n}    		\nonumber \\
	& \qquad  \leq  18 M \E R_n \{ f - f_*: f \in \cF, \|f - f_* \|_{L_2(\bx)} \leq r \} + 3Mr \sqrt{\frac{2 \gt}{n}} + \frac{12 M^2 \gt}{n},      \label{eqn:quad-base}
\end{align}%The factor of 18 M comes from (3 from the line above)*(2 from the contraction lemma)*(3M the Lipschitz constant)
where the second inequality applies Lemma \ref{lem:contraction} to the Lipschitz functions $\{g\}$ (as a function of the real values $f(\bx)$) and iterated expectations.

Suppose the radius $r$ satisfies 
\begin{align}
	r^2 &\geq 18M \E R_n \{ f - f_*: f \in \cF, \|f - f_* \|_{L_2(\bx)} \leq r \}  \label{eqn:cr1}
\intertext{and}
	r^2 &\geq \frac{6\sqrt{6} M^2 \gt}{n}.     \label{eqn:cr2}
\end{align}  %6\sqrt{6} = 3*2*sqrt(2)*sqrt(3), which comes from the second to last term of {eqn:quad-base}
Then we conclude from from \eqref{eqn:quad-base} that
\begin{align}
	\E_n (f - f_* )^2 & \leq r^2 + r^2 + 3Mr \sqrt{\frac{2 \gt}{n}} + \frac{12 M^2\gt}{n}   \leq (2r)^2     \label{eqn:radius-factor-2}
\end{align}
where the first inequality uses \eqref{eqn:cr1} and the second line uses \eqref{eqn:cr2}. This means that for $r$ above the ``critical radius'' (see {\bf Step III}), the empirical $L_2$-norm is at most twice the population one with probability at least $1 - \exp(-\gt)$.

\subsubsection{Step II: One Step Improvement} 
	\label{appx:one-step}

In this step we will show that given a bound on $\| \hat{f} - f_* \|_{L_2(\bx)}$ we can use this bound as information to obtain a tighter bound, if the initial bound is loose as made precise at the end of this step. This tightening will then be pursued to its limit in {\bf Step III}, which leads to the final rate obtained in {\bf Step IV}. {\bf Step I} will be used herein. 

Suppose we know that for some $r_0$, $\| \hat{f} - f_* \|_{L_2(\bx)} \leq r_0$. We may always start with $r_0 = 3M$ given Assumption \ref{asmpt:dgp} and \eqref{eqn:erm}. Apply Lemma \ref{lem:symmetrization} with $\cG \deq \{g = \ell(f, \bz) - \ell(f_*, \bz) : f \in  \cF_{\rm DNN}, \| f - f_* \|_{L_2(\bx)} \leq r_0 \}$, we find that, with probability at least $1-2e^{-\gt}$, the empirical process term of \eqref{eqn:decomposition} is bounded as
\begin{equation}
	\label{eqn:apply-sym-one-step}
	\left( \E - \E_n \right)\left[ \ell(\hat{f}, \bz) - \ell(f_*, \bz) \right]
	\leq 6 \E_\eta R_n \cG  + \sqrt{\frac{2 C_\ell^2 r_0^2 \gt}{n}} + \frac{23 \cdot 3 M C_\ell}{3} \frac{\gt}{n},
\end{equation}
where the middle term is due to the following variance calculation (recall Equation \eqref{eqn:loss})
\begin{align*}
	\V[g] & \leq \E[g^2] = \E[  |\ell(f, \bz) - \ell(f_*, \bz) |^2] \leq C_\ell^2 \E (f - f_*)^2 \leq C_\ell^2 r_0^2 
\end{align*}
Here the fact that Lemma \ref{lem:symmetrization} is variance dependent, and that the variance depends on the radius $r_0$, is important. It is this property which enables a sharpening of the rate with step-by-step reductions in the variance bound, as in Section \ref{appx:shells}.

For the empirical Rademacher complexity term, the first term of \eqref{eqn:apply-sym-one-step}, Lemma \ref{lem:contraction}, {\bf Step I}, and Lemma \ref{lem:dudley-chaining}, yield
\begin{align*}
	 \E_\eta R_n \cG &=\E_\eta R_n \{g : g = \ell(f, \bz) - \ell(f_*, \bz), f \in \cF_{\rm DNN}, \| f -f_* \| \leq r_0\} \\
	& \leq 2 C_\ell \E_\eta R_n \{f - f_* :  f \in \cF_{\rm DNN}, \| f -f_* \| \leq r_0 \} \\
	& \leq 2 C_\ell \E_\eta R_n \{f - f_* :  f \in \cF_{\rm DNN}, \| f -f_* \|_n \leq 2 r_0 \} \\
	& \leq 2 C_\ell \inf_{0 < \alpha < 2r_0} \left\{ 4 \alpha + \frac{12}{\sqrt{n}} \int_\alpha^{2r_0} \sqrt{\log \cN(\delta, \cF_{\rm DNN}, \| \cdot\|_n) } d\delta  \right\}     \\
	& \leq 2 C_\ell \inf_{0 < \alpha < 2r_0} \left\{ 4 \alpha + \frac{12}{\sqrt{n}} \int_\alpha^{2 r_0} \sqrt{\log \cN(\delta, \cF_{\rm DNN}|_{x_1, \ldots, x_n}, \infty)  } d\delta  \right\} \enspace,
\end{align*}
with probability $1 - \exp(-\gt)$ (when applying {\bf Step I}). Recall Lemma \ref{lem:ps-dim}, one can further upper bound the entropy integral when $n > {\rm Pdim}(\cF_{\rm DNN})$,
\begin{align*}
	& \inf_{0 < \alpha < 2r_0} \left\{ 4 \alpha + \frac{12}{\sqrt{n}} \int_\alpha^{2 r_0} \sqrt{\log \cN(\delta, \cF_{\rm DNN}|_{x_1, \ldots, x_n}, \infty)  } d\delta  \right\} \nonumber \\
	& \leq \inf_{0 < \alpha < 2r_0} \left\{ 4 \alpha + \frac{12}{\sqrt{n}} \int_\alpha^{2 r_0} \sqrt{ {\rm Pdim}(\cF_{\rm DNN}) \log \frac{2eM n}{\delta \cdot {\rm Pdim}(\cF_{\rm DNN})}  } d\delta  \right\} \nonumber \\
	& \leq 32 r_0 \sqrt{\frac{{\rm Pdim}(\cF_{\rm DNN})}{n} \left( \log \frac{2eM}{r_0} + \frac{3}{2} \log n \right) } \label{eqn:chaining-bd}
\end{align*}
with a particular choice of $\alpha = 2r_0 \sqrt{{\rm Pdim}(\cF_{\rm DNN}) / n} < 2r_0$. Therefore, whenever $r_0 \geq 1/n$ and $n \geq (2eM)^2$, 
\[
	 \E_\eta R_n \cG  \leq  128 C_\ell r_0  \sqrt{ \frac{{\rm Pdim}(\cF_{\rm DNN}) }{n}  \log n}.
\]
Applying this bound to \eqref{eqn:apply-sym-one-step}, we have
\begin{equation}
	\label{eqn:emp-term}
	\left( \E - \E_n \right)\left[ \ell(\hat{f}, \bz) - \ell(f_*, \bz) \right]
	\leq K r_0  \sqrt{ \frac{{\rm Pdim}(\cF_{\rm DNN})}{n} \log n} + r_0 \sqrt{\frac{2 C_\ell^2 \gt}{n}} + \frac{23 M C_\ell \gt}{n}  
\end{equation}
where $K = 6\times128C_\ell$.

Going back now to the main decomposition, plug \eqref{eqn:emp-term} and \eqref{eqn:bias} into \eqref{eqn:decomposition}, and we overall have found that, with probability at least $1- 4\exp(-\gt)$, the following holds:
\begin{align}
	& c_1 \| \hat{f} - f_* \|_{L_2(\bx)}^2  		 \nonumber \\
	& \leq K r_0  \sqrt{ \frac{{\rm Pdim}(\cF_{\rm DNN})}{n} \log n} + r_0 \sqrt{\frac{2 C_\ell^2 \gt}{n}} + \frac{23 M C_\ell \gt}{n} + \left( c_2 \epsilon_n^2 + \epsilon_n \sqrt{\frac{ 2C_\ell^2 \gt }{n}} + \frac{7 C_\ell M \gt}{n} \right)   		\nonumber \\
	& \leq r_0 \cdot   \left( K\sqrt{ \frac{{\rm Pdim}(\cF_{\rm DNN})}{n} \log n} + \sqrt{\frac{2 C_\ell^2 \gt}{n}} \right) + c_2 \epsilon_n^2 + \epsilon_n \sqrt{\frac{2C_\ell^2 \gt }{n}}   + 30 M C_\ell \frac{\gt}{n}   		\nonumber \\
	& \leq r_0 \cdot \left( K\sqrt{C} \sqrt{ \frac{W L \log W}{n}\log n}  + \sqrt{\frac{2 C_\ell^2 \gt}{n}} \right) + c_2 \epsilon_n^2 + \epsilon_n \sqrt{\frac{2C_\ell^2 \gt }{n}}   + 30 M C_\ell \frac{\gt}{n}.  		\label{eqn:one-step}
\end{align}
The last line applies Lemma \ref{lem:VC}. Therefore, whenever $\epsilon_n \ll r_0$ and $ \sqrt{ \frac{W L \log W}{n}\log n} \ll r_0$, the knowledge that $\|\hat{f} - f_*\|_{L_2(\bx)} \leq r_0$ implies that (with high probability) $\| \hat{f} - f_* \|_{L_2(\bx)} \leq r_1$, for $r_1 \ll r_0$. One can recursively improve the bound $r$ to a fixed point/radius $r_*$, which describes the fundamental difficulty of the problem. This is done in the course of the next two steps.

\subsubsection{Step III: Critical Radius} 
	\label{appx:radius}

We now use the tightening of {\bf Step II} to obtain the critical radius for this problem that is then used as an input in the final rate derivation of {\bf Step IV}. Formally, define the critical radius $r_*$ to be the largest fixed point 
\[
	r_* = \inf \left\{ r>0 : 18 M \E R_n \{ f - f_*: f \in \cF, \|f - f_* \|_{L_2(\bx)} \leq s \} < s^2, \forall s \geq r \right\}.
\]
By construction this obeys \eqref{eqn:cr1}, and thus so does $2r_*$. Denote the event $E$ (depending on the data) to be 
\begin{align*}
	E = \left\{ \|f - f_* \|_n \leq 4r_*,~~\text{for all $f \in \cF$ and}~~ \|f - f_* \|_{L_2(\bx)} \leq 2r_* \right\}
\end{align*}
and $\One_E$ to be the indicator that event $E$ holds. We know from \eqref{eqn:radius-factor-2} that $\P(\One_E=1) \geq 1 - n^{-1}$, provided $r_* \geq \sqrt{18}M\sqrt{\log n/n}$ to satisfy \eqref{eqn:cr2}.

We can now give an upper bound for the the critical radius $r_*$. Using the logic of {\bf Step II} to bound the empirical Rademacher complexity, and then applying Lemma \ref{lem:VC}, we find that
\begin{align*}
	r_*^2 &\leq 18 M \E R_n \big\{ f - f_*: f \in \cF, \|f - f_* \|_{L_2(\bx)} \leq r_* \big\}    		\\
	& \leq 18 M \E R_n \big\{ f - f_*: f \in \cF, \|f - f_* \|_{L_2(\bx)} \leq 2r_* \big\}    		\\
	& \leq 18 M \E \big\{ \E_\eta R_n \{ f - f_*: f \in \cF, \|f - f_* \|_{n} \leq 4r_* \} \One_E + 3 M (1-\One_E)  \big\}     \\ %  \quad \text{By $2r_* > r_*$ and \eqref{eqn:radius-factor-2}} 
	& \leq 36 M K \sqrt{C} \cdot r_*  \sqrt{ \frac{W L \log W}{n}\log n} + 36 M^2 \frac{1}{n}  		\\
	& \leq 72 M K \sqrt{C} \cdot r_*  \sqrt{ \frac{W L \log W}{n}\log n},
\end{align*}
with the last line relying on the above restriction that $r_* \geq \sqrt{18}M\sqrt{\log n/n}$. Dividing through by $r_*$ yields the final bound:
\begin{align}
	\label{eqn:r_star}
	r_* & \leq 72 M K\sqrt{C}  \sqrt{ \frac{W L \log W}{n}\log n}.
\end{align}

\subsubsection{Step IV: Localization}
	\label{appx:shells}

We are now able to derive the final rate using a localization argument. This applies the results of {\bf Step I} and {\bf Step II} repeatedly. Divide the space $\cF_{\rm DNN}$ into shells of increasing radius by intersecting it with the $L_2$ balls
\begin{align}
	B(f_*, \bar{r}), B(f_*, 2\bar{r}) \backslash B(f_*, \bar{r}), \ldots B(f_*, 2^{l} \bar{r}) \backslash B(f_*, 2^{l-1} \bar{r})
\end{align}
where $l \geq 1$ is chosen to be the largest integer no greater than $\log_2 \frac{2M}{\sqrt{(\log n)/n}}$. We will proceed to find a bound on $\bar{r}$ which determines the final rate results.

Suppose $\bar{r} > r_*$. Then for each shell, {\bf Step I} and the union bound imply that with probability at least $1 - 2 l \exp(-\gt)$, %(union bound and \eqref{eqn:radius-factor-2})
\begin{align}
	\| f - f_* \|_{L_2(\bx)} \leq 2^{j} \bar{r}  \ \Rightarrow  \ \| f - f_* \|_n \leq 2^{j+1} \bar{r} .
\end{align}

Further, suppose that for some $j \leq l$
\begin{align}
	\hat{f} \in B(f_*, 2^{j} \bar{r}) \backslash B(f_*, 2^{j-1} \bar{r}). 
\end{align}
Then applying the one step improvement argument in {\bf Step II} (again the variance dependence captured in Lemma \ref{lem:symmetrization} is crucial, here reflected in the variance within each shell), Equation \eqref{eqn:one-step} yields that with probability at least $1 - 4 \exp(-\gt)$,
\begin{align*}
	\| \hat{f} - f_* \|_{L_2(\bx)}^2 &\leq  \frac{1}{c_1} \left\{ 2^j \bar{r} \cdot \left( K\sqrt{C} \sqrt{ \frac{W L \log W}{n}\log n}  + \sqrt{\frac{2 C_\ell^2 t}{n}} \right) + c_2 \epsilon_n^2 + \epsilon_n \sqrt{\frac{2C_\ell^2 \gt }{n}}   + 30 M C_\ell \frac{\gt}{n} \right\} \\
	& \leq 2^{2j-2} \bar{r}^2,
\end{align*}
if the following two conditions hold:
\begin{align*}
	  \frac{1}{c_1}  \left( K\sqrt{C} \sqrt{ \frac{W L \log W}{n}\log n} + \sqrt{\frac{2 C_\ell^2 \gt}{n}} \right)  \leq \frac{1}{2} 2^{j}\bar{r}   	\\
	  \frac{1}{c_1}\left( c_2 \epsilon_n^2 + \epsilon_n \sqrt{\frac{2C_\ell^2 \gt }{n}}   + 26 MC_\ell \frac{\gt}{n}  \right) \leq \frac{1}{4} 2^{2j}\bar{r}^2.
\end{align*}

It is easy to see that these two hold for all $j$ if we choose 
\begin{align}
	\label{eqn:r_bar}
	\bar{r} = \frac{8}{c_1}  \left( K\sqrt{C} \sqrt{ \frac{W L \log W}{n}\log n}  + \sqrt{\frac{2 C_\ell^2 \gt}{n}} \right) + \left( \sqrt{\frac{2 (c_2 \vee 1)}{c_1}} \epsilon_n + \sqrt{\frac{120 M C_\ell}{c_1} \frac{\gt}{n}} \right) + r_* .
\end{align}

Therefore with probability at least $1 - 6l \exp(-\gt)$, we can perform shell-by-shell argument combining the results in {\bf Step I} and {\bf Step II}:
\begin{align*}
	& \| \hat{f} - f_* \|_{L_2(\bx)} \leq 2^{l} \bar{r} ~~\text{and}~~ \| \hat{f} - f_* \|_n \leq 2^{l+1} \bar{r} \\
	& \qquad \text{implies } \ \| \hat{f} - f_* \|_{L_2(\bx)} \leq 2^{l-1} \bar{r} ~~\text{and}~~ \| \hat{f} - f_* \|_n \leq 2^{l}  \bar{r} \\
	& \qquad\quad \ldots \ldots \\
	& \qquad\qquad  \text{implies } \ \| \hat{f} - f_* \|_{L_2(\bx)} \leq  2^0 \bar{r} ~~\text{and}~~ \| \hat{f} - f_* \|_n \leq 2^1 \bar{r}.
\end{align*}
The ``and'' part of each line follows from {\bf Step I} and the implication uses the above argument following {\bf Step II}. Therefore in the end, we conclude with probability at least $1 - 6l \exp(-\gt)$, %This 6 comes from adding the four of 4\exp(-\gt) from {eqn:one-step} to the two of 2 l \exp(-\gt) that we get from the union bound above, and the fact that l \geq 1 so we make 4 into 4l
\begin{align}
\| \hat{f} - f_* \|_{L_2(\bx)} \leq  \bar{r} \enspace, \\
\| \hat{f} - f_* \|_n \leq 2 \bar{r} \enspace.	
\end{align}

Therefore choose $\gamma = - \log (6l) + \gt$, we know from \eqref{eqn:r_bar}, and the upper bound on $r_*$ in \eqref{eqn:r_star}
\begin{align}
	\bar{r} &\leq \frac{8}{c_1}  \left( K\sqrt{C} \sqrt{ \frac{W L \log W}{n}\log n}  + \sqrt{\frac{2 C_\ell^2 (\log \log n + \gamma)}{n}} \right)    		\nonumber \\
	& \qquad\qquad\qquad\qquad\qquad + \left( \sqrt{\frac{2(c_2 \vee 1)}{c_2}} \epsilon_n + \sqrt{\frac{120 M C_\ell}{c_1} \frac{\log \log n + \gamma}{n}} \right) + r_* \nonumber \\
	& \leq C' \left( \sqrt{ \frac{W L \log W}{n}\log n} + \sqrt{\frac{\log \log n + \gamma}{n}} + \epsilon_n  \right),  		\label{eqn:balance}
\end{align} 
with some constant $C' > 0$ that does not depend on $n$. This completes the proof of Theorem \ref{thm:generic rates}.

%%%%%%%%%%%%%%%%%%%%%%%%%%%%%%%%%%%
\subsection{Final Steps for the MLP case}
	\label{appx:mlp-last-steps}

For the multi-layer perceptron, $W \leq C \cdot H^2 L$, and plugging this into the bound \eqref{eqn:balance}, we obtain
\[
	C' \left( \sqrt{ \frac{H^2 L^2 \log (H^2 L)}{n}\log n} + \sqrt{\frac{\log \log n + \gamma}{n}} + \epsilon_n  \right)
\]
	
To optimize this upper bound on $\bar{r}$, we need to specify the trade-offs in $\epsilon_n$ and $H$ and $L$. To do so, we utilize the MLP-specific approximation rate of Lemma \ref{lem:approx-net} and the embedding of Lemma \ref{lem:embedding}. Lemma \ref{lem:embedding} implies that, for any $\epsilon_n$, one can embed the approximation class $\cF_{\rm DNN}$ given by Lemma \ref{lem:approx-net} into a standard MLP architecture $\cF_{\rm MLP}$, where specifically
\begin{align*}
	H & = H(\epsilon_n) \leq W(\epsilon_n) L(\epsilon_n) \leq C^2  \epsilon_n^{-\frac{d}{\beta}} ( \log (1/\epsilon_n) + 1)^2, \\
	L & = L(\epsilon_n) \leq C \cdot ( \log (1/\epsilon_n) + 1).
\end{align*}
For standard MLP architecture $\cF_{\rm MLP}$, 
\[
	H^2 L^2 \log (H^2 L) \leq \tilde{C} \cdot \epsilon_n^{-\frac{2d}{\beta}} ( \log (1/\epsilon_n) + 1)^7.
\]
Thus we can optimize the upper bound
\[ 
	\bar{r} \leq C' \left( \sqrt{ \frac{\epsilon_n^{-\frac{2d}{\beta}} ( \log (1/\epsilon_n) + 1)^7 }{n}\log n} + \sqrt{\frac{\log \log n + \gamma}{n}} + \epsilon_n  \right)
\]
by choosing $\epsilon_n = n^{-\frac{\beta}{2(\beta+d)}}$, $H \asymp \cdot n^{\frac{d}{2(\beta + d)}} \log^2 n$, $L \asymp \cdot \log n$. This gives
\[
	\bar{r} \leq C  \left( n^{-\frac{\beta}{2(\beta+d)}} \log^4 n + \sqrt{\frac{\log \log n + t'}{n}} \right).
\]
Hence putting everything together, with probability at least $1 - \exp(-\gamma)$,
\begin{align*}
	\E (\hat{f} - f_* )^2 \leq \bar{r}^2 \leq  C  \left( n^{-\frac{\beta}{\beta+d}} \log^8 n + \frac{\log \log n + \gamma}{n} \right) \enspace, \\
	\E_n (\hat{f} - f_* )^2  \leq (2\bar{r})^2 \leq 4C \left( n^{-\frac{\beta}{\beta+d}} \log^8 n + \frac{\log \log n + \gamma}{n} \right) \enspace.
\end{align*}
This completes the proof of Theorem \ref{thm:mlp rate}.

%%%%%%%%%%%%%%%%%%%%%%%%%%%%%%%%%%%
%%%%%%%%%%%%%%%%%%%%%%%%%%%%%%%%%%%
\subsection{Proof of Corollaries \ref{thm:optimal rate} and \ref{thm:fixed width}}
	\label{appx:coro}

For Corollary \ref{thm:optimal rate}, we want to optimize
\[
	\frac{W L \log U}{n}\log n + \frac{\log \log n + \gamma}{n} + \epsilon_{\rm DNN}^2.
\]
\citet[Theorem 1]{Yarotsky2017_NN} shows that for the approximation error $\epsilon_{\rm DNN}$ to obey $\epsilon_{\rm DNN} \leq \epsilon$, it suffices to choose
$W, U \propto \epsilon^{-\frac{d}{\beta}} (\log(1/\epsilon)+1)$ and $L \propto  (\log(1/\epsilon)+1)$, given the specific architecture described therein. Therefore, we attain $\epsilon \asymp n^{-\beta/(2\beta + d)}$ by setting $W, U \asymp n^{d/(2\beta + d)}$ and $L \asymp \log n$, yielding the desired result.

For Corollary \ref{thm:fixed width}, we need to optimize
\[
	\frac{H^2 L_2 \log (HL)}{n}\log n + \frac{\log \log n + \gamma}{n} + \epsilon_{\rm MLP}^2.
\]
\citet[Theorem 1]{Yarotsky2018_WP} shows that for the approximation error $\epsilon_{\rm MLP}$ to obey $\epsilon_{\rm MLP} \leq \epsilon$, it suffices to choose
$H \propto 2d+10$ and $L \propto \epsilon^{-\frac{d}{2}}$, given the specific architecture described therein. Thus, for $\epsilon \asymp n^{-1/(2 + d)}$ we take $L \asymp n^{-d/(4 + 2d)}$, and the result follows.

%%%%%%%%%%%%%%%%%%%%%%%%%%%%%%%%%%%
%%%%%%%%%%%%%%%%%%%%%%%%%%%%%%%%%%%
\section{Supporting Lemmas}
	\label{appx:lemmas}

First, we show that one can embed a feedforward network into the multi-layer perceptron architecture by adding auxiliary hidden nodes. This idea is due to \cite{Yarotsky2018_WP}.

\begin{lemma}[Embedding]
	\label{lem:embedding}
	For any function $f \in \cF_{\rm DNN}$, there is a $g \in \cF_{\rm MLP}$, with $H \leq WL+U$, such that $g = f$. 
\end{lemma}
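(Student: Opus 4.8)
The plan is to observe that the \emph{only} structural difference between a general feedforward network in $\cF_{\rm DNN}$ and a multi-layer perceptron in $\cF_{\rm MLP}$ is the presence of connections that either skip layers (an edge from a layer $l'$ into a layer $l$ with $l > l'+1$) or simply fail to be present between consecutive layers. Missing consecutive-layer edges are harmless, since full connectivity is a superset of any sparser pattern (absent edges are just weight-zero), so the substantive task is to remove skip connections. I would do this by rerouting each skip connection through a chain of auxiliary ``relay'' nodes, one per intermediate layer, that copy a value forward. The key gadget is that ReLU implements the identity on its own range: every hidden output is $\tilde{x}_u = \sigma(\cdot) \ge 0$, so a relay node with unit incoming weight and zero constant reproduces it, $\sigma(\tilde{x}_u) = \tilde{x}_u$. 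Thus any non-negative value can be transported one layer deeper at the cost of a single extra node.

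Concretely, I would build $g$ layer by layer, keeping the original depth $L$ and the original assignment of hidden nodes to layers. In the MLP, layer $l$ contains (i) the original hidden units assigned to layer $l$, and (ii) for every value produced in a layer $l' < l$ that is consumed by some unit in a layer $\ge l$, a relay node reproducing that value. Because layer $l$ of an MLP reads only from layer $l-1$, the inductive invariant is that the outputs of MLP-layer $l-1$ expose exactly those earlier values still required downstream; each original node in layer $l$ then recomputes its original pre-activation as a linear combination of layer-$(l-1)$ outputs and applies $\sigma$, reproducing its value in $f$. The final (linear) output node likewise reads the relayed last-layer values and returns $f(\bx)$, so $g = f$. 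The one quantity that is not a priori non-negative is an original input coordinate $x_j \in [-1,1]$ feeding a deep layer; this I would relay as the pair $(\sigma(x_j), \sigma(-x_j))$, which propagates through layers via $(\sigma(a-b), \sigma(-(a-b)))$ and reconstructs $x_j = \sigma(x_j) - \sigma(-x_j)$ wherever it is consumed.

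For the width bound I would count inserted nodes: each of the at most $W$ weighted connections is realized by a relay chain spanning at most $L$ intermediate layers, hence contributes at most $L$ relay nodes (the signed input coordinates use two parallel wires, which only changes the estimate by a harmless factor absorbed into the same bound). Together with the $U$ original hidden units this yields at most $WL + U$ hidden nodes in total. Padding every layer with zero-output dummy nodes to a common width then produces a genuine element of $\cF_{\rm MLP}$ whose width is at most the total node count, i.e. $H \le WL + U$, as claimed.

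I expect the only real subtlety to be the handling of \emph{signed} quantities: the ReLU-as-identity trick is valid only on non-negative arguments, so one must verify that every value being transported is either a hidden output (automatically $\ge 0$) or is carried as its two non-negative parts and reassembled at the point of use. Everything else — the induction establishing $g = f$ and the node count giving $H \le WL + U$ — is routine bookkeeping once the relay gadget and its sign handling are in place.
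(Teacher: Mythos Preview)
Your approach is essentially the paper's: reroute every skip connection through a chain of auxiliary relay nodes (at most $L$ per offending edge, edges at most $W$, hence at most $WL$ relay nodes in addition to the $U$ original hidden units), and note that missing consecutive-layer edges are just zero weights. The one substantive difference is the relay gadget. You rely on $\sigma(x)=x$ for $x\ge 0$ and handle signed input coordinates with the pair $(\sigma(x_j),\sigma(-x_j))$; the paper instead shifts by a sufficiently large constant $b$ so that $\sigma(\tilde x + b)=\tilde x + b$ and then subtracts $b$ at the receiving node. The paper's shift trick uses a single wire regardless of sign, so the count $WL$ is exact. Your two-wire handling of signed inputs does not literally ``absorb'' into $WL+U$ as you claim: an input-to-deep edge costs up to $2L$ relay nodes, which in the worst case yields $2WL+U$, not $WL+U$. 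This is easy to repair---since $\bx\in[-1,1]^d$, relay $\sigma(x_j+1)\in[0,2]$ on a single wire and subtract $1$ at the consumer (which is exactly the paper's bias-shift idea specialized to inputs)---after which your bound matches the statement.
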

\begin{proof}%[Proof of Lemma \ref{lem:embedding}]
	\begin{figure}[!ht] 
	  \centering
		\includegraphics[width=0.45\textwidth]{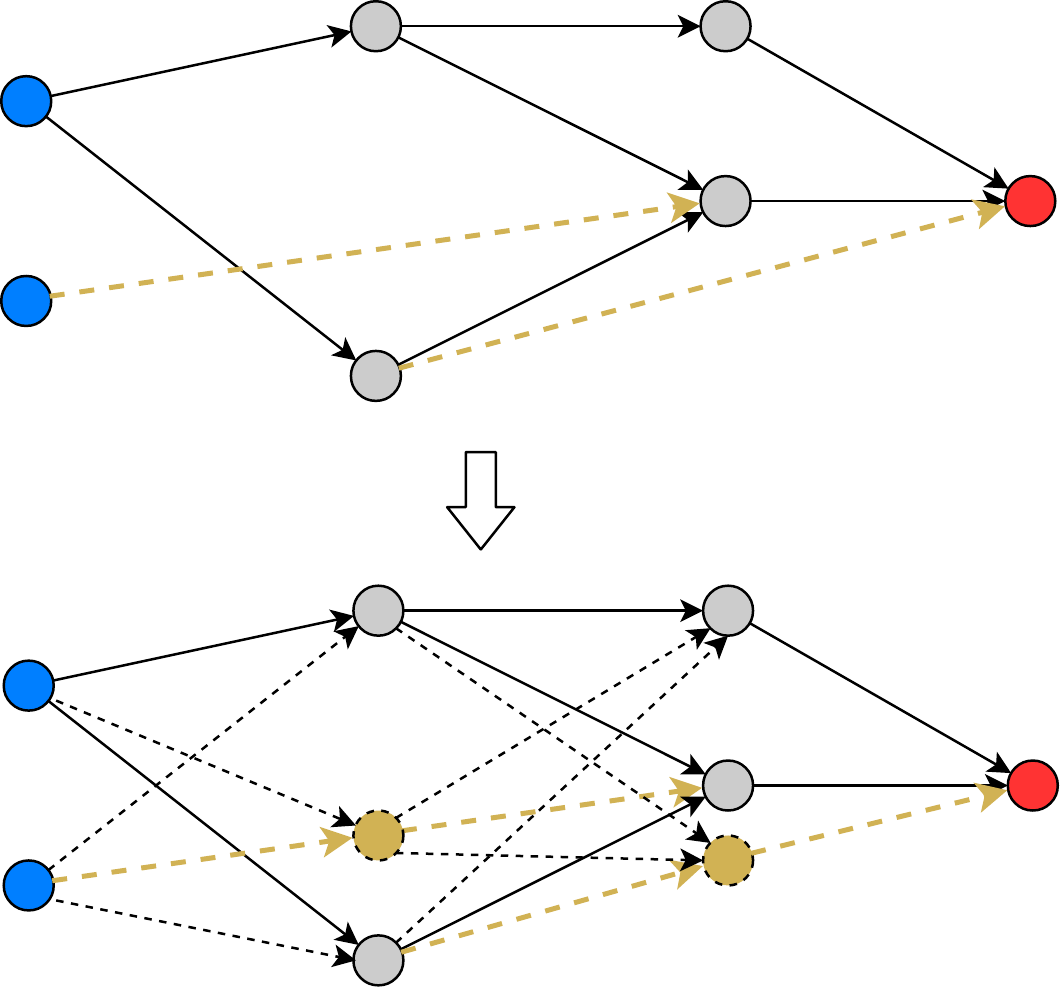}
		\caption{Illustration of how to embed a feedforward network into a multi-layer perceptron, with auxiliary hidden nodes (shown in yellow).}
		  \label{fig:DNN-embed-MLP}
	\end{figure}
	
	The idea is illustrated in Figure \ref{fig:DNN-embed-MLP}. For the edges in the directed graph of $f \in \cF_{\rm DNN}$ that connect nodes not in adjacent layers (shown in yellow in Figure \ref{fig:DNN-embed-MLP}), one can insert auxiliary hidden units in order to simply ``pass forward'' the information. 
	The number of such auxiliary ``passforward units'' is at most the number of offending edges times the depth $L$ (i.e.\ for each edge, at most $L$ auxiliary nodes are required), and this is bounded by $W L$. Therefore the width of the MLP network that subsumes the original is upper bounded by $W L + U$ while still maintaining the required embedding that for any $f_\theta \in \cF_{\rm DNN}$, there is a $g_{\theta'} \in \cF_{\rm MLP}$ such that $g_{\theta'} = f_{\theta}$. In order to match modern practice we only need to show that auxiliary units can be implemented with ReLU activation. This can be done by setting the constant (``bias'') term $b$ of each auxiliary unit large enough to ensure $\sigma(\tilde{\bx}'\bw + b) = \tilde{\bx}'\bw + b$, and then subtracting the same $b$ in the last receiving unit along the path. 
\end{proof}

Next, we give two properties of the Rademacher complexity that we require \cite[see][]{mendelson2003few}.
\begin{lemma}[Contraction]
	\label{lem:contraction}
		Let $\phi: \R \rightarrow \R$ be a Lipschitz function $|\phi(f_1) - \phi(f_2)| \leq L |f_1 - f_2|$, then
		\[
			\E_\eta R_n \{ \phi \circ f: f \in \cF \} \leq 2 L \E_\eta R_n \cF.
		\]
\end{lemma}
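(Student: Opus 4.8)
The plan is to establish the inequality via the classical Ledoux--Talagrand contraction principle, carried out by peeling off one Rademacher coordinate at a time. First I would reduce to the case $\phi(0) = 0$. Setting $\tilde\phi := \phi - \phi(0)$, which is still $L$-Lipschitz and now vanishes at the origin, observe that for any fixed signs $\sum_{i} \eta_i \phi(f(\bx_i)) = \sum_{i} \eta_i \tilde\phi(f(\bx_i)) + \phi(0)\sum_{i} \eta_i$. Since the trailing term does not depend on $f$, taking the supremum over $f \in \cF$ and then $\E_\eta$ annihilates it because $\E_\eta \eta_i = 0$. Hence it suffices to bound $\E_\eta R_n\{\tilde\phi \circ f : f \in \cF\}$.

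The heart of the argument is a one-coordinate comparison. I would fix the data, condition on $\eta_2, \dots, \eta_n$, and collect their contribution into an arbitrary functional $a(f) := \frac{1}{n}\sum_{i \geq 2}\eta_i \tilde\phi(f(\bx_i))$. Averaging over $\eta_1 \in \{\pm 1\}$ gives
\[
\E_{\eta_1}\sup_{f}\Big(a(f) + \tfrac{1}{n}\eta_1 \tilde\phi(f(\bx_1))\Big) = \tfrac12\sup_f\big(a(f) + \tfrac1n\tilde\phi(f(\bx_1))\big) + \tfrac12\sup_g\big(a(g) - \tfrac1n\tilde\phi(g(\bx_1))\big).
\]
For near-maximizers $f,g$ of the two suprema I would invoke $\tilde\phi(f(\bx_1)) - \tilde\phi(g(\bx_1)) \leq L\,|f(\bx_1) - g(\bx_1)|$ and split on the sign of $f(\bx_1) - g(\bx_1)$ to replace $\tilde\phi$ by the linear map, yielding the bound $a(f) + a(g) + \tfrac{L}{n}\big(f(\bx_1) - g(\bx_1)\big) \le \sup_{f'}(a(f') + \tfrac{L}{n}f'(\bx_1)) + \sup_{g'}(a(g') - \tfrac{L}{n}g'(\bx_1))$. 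The right-hand side is precisely $2\,\E_{\eta_1}\sup_f\big(a(f) + \tfrac{L}{n}\eta_1 f(\bx_1)\big)$, so the net effect is to exchange $\tilde\phi$ for multiplication by $L$ at coordinate $1$.

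Iterating this comparison coordinate-by-coordinate, using the tower property over the independent $\eta_i$ and re-absorbing the already-linearized terms into the running functional $a(\cdot)$ at each step, I would linearize all $n$ coordinates and obtain $\E_\eta R_n\{\tilde\phi\circ f\} \le L\,\E_\eta R_n\cF$. Combined with the $\phi(0)$-reduction above, this already yields the conclusion, with the stated constant $2L$ leaving comfortable slack (and covering the two-sided formulation when $R_n$ is replaced by its absolute-value version).

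I expect the one-coordinate comparison to be the only genuine obstacle. The delicate point is that the two suprema defining $\E_{\eta_1}[\cdot]$ are attained at \emph{possibly different} functions $f \neq g$, so the Lipschitz inequality must be applied to the pair $\big(f(\bx_1), g(\bx_1)\big)$ and then, through the sign case-split, rewritten in linear form without losing the constant. Everything else --- the reduction to $\phi(0)=0$ and the inductive peeling --- is routine bookkeeping that rests only on the independence of the $\eta_i$ and on the vanishing expected Rademacher average of a constant shift.
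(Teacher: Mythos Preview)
Your argument is the standard Ledoux--Talagrand contraction proof and is correct; in fact you obtain the sharper constant $L$ rather than $2L$, which is fine since the lemma as stated only asks for the weaker bound. The paper does not actually prove this lemma: it is simply stated with a citation to \cite{mendelson2003few}, so there is no substantive comparison to make beyond noting that you have supplied the classical proof the paper takes for granted.
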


\begin{lemma}[Dudley's Chaining]
	\label{lem:dudley-chaining}
	Let $\cN(\delta, \cF, \| \cdot\|_n)$ denote the metric entropy for class $\cF$ (with covering radius $\delta$ and metric $\| \cdot \|_n$), then
	\[
		\E_\eta R_n \{ f: f \in \cF, \| f \|_n \leq r \} \leq \inf_{0 < \alpha < r} \left\{ 4 \alpha + \frac{12}{\sqrt{n}} \int_\alpha^r \sqrt{\log \cN(\delta, \cF, \| \cdot\|_n) } d\delta  \right\} \enspace.
	\]
	Furthermore, because $\| f \|_n \leq \max_i |f(\bx_i)|$, and therefore $\cN(\delta, \cF, \| \cdot\|_n) \leq \cN(\delta, \cF|_{\bx_1, \ldots, \bx_n}, \infty)$ and so the upper bound in the conclusions also holds with $\cN(\delta, \cF|_{\bx_1, \ldots, \bx_n}, \infty)$.
\end{lemma}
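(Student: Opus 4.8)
The plan is to run the classical Dudley chaining argument on the empirical Rademacher average, working conditionally on the sample $\bx_1,\dots,\bx_n$ so that the only remaining randomness is in the signs $\eta_i$. Write $\cG = \{f\in\cF : \|f\|_n\le r\}$. First I would fix the dyadic scales $\delta_j = r\,2^{-j}$ for $j\ge 0$ and, for each $j$, choose a minimal $\delta_j$-cover $T_j$ of $\cG$ in the $\|\cdot\|_n$ metric, of cardinality $N_j = \cN(\delta_j,\cF,\|\cdot\|_n)$; since every $f\in\cG$ satisfies $\|f\|_n\le r = \delta_0$, I may take $T_0 = \{0\}$. Letting $\pi_j(f)\in T_j$ be a nearest element, so that $\|f-\pi_j(f)\|_n\le\delta_j$, I telescope $f = \pi_0(f) + \sum_{j\ge 1}\big(\pi_j(f)-\pi_{j-1}(f)\big)$ and truncate the chain at the smallest level $J$ with $\delta_J\le\alpha$.

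The second step is to bound the three resulting pieces after applying $\E_\eta$ to $\sup_{f\in\cG}\frac1n\sum_i\eta_i f(\bx_i)$. The base term vanishes because $\pi_0(f)=0$. For the tail I would use Cauchy--Schwarz together with $(\tfrac1n\sum_i\eta_i^2)^{1/2}=1$, so that the residual $\frac1n\sum_i\eta_i\big(f-\pi_J(f)\big)(\bx_i)\le\|f-\pi_J(f)\|_n\le\delta_J$; bounding the last retained scale then contributes a term of order $\alpha$, which accounts for the $4\alpha$ in the statement. For each link I bound $\sup_f$ of the increment by the maximum over the finite increment class $\{g-h : g\in T_j,\ h\in T_{j-1}\}$, which has at most $N_jN_{j-1}\le N_j^2$ elements and $\|\cdot\|_n$-diameter at most $\delta_j+\delta_{j-1}\le 3\delta_j$. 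Massart's finite-class maximal inequality, with the Euclidean-norm bound $\|g\|_2=\sqrt n\,\|g\|_n$, then gives $\E_\eta$ of this maximum at most $\frac{6\delta_j\sqrt{\log N_j}}{\sqrt n}$.

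The third step converts the dyadic sum into the entropy integral. Using $\delta_j = 2(\delta_j-\delta_{j+1})$ and the monotonicity of $\delta\mapsto\cN(\delta,\cF,\|\cdot\|_n)$, each summand $\delta_j\sqrt{\log N_j}$ is at most $2\int_{\delta_{j+1}}^{\delta_j}\sqrt{\log\cN(\delta,\cF,\|\cdot\|_n)}\,d\delta$, since the integrand exceeds $\sqrt{\log N_j}$ on that interval. Summing over $j\le J$ telescopes the intervals and bounds the chaining contribution by $\frac{12}{\sqrt n}\int_\alpha^r\sqrt{\log\cN(\delta,\cF,\|\cdot\|_n)}\,d\delta$; taking the infimum over $\alpha\in(0,r)$ yields the claimed bound. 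The ``furthermore'' clause follows from the pointwise inequality $\|f\|_n\le\max_i|f(\bx_i)|$: any sup-norm cover of the restriction $\cF|_{\bx_1,\dots,\bx_n}$ with radius $\delta$ is automatically a $\|\cdot\|_n$-cover with radius $\delta$, whence $\cN(\delta,\cF,\|\cdot\|_n)\le\cN(\delta,\cF|_{\bx_1,\dots,\bx_n},\infty)$ and the same bound holds with the sup-norm covering number substituted throughout.

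I expect the main obstacle to be purely the bookkeeping of constants so that they collapse to exactly $4\alpha$ and $\frac{12}{\sqrt n}$---in particular lining up the factor from Massart's inequality with the $\delta_j = 2(\delta_j-\delta_{j+1})$ sum-to-integral comparison---rather than any conceptual difficulty, as the argument is the standard chaining bound recorded in the reference cited for this lemma.
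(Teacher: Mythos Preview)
Your proposal is correct and follows the standard Dudley chaining argument. Note, however, that the paper does not actually prove this lemma: it is stated as a supporting result with a citation to \cite{mendelson2003few} and no proof is given in the appendix. Your sketch therefore goes beyond what the paper supplies, and the argument you outline---dyadic telescoping, Massart's finite-class maximal inequality on the links, Cauchy--Schwarz on the truncation remainder, and the sum-to-integral comparison via $\delta_j = 2(\delta_j - \delta_{j+1})$---is precisely the classical proof one finds in the cited reference and standard empirical-process texts. Your own caveat about the constant bookkeeping is apt: getting exactly $4\alpha$ and $12/\sqrt{n}$ requires some care in how one handles the mismatch between the last dyadic scale $\delta_J$ and the free parameter $\alpha$ (in particular, the integral lower limit after the sum-to-integral step is $\delta_{J+1}$ rather than $\alpha$, and absorbing the discrepancy into the $4\alpha$ term needs a line or two), but there is no conceptual gap.
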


%The above shows how one can bound Rademacher complexity using metric entropy. When working with deep neural networks class where we don't impose bounded constraints on each individual parameter, it turns out scale insensitive measures like Pseudo-dimension and Vapnik–Chervonenkis dimension have certain advantage: it gives us more flexibility in seeking for approximation results of neural networks. This is in contrast to the classic analysis with scale sensitive measure such as metric entropy (as in \cite{Chen-White1999_IEEE}). In fact, one technical contribution is to derive new localization theory with scale insensitive measure such as VC-dimension, for deep neural networks.

The next two results, Theorems 12.2 and 14.1 in \cite{Anthony-Bartlett1999_book}, show that the metric entropy may be bounded in terms of the pseudo-dimension and that the latter is bounded by the Vapnik-Chervonenkis (VC) dimension.
\begin{lemma}%[Pseudo-Dimension Bound, Theorem 12.2 in \cite{Anthony-Bartlett1999_book}]
	\label{lem:ps-dim}
	Assume for all $f \in \cF$,  $\| f \|_\infty \leq M$. Denote the pseudo-dimension of $\cF$ as ${\rm Pdim}(\cF)$, then for $n \geq {\rm Pdim}(\cF)$, we have for any $\delta$,
	\[
		 \cN(\delta, \cF|_{x_1, \ldots, x_n}, \infty) \leq \left( \frac{2e M \cdot n}{\delta \cdot {\rm Pdim}(\cF)} \right)^{{\rm Pdim}(\cF)} \enspace.
	\]
\end{lemma}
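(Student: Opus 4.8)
The plan is to prove this as a real-valued analogue of the Sauer--Shelah lemma, reducing the sup-norm covering number on the sample to a purely combinatorial count governed by the pseudo-dimension. Throughout write $d = {\rm Pdim}(\cF)$ and restrict attention to the sample $x_1,\ldots,x_n$, identifying each $f \in \cF$ with the vector $(f(x_1),\ldots,f(x_n)) \in [-M,M]^n$. We may assume $\delta < 2M$, since otherwise a single function covers $\cF$ and the bound holds trivially.

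First I would quantize the range. Partition $[-M,M]$ into $K = \lceil 2M/\delta\rceil$ consecutive subintervals of length at most $\delta$, with interior boundaries $b_1 < \cdots < b_{K-1}$, and map each $f$ to the level pattern $\bar f \in \{1,\ldots,K\}^n$ recording which subinterval contains each $f(x_i)$. Two functions sharing the same level pattern agree to within $\delta$ at every sample point, so choosing one representative per realized level pattern yields a $\delta$-cover in $\|\cdot\|_\infty$. Hence $\cN(\delta, \cF|_{x_1,\ldots,x_n}, \infty)$ is at most the number of distinct level patterns realized by $\cF$.

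The key step is to convert this count into a VC-dimension count. Each coordinate $\bar f(x_i)$ is determined by the binary vector $(\One\{f(x_i) > b_k\})_{k=1}^{K-1}$, so the level pattern of $f$ is a function of the $\{0,1\}$-valued pattern of $f$ evaluated on the $n(K-1)$ index points $\{(x_i, b_k)\}$. Consider the induced binary class $\{(x,t) \mapsto \One\{f(x) > t\} : f \in \cF\}$; by definition the pseudo-dimension of $\cF$ is exactly the VC dimension of this subgraph class, namely $d$, and restricting to the finite index set can only leave it $\le d$. By the Sauer--Shelah lemma the number of distinct binary patterns on these $n(K-1)$ points is at most $\sum_{i=0}^{d}\binom{n(K-1)}{i}$, which, since $n(K-1) \ge n \ge d$, is bounded by $\bigl(e\,n(K-1)/d\bigr)^d$. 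As each realized level pattern corresponds to a distinct binary pattern, and $K-1 \le 2M/\delta$, the number of level patterns is at most $\bigl(2eMn/(d\delta)\bigr)^d$, giving the claim.

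The main obstacle is the identification in the third step: verifying cleanly that the VC dimension of the thresholded binary class on the chosen index set coincides with (or is dominated by) $d = {\rm Pdim}(\cF)$, which requires unwinding the witness-based definition of pseudo-shattering and checking that fixing the thresholds to the grid $\{b_k\}$ does not inflate the shatter coefficient. The quantization and the final application of $\sum_{i\le d}\binom{m}{i} \le (em/d)^d$ for $m \ge d$ are routine once this correspondence is in place; the only care needed elsewhere is the boundary convention for ties at the $b_k$ (absorbed by a strict/nonstrict inequality choice) and the trivial case $\delta \ge 2M$.
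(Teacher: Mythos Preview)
The paper does not actually prove this lemma: it is stated without proof and attributed to Theorems~12.2 and~14.1 of \cite{Anthony-Bartlett1999_book}. Your proposal supplies the argument behind that citation, and it is correct. The quantization of $[-M,M]$ into $K=\lceil 2M/\delta\rceil$ cells, the reduction of level patterns to binary patterns of the subgraph class $\{(x,t)\mapsto\One\{f(x)>t\}\}$ on the $n(K-1)$ grid points, and the application of Sauer--Shelah together with $\sum_{i\le d}\binom{m}{i}\le(em/d)^d$ for $m\ge d$ are exactly the standard proof, and your bookkeeping ($K-1\le 2M/\delta$, the trivial case $\delta\ge 2M$) is in order.

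Your stated ``main obstacle'' is not one: the pseudo-dimension of $\cF$ is \emph{defined} as the VC dimension of the subgraph indicator class over all of $\cX\times\R$, so restricting that class to the finite index set $\{(x_i,b_k)\}$ can only keep the VC dimension at or below $d$---there is nothing to verify beyond unwinding the definition. The tie-breaking convention at the $b_k$ is likewise harmless, since the subgraph class can be taken with either a strict or a weak inequality without changing its VC dimension.
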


%\begin{lemma}%[Theorem 14.1 in \cite{Anthony-Bartlett1999_book}]
%	If $\cF$ is the class of functions generated by a neural network with a fixed architecture and fixed activation functions, then
%	\[
%		{\rm Pdim}(\cF) \leq  {\rm VCdim}(\tilde{\cF})
%	\]
%	where $\tilde{\cF}$ has only one extra input unit and one extra computation unit compared to $\cF$. 
%\end{lemma}
%%Remark that this equivalence is also used in \cite{Bartlett-etal2017_COLT}. 

The following symmetrization lemma bounds the empirical processes term using Rademacher complexity, and is thus a crucial piece of our localization. This is a standard result based on Talagrand's concentration, but here special care is taken with the dependence on the variance. 
\begin{lemma}[Symmetrization, Theorem 2.1 in \cite{bartlett2005local}]
	\label{lem:symmetrization}
	For any $g \in \cG$, assume that $|g| \leq G$ and $\V[g] \leq V$. Then for every $\gamma > 0$, with probability at least $1 - e^{-\gamma}$
	\[
		\sup_{g \in \cG} \left\{ \E g - \E_n g \right\} \leq 3 \E R_n \cG + \sqrt{\frac{2V \gamma}{n}} + \frac{4 G}{3} \frac{\gamma}{n} \enspace ,
	\]
	and with probability at least $1 - 2e^{-t}$
	\[
		\sup_{g \in \cG} \left\{ \E g - \E_n g \right\} \leq 6 \E_\eta R_n \cG + \sqrt{\frac{2V \gamma}{n}} + \frac{23 G}{3} \frac{\gamma}{n} \enspace.
	\]
	The same result holds for $\sup_{g \in \cG} \left\{  \E_n g  - \E g \right\}$.	
\end{lemma}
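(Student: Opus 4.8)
The plan is to reproduce the argument of \citet{bartlett2005local}, which couples a sharp, variance-sensitive concentration inequality for the supremum of an empirical process with a symmetrization step. Write $Z \deq \sup_{g \in \cG}\{\E g - \En g\}$, the quantity to be bounded. Since every $g \in \cG$ obeys $|g| \le G$ and $\V[g] \le V$, the variable $Z$ is the supremum of an average of independent, bounded, low-variance increments, which is precisely the setting in which Talagrand-type concentration applies and in which the fluctuation can be made to scale with $V$ rather than $G^2$ (this is what eventually lets the outer localization of Section \ref{appx:localization} reach the $n^{-1}$ term).

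First I would invoke Bousquet's form of Talagrand's inequality for the one-sided supremum: for every $\gamma > 0$, with probability at least $1 - e^{-\gamma}$,
\[
	Z \le \E Z + \sqrt{\frac{2\gamma\,(V + 2 G\, \E Z)}{n}} + \frac{G\gamma}{3n}.
\]
The essential feature is that the leading fluctuation carries the small variance factor $V$; the price is that $\E Z$ appears \emph{inside} the square root. I would decouple it by AM--GM, using $2\sqrt{(\gamma G/n)\,\E Z} \le \tfrac12 \E Z + (2\gamma G/n)$ (with a weighting chosen to keep the coefficient of $\E Z$ controlled) together with $\sqrt{a+b}\le\sqrt a+\sqrt b$, so that the bound rearranges to $Z \le \tfrac32\,\E Z + \sqrt{2V\gamma/n} + c\,G\gamma/n$ for an explicit constant $c$.

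Next I would control $\E Z$ by the standard symmetrization inequality, $\E Z \le 2\,\E R_n\cG$. Substituting into the decoupled bound and collecting the $G\gamma/n$ terms gives the first claim, with the factor $3=\tfrac32\cdot 2$ multiplying $\E R_n\cG$ and the constant $4/3$ on $G\gamma/n$. For the second claim I must replace the full Rademacher complexity $\E R_n\cG$ by the empirical one $\E_\eta R_n\cG$. Because $\E R_n\cG = \E[\E_\eta R_n\cG]$ and the map $(\bz_1,\dots,\bz_n)\mapsto \E_\eta R_n\cG$ has bounded differences (altering one $\bz_i$ changes it by $O(G/n)$), a second concentration step yields $\E R_n\cG \le \E_\eta R_n\cG + (\text{deviation})$ with probability $1-e^{-\gamma}$; a union bound over the two concentration events produces the stated probability $1-2e^{-\gamma}$, the doubled leading constant $6$, and the enlarged remainder constant $23/3$. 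Finally, the assertion for $\sup_g\{\En g - \E g\}$ follows verbatim by applying the whole argument to the class $-\cG = \{-g : g \in \cG\}$, which has the same envelope $G$, the same variance bound $V$, and the same Rademacher complexity.

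The main obstacle is bookkeeping rather than conceptual: tracking the numerical constants through the AM--GM decoupling of the self-referential $\E Z$ term inside the square root, and through the second concentration step used to pass from $\E R_n\cG$ to $\E_\eta R_n\cG$, so that the constants $3$, $4/3$, $6$, and $23/3$ emerge exactly as stated. In particular, selecting the AM--GM weighting that keeps the Rademacher coefficient pinned at $3$ (and not larger) while leaving a clean $\sqrt{2V\gamma/n}$ leading term, and verifying the precise bounded-difference constant for $\E_\eta R_n\cG$, are the places where care is required.
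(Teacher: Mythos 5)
Your outline follows the original argument of \cite{bartlett2005local} --- note that the paper itself imports this lemma as Theorem 2.1 there and gives no proof, so that source is the right benchmark. For the first inequality your route is exactly theirs: Bousquet's one-sided refinement of Talagrand's inequality with variance proxy of the form $V + c\,G\,\E Z$, AM--GM to decouple the $\E Z$ appearing inside the square root, then the in-expectation symmetrization $\E Z \le 2\,\E R_n \cG$. Up to bookkeeping this step is sound, though your constants do not quite cohere: the weighting that pins the Rademacher coefficient at $3 = \tfrac32 \cdot 2$ forces an additive remainder larger than $\tfrac{4}{3}G\gamma/n$ (the stated pair of constants corresponds to two different choices of the AM--GM weight, a looseness already present in the paper's transcription of the cited theorem), and the envelope entering Bousquet's inequality is that of the \emph{centered} functions $g - \E g$, bounded by $2G$ rather than $G$.

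The genuine gap is in your passage from $\E R_n\cG$ to $\E_\eta R_n \cG$: bounded differences is the wrong tool there. Changing one $\bz_i$ moves $\E_\eta R_n\cG$ by $O(G/n)$, so McDiarmid's inequality yields a deviation of order $G\sqrt{\gamma/n}$ --- an \emph{additive} $n^{-1/2}$-scale term carrying the envelope $G$ rather than the variance $V$. This term cannot be absorbed anywhere in the stated bound: not into $\sqrt{2V\gamma/n}$, since in the paper's localization $V \asymp C_\ell^2 r_0^2$ shrinks with the radius while $G$ stays fixed, and not into the $G\gamma/n$ remainder, since $\sqrt{\gamma/n} \geq \gamma/n$ for $\gamma \leq n$. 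Your version of the second inequality would therefore be strictly weaker than the one stated, and the weakening is fatal downstream: Steps II--IV of the localization require every envelope-scale term to be $O(\gamma/n)$ in order to reach the fast $\log\log n/n$ second term of Theorems \ref{thm:mlp rate} and \ref{thm:generic rates} (the paper explicitly advertises avoiding exactly this $n^{-1/2}$ leakage). What \cite{bartlett2005local} actually use is that the conditional Rademacher average $n\,\E_\eta R_n \cG$ is a \emph{self-bounding} function in the sense of Boucheron, Lugosi, and Massart, whose lower tail obeys a Bernstein-type inequality with variance proxy proportional to $G$ times its own mean; the resulting deviation of order $\sqrt{\gamma\, G\,\E R_n\cG/n}$ can then be AM--GM'd \emph{multiplicatively} into a factor $\tfrac{1}{1-\alpha}$ on $\E_\eta R_n\cG$ plus an $O(G\gamma/(\alpha(1-\alpha)n))$ additive term. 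That is precisely where the inflation from $3$ to $6$ and the enlarged constant $\tfrac{23}{3}$ come from, with no $\sqrt{\gamma/n}$ term surviving. Your final reduction of $\sup_{g\in\cG}\{\En g - \E g\}$ to the reflected class $-\cG$, which has the same envelope, variance bound, and Rademacher complexity, is fine.
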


When bounding the complexity of $\cF_{\rm DNN}$, we use the following result. \cite{Bartlett-etal2017_COLT} also verify these bounds for the VC-dimension.
\begin{lemma}[Theorem 6 in \cite{Bartlett-etal2017_COLT}, ReLU case]
	\label{lem:VC}
	Consider a ReLU network architecture $\cF = \cF_{\rm DNN}(W, L, U)$, then the pseudo-dimension is sandwiched as
	\[
		c \cdot WL \log (W/L) \leq {\rm Pdim}(\cF) \leq C \cdot WL \log W,
	\]
	with some universal constants $c, C>0$. 
\end{lemma}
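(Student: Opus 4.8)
The plan is to establish the two inequalities separately, since the techniques differ sharply: the upper bound is the substantive part, while the lower bound comes from an explicit shattering construction. Throughout, I would use that ${\rm Pdim}(\cF)$ equals the VC dimension of the subgraph class $\{(\bx, t) \mapsto \sgn(f_\theta(\bx) - t) : \theta \in \R^W\}$, so it suffices to control the growth function of these threshold functions. Concretely, for $m$ pairs $(\bx_i, t_i)$, I would bound the number of distinct sign vectors $(\sgn(f_\theta(\bx_i) - t_i))_{i=1}^m$ attainable as $\theta$ ranges over $\R^W$, and then solve for the largest $m$ for which this count can reach $2^m$.

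For the upper bound, the key structural observation is that, for a fixed input, the map $\theta \mapsto f_\theta(\bx)$ is piecewise polynomial in the parameters. I would partition $\R^W$ according to the activation pattern of every ReLU unit (the sign of each unit's pre-activation at each data point). On any region where all these signs are fixed, each ReLU acts as the identity or as zero, so by induction on depth the pre-activation of any unit in layer $\ell$ is a polynomial in $\theta$ of degree at most $\ell$, whence $f_\theta(\bx_i) - t_i$ has degree at most $L$. I would then count regions layer-by-layer in topological order: passing from layer $\ell$ to $\ell+1$ refines each current region by the signs of $m \cdot U_{\ell+1}$ new degree-$\ell$ polynomials, and the number of such refinements is governed by a real-algebraic bound on the number of sign patterns of polynomials (Warren's theorem, or the Milnor--Thom/Pollack--Roy bounds as packaged in \cite{Bartlett-etal2017_COLT}). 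Multiplying these per-layer factors and bounding the sign patterns of the degree-$L$ output within each region, the total growth function is at most roughly $(e\, m\, L / W)^{\,O(W L)}$ up to logarithmic corrections in the widths; taking logarithms and imposing $2^m$ against this count yields $m \leq C \cdot W L \log W$.

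The lower bound is obtained by exhibiting an architecture with $W$ parameters, depth $L$, and $U$ units that shatters a set of size $c \cdot W L \log(W/L)$. The standard device is a bit-extraction gadget: a constant-size ReLU block that, fed a binary-encoded real number and a selector, outputs a chosen bit, implemented via the piecewise-linear sawtooth maps that ReLU networks realize efficiently. Chaining such gadgets across the $L$ layers lets a modest number of parameters address a number of bits that scales like $W L \log(W/L)$, and placing one shattered point per extractable bit gives the claim; I would cite the construction of \cite{Bartlett-etal2017_COLT} here rather than rebuild it.

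The main obstacle is the upper-bound counting. The delicate point is tracking the polynomial degree growth (degree $\leq \ell$ at layer $\ell$) simultaneously with the multiplicative compounding of activation regions across layers, and then invoking the correct arrangement bound so that the depth contributes the linear factor $L$ while the widths contribute only the logarithmic factor $\log W$. Getting the exponents to land on $W L \log W$, rather than a cruder bound such as $W U$, requires organizing the layerwise refinement around the \emph{cumulative} parameter counts, which is precisely the accounting carried out in \cite{Bartlett-etal2017_COLT}.
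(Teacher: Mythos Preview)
Your sketch is correct and faithfully outlines the argument of \cite{Bartlett-etal2017_COLT}: the upper bound via a layerwise partition of parameter space into regions on which the network is polynomial of degree at most $L$, followed by Warren-type sign-pattern counting, and the lower bound via a bit-extraction construction. There is nothing wrong with it.

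However, note that the paper does not actually prove this lemma. It is listed among the supporting lemmas and simply cited verbatim as Theorem~6 of \cite{Bartlett-etal2017_COLT}; no argument is given or expected. So your proposal is not so much a different route as an unnecessary one for the purposes of this paper: the lemma is invoked as a black box, and the authors' ``proof'' is the citation itself. Your write-up would be appropriate as an exposition of the cited result, but for matching the paper you need only state the lemma and point to the source.
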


For multi-layer perceptrons we use the following approximation result, Theorem 1 of \cite{Yarotsky2017_NN}.
\begin{lemma}%[Theorem 1 in \cite{Yarotsky2017_NN}]
	\label{lem:approx-net}
	There exists a network class $\cF_{\rm DNN}$, with ReLU activation, such that for any $\epsilon>0$:
	\begin{enumerate}
		\item $\cF_{\rm DNN}$ approximates the $W^{\beta, \infty}([-1,1]^d)$ in the sense for any $f_* \in W^{\beta, \infty}([-1,1]^d)$, there exists a $f_n(\epsilon) \deq f_n \in \cF_{\rm DNN}$ such that
		\[
			\| f_n - f_* \|_\infty \leq \epsilon,
		\]
		\item and $\cF_{\rm DNN}$ has $L(\epsilon) \leq C \cdot ( \log (1/\epsilon) + 1)$ and $W(\epsilon), U(\epsilon) \leq C \cdot \epsilon^{-\frac{d}{\beta}} ( \log (1/\epsilon) + 1)$.
	\end{enumerate}
	Here $C$ only depends on $d$ and $\beta$.
\end{lemma}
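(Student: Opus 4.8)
The plan is to follow the constructive strategy of \citet{Yarotsky2017_NN}: localize $f_*$ via a partition of unity, replace $f_*$ on each patch by a low-order Taylor polynomial, and then implement the resulting sum of local polynomials by an explicit ReLU network whose size I can count. First I would fix an integer $N$ (to be calibrated as $N \asymp \epsilon^{-1/\beta}$) and introduce the grid $\{m/N : m \in \{0,\dots,N\}^d\}$ on $[-1,1]^d$ together with the tensor-product hat functions $\phi_m(\bx) = \prod_{k=1}^d \psi(N x_k - m_k)$, where $\psi$ is the univariate triangle (tent) function. These satisfy $\sum_m \phi_m \equiv 1$, each $\phi_m$ is supported in a cube of side $2/N$ around $m/N$, and crucially $\psi$ is itself a fixed combination of three ReLU units, so each one-dimensional factor is representable with no error. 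Writing $P_m$ for the degree-$(\beta-1)$ Taylor expansion of $f_*$ at $m/N$, the smoothness in Assumption \ref{asmpt:sobolev} gives, on $\supp \phi_m$, the pointwise bound $|f_*(\bx) - P_m(\bx)| \lesssim N^{-\beta}$, since the $\beta$-th weak derivatives are bounded by $1$. Summing against the partition of unity yields $\| f_* - \sum_m \phi_m P_m \|_\infty \lesssim N^{-\beta}$, the approximation-theoretic heart of the argument.

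The second step is to realize $\sum_m \phi_m(\bx) P_m(\bx)$ by a ReLU network. The key gadget, also from \citet{Yarotsky2017_NN}, is that $x \mapsto x^2$ on $[0,1]$ admits a ReLU network of depth and size $O(\log(1/\delta))$ approximating it to accuracy $\delta$, obtained by iterating the ``sawtooth'' function $\lceil \log(1/\delta)\rceil$ times; polarization then gives an approximate multiplication $\mathrm{mult}(x,y) \approx xy$ of the same depth. I would build each monomial appearing in $\phi_m P_m$ — a product of at most $\beta - 1 + d$ coordinate and hat factors — as a balanced binary tree of $O(\log(d+\beta)) = O(1)$ levels of these approximate multiplications, each level a gadget of depth $O(\log(1/\delta))$, so the whole product has depth $O(\log(1/\delta))$. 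Choosing the per-multiplication accuracy $\delta$ polynomially small in $\epsilon/N^d$ keeps the accumulated error across all $O(N^d)$ patches below $N^{-\beta} \asymp \epsilon$; since then $\log(1/\delta) = O(\log(1/\epsilon))$, each such term costs depth and units $O(\log(1/\epsilon)+1)$.

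The final step is bookkeeping. The depth of the whole network is the depth of one product gadget plus a constant, hence $L(\epsilon) \lesssim \log(1/\epsilon)+1$. For the width and unit count, there are $(N+1)^d \asymp \epsilon^{-d/\beta}$ patches, each contributing $O(\log(1/\epsilon)+1)$ units for its $O(1)$ monomials (treating $d,\beta$ as fixed), and these run in parallel before a single summation layer; this gives $U(\epsilon), W(\epsilon) \lesssim \epsilon^{-d/\beta}(\log(1/\epsilon)+1)$, matching the stated bounds once $N = \lceil \epsilon^{-1/\beta}\rceil$ so that $N^{-\beta} \lesssim \epsilon$. The result is one fixed architecture $\cF_{\rm DNN}$ (depending only on $\epsilon,\beta,d$) into which every $f_* \in \mathcal{W}^{\beta,\infty}([-1,1]^d)$ embeds by selecting the coefficients of the $P_m$, giving the uniform existence claimed in the lemma.

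I expect the main obstacle to be the error propagation in the second step: the approximate multiplications are accurate only on a bounded input range, and their errors compound multiplicatively through each product tree and additively across the $N^d$ patches, so one must track the magnitudes of partial products and calibrate $\delta$ so the total error stays $O(\epsilon)$ \emph{without} inflating the depth beyond $O(\log(1/\epsilon))$. A secondary technical point is confirming that the assembled object is a genuine feedforward graph with the claimed $U$ and $W$ — that the hat functions, the sawtooth squarer, and the summation wire together without a hidden blow-up in the number of edges — since this is exactly the form of architecture whose complexity feeds into Lemma \ref{lem:VC} and whose embedding into $\cF_{\rm MLP}$ is handled by Lemma \ref{lem:embedding} downstream.
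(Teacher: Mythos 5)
Your proposal is correct and is essentially the paper's own route: the paper does not prove this lemma itself but imports it verbatim as Theorem 1 of \citet{Yarotsky2017_NN}, and your construction --- tent-function partition of unity, local Taylor polynomials with $O(N^{-\beta})$ error, sawtooth-based approximate squaring/multiplication gadgets of depth $O(\log(1/\delta))$, and the size bookkeeping with $N \asymp \epsilon^{-1/\beta}$ --- is precisely Yarotsky's argument. Your cruder error budget ($\delta$ polynomially small in $\epsilon/N^d$, rather than Yarotsky's device of making the approximate multiplication vanish exactly when a factor is zero so that only $2^d$ patches contribute error at any point) still yields the claimed $O(\log(1/\epsilon))$ depth since $\log(N^d/\epsilon) \asymp \log(1/\epsilon)$, so the flagged obstacle is real but resolvable exactly as you anticipate.
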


For completeness, we verify the requirements on the loss functions, Equation \eqref{eqn:loss}, for several examples. We first treat least squares and logistic losses, in slightly more detail, as these are used in our subsequent inference results and empirical application.
\begin{lemma}
	\label{lem:ls and logit}
	Both the least squares \eqref{eqn:ols} and logistic \eqref{eqn:logit} loss functions obey the requirements of Equation \eqref{eqn:loss}. For least squares, $c_1 = c_2 = 1/2$ and $C_\ell = M$. For logistic regression, $c_1 = \left(2(\exp(M)+\exp(-M)+2)\right)^{-1}$, $c_2 = 1/8$ and $C_\ell=1$. 
\end{lemma}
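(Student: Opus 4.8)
The plan is to verify the two requirements of \eqref{eqn:loss}—Lipschitz continuity in $f$ and the two-sided curvature bound—separately for each loss, in both cases exploiting the boundedness supplied by Assumption \ref{asmpt:dgp}. For the curvature condition the central device is to condition on $\bX$ and reduce everything to a one-dimensional analysis of the per-observation loss as a function of the scalar $f(\bx)$, using that $f_*$ is the pointwise minimizer of the conditional risk.

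For least squares \eqref{eqn:ols}, I would obtain the Lipschitz bound by factoring the difference of squares, $\ell(f,\bz)-\ell(g,\bz)=\tfrac12\bigl(f(\bx)-g(\bx)\bigr)\bigl(2y-f(\bx)-g(\bx)\bigr)$, and bounding $|2y-f(\bx)-g(\bx)|$ by a multiple of $M$ via $\|f_*\|_\infty\le M$ and $\mathcal{Y}\subset[-M,M]$. The curvature is in fact an exact identity: writing $Y-f=(Y-f_*)+(f_*-f)$ and expanding the square, the cross term is $\E\bigl[(Y-f_*(\bX))(f_*(\bX)-f(\bX))\bigr]$, which vanishes after conditioning on $\bX$ because $\E[Y\mid\bX]=f_*(\bX)$. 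Hence $\E[\ell(f,\bZ)]-\E[\ell(f_*,\bZ)]=\tfrac12\E[(f-f_*)^2]$, so both inequalities hold with equality and $c_1=c_2=\tfrac12$.

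For logistic regression \eqref{eqn:logit}, the Lipschitz constant comes from differentiating in $f$: $\partial_f\ell=\Lambda(f(\bx))-y$ where $\Lambda(u)=e^u/(1+e^u)$, and since $y\in\{0,1\}$ and $\Lambda\in(0,1)$ this derivative is bounded in modulus by $1$, giving $C_\ell=1$ by the mean value theorem. For the curvature I would condition on $\bX$ and study $\phi(u):=-p(\bx)u+\log(1+e^u)$ with $p(\bx)=\E[Y\mid\bX=\bx]$, whose minimizer is exactly $f_*(\bx)$ (since $\Lambda(f_*)=p$ and thus $\phi'(f_*)=0$). A second-order Taylor expansion with exact remainder then gives the conditional excess risk as $\tfrac12\phi''(\xi)(f-f_*)^2$ for some $\xi$ between $f(\bx)$ and $f_*(\bx)$, where $\phi''(\xi)=\Lambda(\xi)(1-\Lambda(\xi))$. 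The upper bound uses $\Lambda(1-\Lambda)\le\tfrac14$ globally, yielding $c_2=\tfrac18$; the lower bound requires a uniform positive floor on $\phi''(\xi)$.

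The hard part is precisely this lower bound: the logistic loss is not uniformly strongly convex, since $\Lambda(1-\Lambda)\to0$ as $|u|\to\infty$, so no positive $c_1$ exists without a restriction on the range. This is where boundedness becomes indispensable—because $\xi$ lies between $f(\bx)$ and $f_*(\bx)$, both controlled in sup-norm, $\xi$ stays in a bounded interval, and I would minimize $\Lambda(1-\Lambda)$ over that interval (it is symmetric and decreasing in $|u|$, so the minimum sits at the endpoint) using the identity $\Lambda(u)(1-\Lambda(u))=(e^u+e^{-u}+2)^{-1}$. With the relevant functions bounded by $M$ this gives $\phi''(\xi)\ge(e^M+e^{-M}+2)^{-1}$ and hence $c_1=(2(e^M+e^{-M}+2))^{-1}$. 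Assembling the conditional bounds via iterated expectations then produces the stated two-sided curvature for the unconditional risk, completing the verification.
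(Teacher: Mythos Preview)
Your proposal is correct and follows essentially the same route as the paper: both use the exact Pythagorean identity (via conditioning on $\bX$) for least squares, and for logistic both condition on $\bX$, perform a second-order Taylor expansion of the conditional risk, and bound the second derivative $\Lambda(1-\Lambda)=(e^u+e^{-u}+2)^{-1}$ above by $1/4$ and below using the sup-norm bound on the intermediate point. The paper is merely more terse, dismissing the Lipschitz verifications as trivial where you spell them out.
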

\begin{proof}
	The Lipschitz conditions are trivial. For least squares, using iterated expectations
	\begin{align*}
		2 \E \ell(f, \bZ) - 2 \E \ell (f_*, \bZ) & =  \E \left[ - 2 Y f + f^2 + 2 Yf_* - f_*^2 \right]    \\
		& =  \E \left[ - 2 f_* f(\bx) + f^2 + 2 (f_*)^2 - f_*^2 \right]    \\
		&=  \E \left[ (f - f_*)^2\right].
	\end{align*}
	For logistic regression,
	\begin{align*}
		\E [\ell(f, \bZ)] - \E [\ell (f_*, \bZ)] &= \E \left[ - \frac{\exp(f_*)}{1+\exp(f_*)} (f - f_*) + \log \left( \frac{1+\exp(f)}{1+\exp(f_*)} \right) \right].
	\end{align*}
	Define $h_a(b) = - \frac{\exp(a)}{1+\exp(a)} (b-a) + \log \left( \frac{1+\exp(b)}{1+\exp(a)} \right)$, then 
	\begin{align*}
		h_a(b) = h_a(a) + h_a'(a)(b-a) + \frac{1}{2} h_a''\left(\xi a+ (1-\xi) b \right)  (b-a)^2
	\end{align*}
	and $h_a''(b) = \frac{1}{\exp(b)+\exp(-b)+2} \leq \frac{1}{4}$. The lower bound holds as $|\xi f_* + (1-\xi) f| \leq M$. 
\end{proof}

Beyond least squares and logistic regression, we give three further examples, discussed in the general language of generalized linear models. Note that in the final example we move beyond a simple scalar outcome.
\begin{lemma}
	\label{lem:glms}
%Consider a general convex function $c(\cdot):\mathbb{R}^K \rightarrow \mathbb{R}$, where $K \in \mathbb{N}$ is a fixed integer. Consider the loss function $\ell(f, \bz) = -\langle y, f(\bx)\rangle + c(f(\bx))$. Then in each of the scenarios in GLMs, the curvature and the Lipschitz conditions in \eqref{eqn:loss} hold:
For a convex function $g(\cdot):\mathbb{R} \rightarrow \mathbb{R}$, consider the generalized linear loss function $\ell(f, \bz) = -\langle y, f(\bx)\rangle + g(f(\bx))$. The curvature and the Lipschitz conditions in \eqref{eqn:loss} will hold given specific $g(\cdot)$. In each case, the loss function corresponds to the negative log likelihood function.
\begin{enumerate}
	\item Poisson: $g(t)=\exp(t)$, with $f_*(\bx) = \log \E[y|\bX = \bx]$. 
	\item Gamma: $g(t)=-\log t$, with $f_*(\bx) = -(\E[y|\bX = \bx])^{-1}$. 
	\item Multinomial Logistic, $K+1$ classes: $g(t) = \log (1+\sum_{k\in K} \exp(t^{[k]}))$, with $$\exp(f_*^{[k]}(\bx))/(1+\sum_{k'\in K} \exp(f_*^{[k']}(\bx))) = \E[y^{[k]}|\bX = \bx].$$ Here $v^{[k]}$ denotes the $k$-th coordinate of a vector $\bv$.
\end{enumerate}
\end{lemma}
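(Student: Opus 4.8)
The plan is to treat all three examples through a single unified device: express the excess population risk as an expected Bregman divergence generated by $g$, and then control that divergence by the second derivative (or Hessian) of $g$ on the compact range dictated by Assumption \ref{asmpt:dgp} together with the constraint $\|f\|_\infty \le 2M$ in \eqref{eqn:erm}.

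First I would record the population first-order optimality condition characterizing $f_*$. Differentiating $\E[\ell(f,\bZ)]$ and using iterated expectations, the minimizer obeys $g'(f_*(\bx)) = \E[y \mid \bX = \bx]$ (componentwise in the multinomial case). This is exactly the GLM link relationship, and substituting $g'(t)=e^t$, $g'(t)=-1/t$, and the softmax gradient recovers the three stated forms of $f_*$. This step is routine and simultaneously confirms that the loss is the negative log-likelihood of the corresponding exponential family.

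Next, writing $D_g(b,a) := g(b) - g(a) - \langle g'(a), b-a\rangle$ for the Bregman divergence, iterated expectations and the optimality condition give
\[
	\E[\ell(f,\bZ)] - \E[\ell(f_*,\bZ)] = \E\big[ D_g(f(\bX), f_*(\bX)) \big],
\]
because the cross term $\E[\langle g'(f_*) - y,\, f - f_*\rangle]$ vanishes once we condition on $\bX$. A second-order Taylor expansion then yields $D_g(b,a) = \tfrac12 (b-a)^\top \nabla^2 g(\xi)(b-a)$ for some $\xi$ on the segment joining $a$ and $b$, exactly mirroring the $h_a''$ computation used for logistic regression in Lemma \ref{lem:ls and logit}. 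Since $\|f\|_\infty, \|f_*\|_\infty \le 2M$, the intermediate point $\xi$ stays in a fixed compact set, so the curvature condition \eqref{eqn:loss} holds with $c_1 = \tfrac12 \inf \lambda_{\min}(\nabla^2 g)$ and $c_2 = \tfrac12 \sup \lambda_{\max}(\nabla^2 g)$ over that set. The Lipschitz bound is then immediate from $|y|\le M$ and $\sup_{|t|\le 2M}\|g'(t)\|$, giving $C_\ell = M + \sup_{|t|\le 2M}\|g'(t)\|$.

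The main obstacle is verifying that $\nabla^2 g$ is bounded above \emph{and} bounded away from zero on the relevant range, which is where the three cases genuinely differ. For Poisson, $g''(t)=e^t \in [e^{-2M}, e^{2M}]$ is trivially two-sided bounded. For Gamma, $g''(t)=1/t^2$ is well behaved only away from the boundary of the domain; here I would invoke the implication of Assumption \ref{asmpt:dgp} that $\E[y\mid\bX]$ is bounded and bounded away from zero, so that $f_*$, and (after restricting the network output to the admissible sign-definite region) the relevant argument $t$, is bounded away from $0$, pinning $g''$ between two positive constants. The delicate case is the multinomial one, where $\nabla^2 g(t)$ is the multinomial covariance $\mathrm{diag}(p) - pp^\top$ with $p$ the softmax of $t$; I would show that on $\|t\|_\infty \le 2M$ each $p_k$ is bounded away from $0$ and $1$, and that the baseline-category parametrization renders this matrix strictly positive definite with smallest eigenvalue bounded below uniformly on the compact set. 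Establishing that uniform spectral lower bound is the crux; the upper bound and the Lipschitz constant follow from the same compactness.
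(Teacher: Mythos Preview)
Your approach is essentially identical to the paper's: both derive the first-order condition $\nabla g(f_*) = \E[Y\mid\bX]$, write the excess risk as the expected Bregman remainder (the paper does this without naming it), and reduce the curvature and Lipschitz conditions to two-sided bounds on $\nabla^2 g$ and $\nabla g$ over the bounded range of $f$. The only substantive piece you leave open is the explicit spectral lower bound for the multinomial Hessian $\mathrm{diag}(p)-pp^\top$, which the paper supplies as $\lambda_{\min}\geq (1+K\exp(M))^{-2}$.
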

\begin{proof}
	Denote $\nabla g$, ${\rm Hessian}[g]$ to be the gradient and Hessian of the convex function $g$.
	By the convexity of $g$, the optimal $f_*$ satisfies $\E[\partial \ell(f_*, \bZ) / \partial f |\bX = \bx] = 0$, which implies
	\begin{align*}
		\nabla g( f_* ) = \E[ Y | \bX = \bx].
	\end{align*} 
	If $2 c_0 \preceq {\rm Hessian}[g(f)] \preceq 2c_1$ for all $f$ of interest, then the curvature condition in \eqref{eqn:loss} holds, because
	\begin{align*}
		\E [\ell(f, \bZ)] - \E [\ell (f_*, \bZ)] & = \E[ - \langle \nabla g(f_*), f - f_* \rangle + g(f) - g(f_*) ] \\
		& = \frac{1}{2} \E\langle f-f_* , {\rm Hessian}[g(\tilde f)] f-f_* \rangle \\
		& \geq  c_0 \E \| f-f_* \|^2,
	\end{align*}
	and the parallel argument for $\leq c_1 \E \| f-f_* \|^2$. The Lipschitz condition is equivalent to $\| \nabla g(f) \| \leq C'_\ell$ for all $f$ of interest, with bounded $Y$. 
	
	For our three examples in particular, we have the following.
	\begin{enumerate}
		\item For Poisson regression:
		\begin{align*}
			\| \nabla c(f) \| = | \exp(f)| \leq \exp(M), \qquad \quad {\rm Hessian}[c(f)] = \exp(f) \in [\exp(-M), \exp(M)].
		\end{align*}
		
		\item For Gamma regression, bounding $-Y$ above and below is equivalent to $1/M \leq \| f \| \leq M$ and therefore:
		\begin{align*}
			\| \nabla c(f) \| = | 1/f| \leq M,~~{\rm Hessian}[c(f)] = 1/f^2 \in [1/M^2, M^2].  
		\end{align*} 
		
		\item For multinomial logistic regression, with general $K$, we know
	\begin{align*}
		\| \nabla c(f) \| &\leq 1 \\
		{\rm Hessian}[c(f)] &= {\rm diag}\{ z \} - z z^\top ~~\text{where}~ z^{[k]}:=\frac{\exp(f^{[k]})}{1+\sum_{k'} f^{[k']}}.
	\end{align*}
	One can easily verify that the eigenvalues are bounded in the following sense, for bounded $f$,
	\begin{align*}
		\frac{1}{(1+K\exp(M))^2} \leq \lambda({\rm Hessian}[c(f)]) \leq \frac{\exp(M)}{1+(K-1)\exp(-M)+\exp(M)}.
	\end{align*}

	\end{enumerate}
	
	This completes the proof.
\end{proof}

Our last result is to verify condition (c) of Theorem \ref{thm:ate}. We do so using our localization, which may be of future interest in second-step inference with machine learning methods.
\begin{lemma}
	\label{lem:logit}
	Let the conditions of Theorem \ref{thm:ate} hold. Then 
	\[
		\E_n\left[ (\hat{\mu}_t(\bx_i) - \mu_t(\bx_i)) \left(1 - \frac{\One\{t_i=t\} }{ \P[T=t\vert\bX = \bx_i]}\right)\right]   = o_P \left( n^{-\frac{\beta}{\beta+d}} \log^8 n + \frac{\log \log n}{n} \right)  = o_P\left(n^{-1/2}\right).
	\]
\end{lemma}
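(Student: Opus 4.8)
The plan is to exploit that the weight multiplying the regression error has conditional mean zero, and then to control the resulting mean-zero empirical process with the very same variance-dependent localization used for Theorems \ref{thm:mlp rate} and \ref{thm:generic rates}. Write $w_i := 1 - \One\{t_i=t\}/\P[T=t\mid \bX=\bx_i]$ and $\delta := \hat{\mu}_t - \mu_t$, so the object of interest is $T_n := \E_n[\delta(\bx_i) w_i]$. The first observation is that, by Assumption \ref{asmpt:ignorability}, $\E[w_i \mid \bX = \bx_i] = 1 - \P[T=t\mid\bX=\bx_i]/\P[T=t\mid\bX=\bx_i] = 0$, while overlap gives $|w_i| \leq 1 + 1/\bar{p} =: \bar{w}$. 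Consequently, for any fixed (data-independent) $g$ we have $\E[g(\bX) w] = \E[g(\bX)\,\E[w\mid\bX]] = 0$, so the population counterpart of $T_n$ vanishes exactly and only the empirical fluctuation must be bounded.

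Next I would localize. Let $r_n^2 := n^{-\beta/(\beta+d)}\log^8 n + \log\log n/n$ be the squared $L_2$-rate from Theorem \ref{thm:mlp rate}, and let $A_n$ be the event $\{\|\hat{\mu}_t - \mu_t\|_{L_2(\bx)} \leq C r_n\}$, which holds with probability tending to one (equivalence, under overlap, of the conditional-on-$t$ and marginal $L_2(\bX)$ norms lets one transfer the subsample rate to the full-sample norm). On $A_n$, since $\hat{\mu}_t \in \cF_{\rm MLP}$ with $\|\hat{\mu}_t\|_\infty \leq 2M$, the error $\delta$ lies in the localized class $\mathcal{D} := \{f - \mu_t : f \in \cF_{\rm MLP},\ \|f\|_\infty \leq 2M,\ \|f-\mu_t\|_{L_2(\bx)} \leq C r_n\}$, and hence $|T_n| \leq \sup_{\delta \in \mathcal{D}} |(\E_n - \E)[\delta w]|$, because the population term is zero for each fixed $\delta \in \mathcal{D}$.

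I would then bound this supremum exactly as in Steps I--II of the main proof. Apply the variance-dependent symmetrization of Lemma \ref{lem:symmetrization} to $\{\delta w : \delta \in \mathcal{D}\}$, whose members are bounded by $3M\bar{w}$ and, crucially, have variance $\V[\delta w] \leq \bar{w}^2 \E[\delta^2] \leq \bar{w}^2 C^2 r_n^2$. Strip the bounded multiplier $w_i$ via the contraction/comparison inequality (Lemma \ref{lem:contraction}, in its coordinatewise form, using $w_i\cdot 0 = 0$), reducing matters to the empirical Rademacher complexity of $\mathcal{D}$. That complexity is handled by Step I, Dudley's chaining (Lemma \ref{lem:dudley-chaining}), and the pseudo-dimension bounds (Lemmas \ref{lem:ps-dim} and \ref{lem:VC}), giving $\E_\eta R_n \mathcal{D} \lesssim r_n \sqrt{(WL\log W/n)\log n}$. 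For the architecture of Theorem \ref{thm:mlp rate} one has $(WL\log W/n)\log n \asymp r_n^2$, so the Rademacher term is of order $r_n^2$, while the variance term $\bar{w} r_n\sqrt{\gamma/n}$ and the tail term $\gamma/n$ are of the same or smaller order for the usual choice $\gamma \asymp \log\log n$. Collecting terms yields $|T_n| = O_P(r_n^2) = O_P(n^{-\beta/(\beta+d)}\log^8 n + \log\log n/n)$, and since $\beta_p \wedge \beta_\mu > d$ forces $\beta/(\beta+d) > 1/2$, this is $o_P(n^{-1/2})$, as claimed.

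The main obstacle is precisely the data-dependence of $\hat{\mu}_t$: fit on the same sample, it defeats a naive conditional-mean-zero argument, forcing uniform control over a high-complexity network class. The essential subtlety --- and the reason localization is needed rather than a generic uniform deviation bound --- is that only the \emph{variance-weighted} symmetrization supplies the extra factor $r_n$ (from $\sqrt{\V}\propto r_n$ together with the localized Rademacher complexity $\propto r_n$), producing the square rate $r_n^2$. A scale-insensitive, non-localized bound would give only order $r_n \asymp n^{-\beta/(2(\beta+d))}$, which is never $o(n^{-1/2})$; the variance dependence is therefore exactly what lets condition (c) hold without sample splitting.
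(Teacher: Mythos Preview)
Your proposal is correct and follows essentially the same route as the paper's proof: both observe that the weight $w_i$ has conditional mean zero (so the population counterpart vanishes), localize to the $L_2$-ball of radius $\bar r \asymp r_n$ already established by Theorem \ref{thm:mlp rate}, apply the variance-dependent Talagrand/symmetrization bound of Lemma \ref{lem:symmetrization}, and peel off the bounded multiplier via contraction to reduce to the localized Rademacher complexity of $\{f-f_*\}$. The paper's version is slightly more economical in one place: rather than redoing Dudley's chaining and the pseudo-dimension calculation, it simply invokes $\bar r > r_*$ (the critical-radius property from Step III) to get $\E R_n\{f-f_*:\|f-f_*\|_{L_2}\le \bar r\}\le \bar r^2/(18M)$ directly, which immediately gives the Rademacher term of order $\bar r^2$; your route via chaining reproduces the same bound with a few more lines.
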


\begin{proof}%[Proof of Corollary~\ref{crl:logit}]
	Without loss of generality we can take $\bar{p}<1/2$. The only estimated function here is $\mu_t(\bx)$, which plays the role of $f_*$ here. For function(als) $L(\cdot)$ of the form
	\begin{align*}
		L(f) \deq (f(\bx_i) - f_*(\bx_i)) \left(1 - \frac{\One\{t_i=t\} }{ \P[T=t\vert\bX = \bx_i]}\right),
	\end{align*}
	it is true that 
	\begin{align*}
		\E [L(f)] = \E \left[  \left( f(\bX) - f_*(\bX) \right) \left( 1 - \frac{\E [ \One\{t_i=t\} \vert \bx_i] }{ \P[T=t\vert\bX = \bx_i]} \right) \right] = 0
	\end{align*}
	and
	\begin{align*}
		\V[L(f)] &\leq \left(1/\bar{p} - 1 \right)^2 \E \left[ \left( f(\bX) - f_*(\bX) \right)^2 \right] \leq \left(1/\bar{p} - 1 \right)^2 \bar{r}^2 \\
		|L(f)| &\leq \left(1/\bar{p} - 1 \right) 2M.
	\end{align*}
	
	For $\bar{r}$ defined in \eqref{eqn:r_bar},
	\begin{align*}
		18M \E R_n \{ f - f_*: f \in \cF, \|f - f_* \|_{L_2(\bx)} \leq \bar{r} \} &\leq \bar{r}^2 \\
		\E R_n \{ L(f): f \in \cF, \|f - f_* \|_{L_2(\bx)} \leq \bar{r} \}& \leq 2 \left(1/\bar{p} - 1 \right) \E R_n \{ f - f_*: f \in \cF, \|f - f_* \|_{L_2(\bx)} \leq \bar{r} \}
	\end{align*}
	where the first line is due to $\bar{r} > r_*$, and second line uses Lemma \ref{lem:contraction}.

	Then by the localization analysis and Lemma \ref{lem:symmetrization}, for all $f \in \cF, \|f - f_* \|_{L_2(\bx)} \leq \bar{r}$, $L(f)$ obeys 
	\begin{align*}
		 \E_n [L(f)] = \E_n [L(f)] - \E [L(f)] & \leq 6 C \bar{r}^2 + \bar{r} \sqrt{\frac{4 \left(1/\bar{p} - 1 \right)^2 t}{n}} + \frac{8 \left(1/\bar{p} - 1 \right) 3M}{3} \frac{t}{n}  \leq 4C \bar{r}^2 \\
		&\leq  C \cdot \left\{ n^{-\frac{\beta}{\beta+d}} \log^8 n + \frac{\log \log n}{n} \right\},\\
		\sup_{f \in \cF, \|f - f_* \|_{L_2(\bx)} \leq \bar{r}} \E_n[L(f)] &\leq C \cdot \left\{ n^{-\frac{\beta}{\beta+d}} \log^8 n + \frac{\log \log n}{n} \right\}.
	\end{align*}
	With probability at least $1 - \exp(-n^{\frac{d}{\beta+d}} \log^8 n)$, $\hat{f}_{\rm MLP}$ lies in this set of functions, and therefore
	\[
		\E_n [ L(\hat{f}_{\rm MLP}) ] = \E_n \left[ (\widehat{f}_{n,H,L}(x) - f_*(x))\left( 1 - \frac{1(T=t)}{P(T = t|\bx = x)} \right)  \right] \leq  C \cdot \left\{ n^{-\frac{\beta}{\beta+d}} \log^8 n + \frac{\log \log n}{n} \right\},
	\]
	as claimed.		
\end{proof}

\end{appendices}

\end{document}